\newtheorem{definition}{Definition}
\newtheorem{example}{Example}
\newtheorem{theorem}{Theorem}
\newtheorem{lemma}{Lemma}
\newtheorem{corollary}{Corollary}
\newlength{\additionaltextheight}
\newlength{\additionaltextwidth}
\newcommand{\comm}[1]{{\color{gray}#1}}
\definecolor{dg}{rgb}{0.3,0.3,0.3}
\definecolor{winered}{rgb}{0.6,0,0.1}
\definecolor{lightblue}{rgb}{0.6,0.88,0.98}
\definecolor{darkgreen}{rgb}{0.0,0.5,0.2}
\definecolor{llgray}{rgb}{0.27, 0.68, 0.68}
\newtheorem{observation}{Observation}
\newcommand{\ppp}{{\cal P}}
\newcommand{\alternativeset}{\ensuremath{A}}
\newcommand{\voterset}{\ensuremath{V}}
\newcommand{\agenda}{\ensuremath{{\cal L}}}
\newcommand{\partialagenda}{\ensuremath{{\cal B}}}
\newcommand{\altsymbol}{\ensuremath{a}}
\newcommand{\votersymbol}{\ensuremath{v}}
\newcommand{\profile}{\ensuremath{\ppp}}
\newcommand{\pref}{\ensuremath{\succ}}
\newcommand{\spref}{{\ensuremath{\scriptstyle\,\succ\,}}}
\newcommand{\profiletuple}{(\alternativeset, \voterset)}
\newcommand{\alternative}{alternative\xspace}
\newcommand{\alternatives}{alternatives\xspace}
\newcommand{\voter}{voter\xspace}
\newcommand{\voters}{voters\xspace}
\newcommand{\Alternative}{Alternative\xspace}
\newcommand{\beat}{beat\xspace}
\newcommand{\agendapref}{{\ensuremath\scriptstyle\,\rhd\,}}
\newenvironment{myquote}{\list{}{\leftmargin=0.2in\rightmargin=0.2in}\item[]}{\endlist}
\newcommand{\probDef}[3]{
  \begin{myquote}
   #1\\
  \textbf{Input:} #2\\
  \textbf{Question:} #3
  \end{myquote}
}
\DeclareMathOperator{\cupdot}{\mathbin{\dot{\cup}}}
\newcommand{\probManipulation}{\textsc{Manipulation}\xspace}
\newcommand{\probWManipulation}{\textsc{Weighted Manipulation}\xspace}
\newcommand{\probAgendaControl}{\textsc{Agenda Control}\xspace}
\newcommand{\probPossibleWinner}{\textsc{Possible Winner}\xspace}
\newcommand{\probNecessaryWinner}{\textsc{Necessary Winner}\xspace}
\newcommand{\probWPossibleWinner}{\textsc{Weigh\-ted Possible Winner}\xspace}
\newcommand{\probWNecessaryWinner}{\textsc{Weigh\-ted Necessary Winner}\xspace}
\newcommand{\probVC}{\textsc{Vertex Cover}\xspace}
\newcommand{\probIS}{\textsc{Independent Set}\xspace}
\newcommand{\NP}{{\mathsf{NP}}}
\newcommand{\coNP}{{\mathsf{coNP}}}
\newcommand{\FPT}{{\mathsf{FPT}}}
\newcommand{\setseq}[1]{\ensuremath{\overrightarrow{#1}}}
\newcommand{\setrevseq}[1]{\ensuremath{\overleftarrow{#1}}}
\newcommand{\thmagendacontrolsuccessive}{
  \probAgendaControl can be solved in $O(n\cdot m^2)$ time for the successive procedure,
  where $n$ denotes the number of voters and $m$ the number of alternatives.
}
\newcommand{\thmagendacontrolamendment}{
  \probAgendaControl can be solved in $O(n\cdot m^2 + m^3)$ time for the amendment procedure,
  where $n$ denotes the number of voters and $m$ the number of alternatives.
}
\newcommand{\thmmanipulationsuccessive}{
  \probManipulation can be solved in $O(n \cdot m)$ time for the successive procedure,
  where $n$ denotes the number of voters and $m$ the number of alternatives.
}
\newcommand{\thmmanipulationamendment}{
  \probManipulation can be solved in $O(n \cdot m^2)$ time for the amendment procedure,
  where $n$ denotes the number of voters and $m$ the number of alternatives.
}
\newcommand{\corwmanipulation}{
  \probWManipulation can be solved in  $O(n \cdot m)$ time for the successive procedure and in $O(n \cdot m^2)$ time for the amendment procedure, 
  where $n$ denotes the number of voters and $m$ the number of alternatives.
}
\newcommand{\thmpossiblewinnersuccessive}{
   \probPossibleWinner with a fixed agenda is $\NP$-complete for the successive procedure.
}
\newcommand{\thmpossiblewinneramendment}{
   \probPossibleWinner with a fixed agenda is $\NP$-complete for the amendment procedure.
}
\newcommand{\thmpossiblewinnerilp}{
  Let $m$ denote the number of alternatives and $n$ the number of voters of a given \probPossibleWinner instance.
  Let $\mathrm{ilp}(\rho_1, \rho_2,  \rho_3)$ denote the running time of the feasibility problem of an integer linear program which has $\rho_1$~variables and $\rho_2$~constraints, and where the maximum of the absolute values of the coefficients and the constant terms is $\rho_3$, $\rho_1, \rho_2, \rho_3\in \mathds{N}$.
  Then,  \probPossibleWinner can be solved in $O(m! \cdot \mathrm{ilp}(m!\mathbin{\cdot}2^{m^2},\, 2^{m^2}+m, n))$ time for the successive procedure and in $O(m! \cdot m^m \cdot \mathrm{ilp}(m!\mathbin{\cdot}2^{m^2},\, 2^{m^2}+3m, 2n))$ time for the amendment procedure. 
}
\newcommand{\corpossiblewinnerfpt}{
   Let $m$ denote the number of alternatives and $n$ the number of voters of a given \probPossibleWinner instance.
   Then, \probPossibleWinner can be solved in $O(\rho^{2.5\rho+o(\rho)+2}\cdot \log{(n+2)})$ time for the successive procedure and the amendment procedure,
   where $\rho = m!\cdot 2^{m^2}$.
}
\newcommand{\thmnecessarywinnersuccessive}{
  \probNecessaryWinner can be solved in $O(n \cdot m^3)$ time for the successive procedure.
}
\newcommand{\thmnecessarywinneramendment}{
  \probNecessaryWinner is $\coNP$-complete for the amendment procedure even with a fixed agenda.
}
\newcommand{\cornecessarywinneramendmentfpt}{
  Let $m$ denote the number of alternatives and $n$ the number of voters of a given \probNecessaryWinner instance.
  Then, \probNecessaryWinner for the amendment procedure can be solved in $O(\rho^{2.5\rho+o(\rho)+2}\cdot \log{(n+2)})$ time,  where $\rho = m!\cdot 2^{m^2}$.
}
\newcommand{\thmweightedpossiblewinner}{
  \probWPossibleWinner for the successive procedure is weakly $\NP$-complete
  even for three alternatives and when the agenda~$\partialagenda$ is linear.
}
\newcommand{\thmweightednecessarywinner}{
  For the successive procedure, \probWNecessaryWinner can be solved in $O(n\cdot m^3)$ time.
  For the amendment procedure, \probWNecessaryWinner can be solved in linear time for up to three \alternatives.
}
\newcommand{\obsmanipulatorsvotethesame}{
 If there is a successful (weighted) manipulation,
 then there is also a successful one where all \voters from the coalition rank the \alternatives in
 the same way.
}
\newcommand{\obstournamentpartition}{
Every tournament can be partitioned into disjoint strongly connected
    strong subtournaments~$T_1 \coloneqq (U_1, A_1), T_2\coloneqq(U_2, A_2), \ldots,
    T_t\coloneqq(U_t, A_t)$, such that 
    \begin{enumerate}
      \item for every two subtournaments~$T_i$ and $T_j$, in the
      original tournament, either all arcs
      are from the vertices of $T_i$ to the vertices of $T_j$ or the
      other way round; and 
      \item the graph resulting from deleting all but one vertex
      from each subtournament~$T_i$, is a directed
      acyclic graph.
    \end{enumerate}
}
\newcommand{\claimone}{
  Let $\alternativeset'\subseteq\alternativeset$ be a subset of
  \alternatives.
  Let $b$ be an \alternative in $\alternativeset \setminus \alternativeset'$ such that
  $b$ ist \emph{not} an $(\alternativeset'\cup \{b\})$-majority winner.
  Then, 
    \begin{enumerate} 
      \item alternative~$b$ is not a successive winner under every agenda~$\agenda$ that extends the partial order~$b \agendapref \alternativeset'$, and 
      \item if such an alternative~$b$ does not exist and if
      $\alternativeset \setminus \alternativeset' \neq \emptyset$, then no \alternative from $\alternativeset'$ can be a successive winner.
    \end{enumerate}
}
\newcommand{\claimtwo}{
  There is a completion~$(\profile^*,\agenda)$ of~$(\profile,\partialagenda)$ 
  satisfying Condition~\ref{cond1} from Observation~\ref{obs:conditions-p-not-necessary-winner} if and only if
  $p$ is not an $(\alternativeset \setminus \Precc{\partialagenda}{p})$-majority winner in the $p$-discriminating completion of $\profile$.
}
\newcommand{\claimthree}{
  Assume that no completion of~$(\profile,\partialagenda)$ satisfies Condition~\ref{cond1}.
  Then, there is a completion~$(\profile^*,\agenda)$ of~$(\profile,\partialagenda)$ 
  satisfying Condition~\ref{cond2} from Observation~\ref{obs:conditions-p-not-necessary-winner} if and only if 
  there is some alternative $b \in \Precc{\partialagenda}{p}$ being 
  a $(\Succc{\partialagenda}{b}\cup \{b\})$-majority winner in the $c$-privileging completion of~$\profile$.
}
\newcommand{\obsstrongtournamenthasHamiltoniancycle}{
 Every strong\-ly connected tournament contains a Hamiltonian cycle.
}
\newcommand{\nth}[1]{$#1^{\text{th}}$\xspace}
\newcommand{\nthmwinner}[1]{\nth{#1}-round manipulated winner}
\newcommand{\nthawinner}[1]{\nth{#1}-round amendment winner}
\title{Parliamentary Voting Procedures:\\
  Agenda Control, Manipulation, and Uncertainty\thanks{A preliminary short 
version of this work has been presented at the 24th International Joint Conference on Artificial Intelligence (IJCAI 2015), Buenos Aires, July, 2015~\citep{BreCheNieWal2015}.
In this long version, we provide two additional fixed-parameter tractability results (Theorem \ref{thm:possiblewinner-ilp} and Corollary~\ref{cor:possiblewinner-fpt}) and all proofs that were omitted in the conference version.}
}
\author[1]{Robert Bredereck}
\author[1]{Jiehua Chen}
\author[1]{Rolf Niedermeier}
\author[2]{Toby Walsh}
\affil[1]{Institut f\"ur Softwaretechnik und Theoretische Informatik,
  TU Berlin, Berlin, Germany\\
  \texttt{\{robert.bredereck, jiehua.chen, rolf.niedermeier\}@tu-berlin.de}}
\affil[2]{NICTA and the University of New South Wales, Australia,\hspace{4cm}
  \texttt{toby.walsh@nicta.com.au}}
\date{}
\begin{document}

\maketitle

\thispagestyle{plain}
\setcounter{footnote}{0}

\begin{abstract}
We study computational problems for two popular parliamentary voting procedures: the
amendment procedure and the successive procedure.
While finding successful manipulations or agenda controls is tractable
for both procedures,
our real-world experimental results indicate that most
elections cannot be manipulated by a few voters and agenda
control is typically impossible.
If the voter preferences are incomplete, then finding which alternatives can possibly win is $\NP$-hard for both procedures. Whilst deciding if an alternative necessarily wins is $\coNP$-hard for the amendment procedure,
it is polynomial-time solvable for the successive one.

\medskip\noindent
\emph{JEL Classification: D71, D72}
\end{abstract}

\section{Introduction}
\label{sec:introduction}

Two prominent voting rules 
are used in many parliamentary chambers 
to amend and decide upon new legislation: 
the successive procedure and the
amendment 
procedure~\citep{ApeBalMas2014}. 
Both are \emph{sequential voting procedures}:
the alternatives are ordered (thus forming an agenda) and they are 
considered one by one, making a binary decision based on
majority voting in each step.
In a nutshell, in each step, the \emph{successive procedure} considers the current alternative and decides whether to accept it (in which case the procedure stops and the winner is determined)
or to reject it and the procedure then continues with the remaining alternatives in the given order.
The \emph{amendment procedure} in each step 
jointly considers two current alternatives and decides by majority 
voting which one of the two is eliminated---the remaining one then 
will be confronted with the next alternative on the agenda.

There are many reasons to study the properties of parliamentary voting procedures, and especially
to consider computational questions. 
First, parliamentary voting procedures are
used very frequently in practice. 
For example, the recent 112th Congress of the US Senate
and House of Representatives had 1030 votes
to amend and approve 
bills. This does not take into account the hundreds
of committees that also amended and voted on these bills.
As a second example, there were 351 divisions within
the Houses of Lords and Commons in 2013 to amend or approve bills. 

Second, parliamentary voting procedures are
used to make some of the most important decisions in society. 
We decide to reduce carbon emissions, provide universal health care,
or raise taxes based on the outcome of such voting 
procedures. When rallying support for new
legislation, it is vital to know what amendments can and cannot be passed. 
Fortunately, we have excellent historical voting records for
parliamentary chambers. 
We can therefore make high quality 
predictions about how sincere or ``sophisticated'' voters will vote.
 
Third, \citet{EneKoe1980} give evidence that parliamentary voting may
be strategic: 
\begin{quote}
{\em ``Thus, it is shown that sophisticated voting 
does occur in Congress and in fact is encouraged by
the way amendments are used in the legislative process.
It should not come as a surprise to congressional
scholars that congressmen do not always vote sincerely.''
}
\end{quote}
Fourth, there is both theoretical
and empirical evidence that the final outcome
critically depends on the order in which amendments
are presented. For example, \citet{OrdSch1987}
remark that 
\begin{quote}
  {\em ``\ldots legislative decisions are at the mercy of elites who control agendas.''
}
\end{quote}

It is therefore interesting to
ask if, for example, computational complexity
is a barrier to the control of the agenda
or to strategic voting in such parliamentary
voting procedures. 
The former refers to the \textsc{Agenda Control} problem while
the latter to the \textsc{Manipulation} problem.
It is also interesting to ask 
if we can efficiently compute whether
a particular amendment can possibly (or necessarily) pass despite uncertainty in
the votes or the agenda. 
This refers to the \textsc{Possible Winner} (or \textsc{Necessary Winner}) 
problem.
We provide one of the first computational studies of these issues,
giving both theoretical and empirical results.

\subsection{Related Work}
There are many studies in the economic and political literature on parliamentary voting procedures,
starting with \citet{Black1958} and \citet{Far1969}, concerning ``insincere'' or 
``sophisticated'' or ``strategic''
voting e.g.~\citep{Mil1977,EneKoe1980,MN81,SW84,Ban1985,Mou86,OrdSch1987}. 
\citet{ApeBalMas2014} characterize both the amendment and
the successive procedures from an axiomatic perspective.

\citet{Mil1977} studies the set of \alternatives that may win, 
the \textsc{Agenda Control} problem.
In particular, he
shows 
that under sincere voting, an \alternative can become an amendment winner under some agenda
if and only if it
belongs to the Condorcet set (also known as top cycle).
We extend this result by a constructive proof. 
For the successive procedure, however, he only 
shows that every \alternative from the Condorcet set can win.
\citet{BarGer2014} follow Miller's research of characterizing the set of \alternatives that may become an amendment (or a successive) winner by controlling the agenda.\footnote{Their definitions for both procedures are actually slightly different from ours, the common ones.
}
\citet{
  Ras2014} empirically examines the behavior of voters in
the Norwegian parliament, where the successive procedure is used.
He reports that successful insincere voting, where voters may vote
differently from their true preferences and the outcome is better
for them, is very rare.

Using computational complexity as a barrier against \textsc{Manipulation} (that is, changing the outcome by adding voters) was initiated by \citet{BarTovTri1989}. 
They show that manipulating a special variant of the Copeland voting rule is $\NP$-hard. 
\citet{BarOrl1991} show that manipulating the Single Transfer Vote~(STV) voting rule is $\NP$-hard even the number of voters allowed to add is one.
This voting rule is used in the parliamentary elections of many countries. 
It is a sequential voting procedure and
works similarly 
to the successive procedure except that there is no agenda. 
Instead, in each step, the \alternative that is ranked first by the least number of \voters will be deleted from the profile.
The $\NP$-hardness result for manipulating STV is of particular interest
since we show polynomial-time results for manipulating the successive procedure.
These two complexity results indicate that it is the agenda that makes an important difference.

Concerning uncertainty in elections (that is, each voter's preference order may be not a linear order),
there is some work in the 
political literature~\citep{OrdPal1988,Jun1989},
but there seems to be significantly 
more activity on the computational side.
\citet{KonLan2005} consider the \textsc{Possible Winner} and \textsc{Necessary 
Winner} problems for the
Condorcet rule.
The same problems for several other common voting rules have been frequently studied
~\citep{Wal2007,BetHemNie2009,
  HazAumKraWoo2012,AziHarBriLanSee2012}.

\citet{Mou86} discusses a generalization of the amendment procedure: the voting tree procedure.
The amendment procedure is a special case of the voting tree procedure~\citep{Mou86}.
This general procedure
employs a binary voting tree where the leaves represent the alternatives and each \alternative is represented by at least one leaf, 
and each internal node represents the alternative that wins the pairwise comparison of its direct children.
The alternative represented by the root defines the winner.
If the binary tree is degenerate and if each alternative is represented by
exactly one leaf, then this procedure is identical to the amendment procedure.
To tackle the \textsc{Manipulation} problem with weighted voters, 
\citet{ConSanLanjacm2007} provide a cubic-time algorithm for the 
voting tree procedure 
while our quadratic-time algorithm is tailored for the amendment procedure.
\citet{XiaCon2011} provide intractability results for the \textsc{Possible} (resp.\ \textsc{Necessary}) \textsc{Winner} problem with weighted voters when the given tree is balanced.
\citet{PinRosVenWal2011} and \citet{LanPinRosSalVenWal2012} show that the \textsc{Possible}
(resp.\ \textsc{Necessary}) \textsc{Winner} problem with weighted voters is intractable for a constant number of \voters (see Table~\ref{tab:result}). 



\newcommand{\mythm}[1]{Thm. \ref{#1}}
\newcommand{\millerfollow}{$^{\heartsuit}$ }
\newcommand{\langpinifollow}{$^\spadesuit$ }
\newcommand{\thmcite}[1]{Thm.~\ref{#1}}
\newcommand{\thmscite}[2]{Thm.~\ref{#1} and \ref{#2}}
\newcommand{\thmcorcite}[2]{Thm.~\ref{#1} and Cor.~\ref{#2}}
\newcommand{\corcite}[1]{Cor.~\ref{#1}}
\begin{table}[t]
  \centering
  \begin{tabular}[t]{ @{}l l l l @{}}
    \toprule
    Problem & Successive & Amendment & References\\
    \midrule
    \probAgendaControl & $O(n\cdot m^2)$  & $O(n\cdot m^2 + m^3)$\millerfollow & \thmscite{thm:control-successive-p}{thm:control-amendment-p}\\ 
    & & \\[-0.7em]
    W. \probManipulation&  $O(n \cdot m)$   &  $O(n \cdot m^2)$ & \thmscite{thm:manipulation-successive-p}{thm:manipulation-amendment-p}, \corcite{cor:wmanipulation-p}\\ 
    & & \\[-0.7em]
    \probPossibleWinner & $\NP$-c & $\NP$-c  & \thmscite{thm:possiblewinner-successive-nph}{thm:possiblewinner-amendment-nph}\\ 
    & & \\[-.9em]
    & $\FPT$ wrt. $m$ & $\FPT$ wrt. $m$ & \corcite{cor:possiblewinner-fpt}\\
    & & \\[-0.7em]
    \probNecessaryWinner & $O(n \cdot m^3)$ & $\coNP$-c & \thmscite{thm:necessarywinner-successive-p}{thm:necessarywinner-amendment-nph}\\ 
    & & \\[-.9em]
    &  & $\FPT$ wrt. $m$ & \corcite{cor:necessarywinner-amendment-fpt}\\
    & & \\[-0.7em]
    W.~\probPossibleWinner & $\NP$-c~($m=3$) & $\NP$-c~($m=3$)\langpinifollow & \thmcite{thm:wpossiblewinner-nph}\\ 
    & & \\[-0.7em]
    W.~\probNecessaryWinner & $O(n \cdot m^3)$ & $O(n)$ for $m \le 3$ & \thmcite{thm:w_necessary} \\
    & & \\[-.9em]
    &  & $\coNP$-c~($m=4$)\langpinifollow  & ---\\
    \bottomrule
  \end{tabular}
  \caption{Computational complexity of the considered problems. 
    ``W.'' refers to the relevant problem with weighted voters.
    ``$\NP$-c'' (resp.\ ``$\coNP$-c'') stands for $\NP$-complete (resp.\ $\coNP$-complete).
    The number of \voters is denoted by~$n$, the number of \alternatives is denoted by~$m$, and the sum of weights of all voters in the manipulation is denoted by~$w$.
    ``$\FPT$ wrt. $m$'' stands for 
    ``fixed-parameter tractable with respect to $m$'' and means that 
    if the number~$m$ of alternatives is a constant,
    then the relevant problem is polynomial-time solvable and the degreee of the polynomial does not depend on $m$.
    The result marked with
    \millerfollow also follows from the work of \protect\citet{Mil1977}.
    Those marked with \langpinifollow follow from the work of
    \protect\citet{PinRosVenWal2011} and \protect\citet{LanPinRosSalVenWal2012}. 
    Entries containing statements of the form ``$\NP$-c~($z$)''
    (resp.\ ``$\coNP$-c~($m=z$)'') mean that the relevant problem is
    in $\NP$ (resp.\ $\coNP$) and is $\NP$-hard (resp.\ $\coNP$-hard) even with only~$z$~\alternatives.
    All hardness results already hold when the agenda is a linear order.
  }
  \label{tab:result}
\end{table}

\subsection{Our Contributions}

We investigate computational problems for two prominent parliamentary voting procedures:
the successive procedure and the amendment procedure.
We study three types of voting problems.
First, we study whether there is an agenda under which a given alternative can win when the voters vote sincerely; we call the corresponding problem \textsc{Agenda Control}.
Second, we study whether a given alternative can win by adding a given amount of voters; we call the corresponding problem \textsc{Manipulation}.
Third, we study whether a given alternative can possibly (resp.\ necessarily) win when the voters may have incomplete preferences; we call the corresponding problem \textsc{Possible} (resp.\ \textsc{Necessary}) \textsc{Winner}.
See Table~\ref{tab:result} for an overview of our theoretical results.

Our polynomial-time results for the agenda control problem and the manipulation problem indicate that the amendment procedure is computationally more expensive than the successive procedure.
From a computational perspective,
this implies that the amendment procedure may be more resistant to strategic voting and agenda control than the successive procedure.
If voters' preference orders are incomplete,
then deciding whether an \alternative possibly wins is $\NP$-complete for both procedures
while deciding whether an \alternative necessarily wins is polynomial-time solvable for the successive procedure,
but is $\coNP$-complete for the amendment procedure.

Our experiments on \textsc{Agenda Control} and on \textsc{Manipulation} using real-world voting data indicate that 
while both problems are polynomial-time solvable, 
a successful agenda control is very rare and 
a successful manipulation on average needs a coalition containing more 
than half of the voters.

\subsection{Organization of the paper}

In Section~\ref{sec:definitions}, we provide definitions regarding voter preferences, 
preference profiles, and our central parliamentary voting procedures.
In Section~\ref{sec:agenda-control}, 
we focus on the \textsc{Agenda Control} problem when 
voters are voting sincerely and
we study the corresponding computationally complexity for 
both parliamentary procedures.
In Section~\ref{sec:manipulation},
we deal with strategic behavior of voters and investigate the computational complexity of coalitional manipulations.
In Section~\ref{sec:possible-necessary}, 
we are interested in situations with uncertainty, 
that is, when voters' preferences and the agenda are still incomplete.
We study the problem of whether an \alternative can possibly/necessarily win in such a situation.
In Section~\ref{sec:experiments}, we complement our theoretical study with an experimental evaluation of the \textsc{Agenda Control} problem and the \textsc{Manipulation} problem for real-world profiles.
Section~\ref{sec:conclusion} concludes our work and presents some challenges for future research. 

\section{Preliminaries}
\label{sec:definitions}

Let $\alternativeset\coloneqq\{\altsymbol_1,\ldots,\altsymbol_m\}$ be a set of $m$~\alternatives
and let $\voterset\coloneqq\{\votersymbol_1,\ldots,\votersymbol_n\}$ be a set of $n$~voters. 
A \emph{preference profile}~$\profile\coloneqq\profiletuple$ specifies the
\emph{preference order}s of the voters in $\voterset$,
where each voter~$\votersymbol_i$ ranks the \alternatives according
to the partial order~$\pref_i$ over $\alternativeset$.
For two \alternatives~$b, c \in \alternativeset$, 
the relation~$b\pref_i c$ means that voter~$\votersymbol_i$ strictly prefers $b$ to $c$.
Given two disjoint subsets~$B, C\subseteq \alternativeset$ of alternatives, 
we write $B \pref_i C$
to express that voter~$v_i$ prefers set~$B$ to set~$C$,
that is, 
for each alternative $b\in B$ and each alternative~$c\in C$ it holds that~$b\pref_i c$,
and all alternatives in $B$ (resp.\ $C$) are \emph{incomparable} to each other.
We write $B\pref_i c$ as shorthand for $B\pref_i \{c\}$ and $c\pref_i B$ for $\{c\}\pref_i B$.


We say that alternative~$b$ \emph{\beat{s}} alternative~$c$ (in a head-to-head contest) 
when a majority of voters prefers $b$ to $c$,
and call $b$ the \emph{survivor} and $c$ the \emph{loser} of the two
alternatives.
We call $b$ a \emph{Condorcet winner} if it beats every other \alternative.

We can use a directed graph to illustrate the comparisons between every two \alternatives. 
\begin{definition}[(Weighted) majority graph]
\label{def:majority graph}
  Given a preference profile~$\profile$ $=\profiletuple$,
  we construct an arc-weighted directed graph~$G \coloneqq (U, E)$,
  where $U$ consists of a vertex~$u_j$ for each \alternative~$c_j\in \alternativeset$ 
  and where there is an arc from vertex~$u_{j}$ to vertex~$u_{j'}$ with weight~$w$ 
  if exactly $w$~voters prefer $c_{j}$ to~$c_{j'}$, that is, $c_{j} \pref c_{j'}$.
  We call the constructed graph~$G$ a \emph{weighted majority graph} for profile~$\profile$.
  
  If we ignore the weights and the arcs with weights less than $|\voterset|/2$,
  then we obtain a \emph{majority graph} (without weights) for the profile~$\profile$. 
  We call the constructed majority graph a \emph{tournament} if there is exactly one arc between every two distinct vertices. 
\end{definition}

We give a small example to illustrate the concept of (weighted) majority graphs.

\begin{example}
  \label{ex:profile}
  Let $\profile$ be a preference profile with three \alternatives~$a, b, c$,
  and three voters~$v_1, v_2, v_3$ whose preference orders are specified as follows:
  \begin{align*}
    v_1\colon a \pref b \pref c,\quad  v_2\colon b \pref a \pref c, \quad
    v_3\colon c \pref a \pref b.
  \end{align*}
  
  The weighted majority graph for $\profile$ consists of three alternatives and six weighted arcs as depicted in the left figure below.
  Bold arcs indicate the majority relation.
  
  {\centering
    \tikzstyle{majarr}=[draw=black,thick,->]
    \tikzstyle{alter}=[circle, minimum size=18pt, inner sep = 2.5pt, draw=black!15, ultra thick, fill=black!9]
    \tikzstyle{innercircle}=[circle, minimum size=16pt, ultra thick, draw=white]
    
    \newcommand{\sss}{\ensuremath}
 
  \begin{tabular}{ccc}
  \begin{tikzpicture}[auto, >=stealth']
    \tikzstyle{minarr}=[draw=black!50,->]
    \node[alter] at (0,0) (a) {$u_{\sss a}$};
    \node[alter, right = 8ex of a] (b) {$u_{\sss b}$};
    \node[alter, right = 8ex of b] (c) {$u_{\sss c}$};

    \node[innercircle] at (a) {};
    \node[innercircle] at (b) {};
    \node[innercircle] at (c) {};

    \draw[majarr] (a) edge[bend left=13] node[midway, anchor=south] {$\scriptstyle 2$} (b);
    \draw[majarr] (b) edge[bend left=13] node[midway, anchor=south] {$\scriptstyle 2$} (c);
    \draw[majarr] (a) edge[bend left=43] node[midway, anchor=south] {$\scriptstyle 2$} (c);
    \draw[minarr] (b) edge[bend left=13] node[midway, anchor=north] {$\scriptstyle 1$} (a);
    \draw[minarr] (c) edge[bend left=13] node[midway, anchor=north] {$\scriptstyle 1$} (b);
    \draw[minarr] (c) edge[bend left=43] node[midway, anchor=north] {$\scriptstyle 1$} (a);
  \end{tikzpicture}
  & \quad &
  \begin{tikzpicture}[auto, >=stealth']
    \tikzstyle{minarr}=[white]
    \node[alter] at (0,0) (a) {$u_{\sss a}$};
    \node[alter, right = 8ex of a] (b) {$u_{\sss b}$};
    \node[alter, right = 8ex of b] (c) {$u_{\sss c}$};

    \node[innercircle] at (a) {};
    \node[innercircle] at (b) {};
    \node[innercircle] at (c) {};

    \draw[majarr] (a) edge[bend left=13] (b);
    \draw[majarr] (b) edge[bend left=13] (c);
    \draw[majarr] (a) edge[bend left=43] (c);
    \draw[minarr] (b) edge[bend left=13] node[midway, anchor=north] {$\color{white}\scriptstyle 1$} (a);
    \draw[minarr] (c) edge[bend left=13] node[midway, anchor=north] {$\color{white}\scriptstyle 1$} (b);
    \draw[minarr] (c) edge[bend left=43] node[midway, anchor=north] {$\color{white}\scriptstyle 1$} (a);
  \end{tikzpicture}
  \end{tabular}
  \par}
  The corresponding majority graph without weights is depicted in the right figure.
  We can verify that it is a tournament.
  Vertex~$u_a$ has exactly two out-going arcs, meaning that alternative~$a$ beats every other alternative. 
  Thus, it is a Condorcet winner.
\end{example}

Given two preference profiles~$\profile$ and $\profile'$ for the same set of \alternatives and the same set of \voters, 
we say that $\profile$ \emph{extends} $\profile'$ if for every~$i$, preference order~$\pref_i$ from~$\profile$ includes the preference order~$\pref'_i$ from~$\profile'$.
If each~$\pref_i$ is already a linear order, 
then we also say that $\profile$ \emph{completes} $\profile'$.

Consider a preference profile~$\profile\coloneqq \profiletuple$.
Let $\alternativeset'\subseteq \alternativeset$ be a subset of \alternatives.
If not specified explicitly, let $\setseq{A'}$ denote an arbitrary but fixed linear order of the
alternatives in~$A'$. Consequently, $\setrevseq{A'}$ denotes the 
corresponding 
reversed order of the alternatives in~$A'$.
Given an alternative~$a\in \alternativeset'$ from
this set, we say that $a$ is an \emph{$\alternativeset'$-majority winner} if 
a majority of \voters prefer $a$ to all alternatives from $\alternativeset'\setminus \{a\}$, that is,
$|\{v_i \mid a \pref_i A'\setminus \{a\}\}| > |V|/2$.



We consider two of the most common parliamentary voting procedures.
For both procedures, we assume that a linear order over the $m$~\alternatives in $\alternativeset$ is given.
We refer to this linear order~$\agenda$ as an \emph{agenda}.
If this order is \emph{not} linear,
then we call it a \emph{partial agenda}, denoted by~$\partialagenda$.
We use the symbol~$\agendapref$ for the agenda order to distinguish it from the preference order~$\pref$. 

\begin{definition}[Successive procedure]\label{def:successive}
There are at most $m$ rounds with $m$ being the number of alternatives. 
Starting with round~$i\coloneqq1$, we repeat the following until we make a decision:
Let $c$ be the $i$th \alternative in the given agenda~$\agenda$.
If a majority of voters prefers alternative~$c$ 
to all \alternatives that are ordered behind it in $\agenda$,
then $c$ is the decision and we call it a successive winner.
Otherwise, we 
proceed to round~$i \coloneqq i+1$.
\end{definition}

In Europe, the successive procedure is used in many parliamentary
chambers including those of Austria, Belgium, Denmark,
France, Germany, Greece, Iceland, Ireland, Italy,
Luxembourg, the Netherlands, Norway, Portugal, 
and Spain~\citep{Ras2000}.
We give a small example to illustrate how the successive procedure works.
\begin{example}\label{ex:successive}
Consider the profile from Example~\ref{ex:profile} and consider the following agenda~$\agenda\colon a \agendapref b \agendapref c$.
  Since less than half of the voters prefer $a$ to $\{b, c\}$ (only $v_1$ does),
  $a$ is not a successive winner.
  Since more than half of the voters prefer $b$ to $c$ (voters~$v_1$ and $v_2$), $b$ is the successive winner.
\end{example}

\begin{definition}[Amendment procedure]\label{def:amendment}
This procedure has $m$~rounds with $m$ being the number of alternatives.
In the $1^{\text{st}}$ round, we let the $1^{\text{st}}$-round amendment
winner be the first alternative in the given agenda~$\agenda$.
Then, for each round~$2\le i \le m$, 
let the \nthawinner{i} be the survivor between 
the $i$th alternative in $\agenda$ and the \nthawinner{(i-1)}.
We define the \nthawinner{m} to be the amendment winner.
\end{definition}

In Europe, the amendment procedure is used in the parliamentary
chambers of Finland, Sweden, Switzerland, and the United
Kingdom. It is also used in the U.S. Congress and several
other countries with Anglo-American ties~\citep{Ras2000}.
We use the same example given for the successive procedure to illustrate how the amendment procedure works.
It may be helpful to consider the corresponding majority graph (see Definition~\ref{def:majority graph}).

\begin{example}\label{ex:amendment}
Consider the profile and the agenda given in Example~\ref{ex:successive} for the successive procedure.
Alternative~$a$ is the $1^{\text{st}}$-round winner since it is the first \alternative on the agenda. 
Since a majority of voters prefers $a$ to $b$, 
$a$ is also the $2^{\text{nd}}$-round winner.
Since a majority of voters also prefers $a$ to $c$, 
$a$ is the $3^{\text{rd}}$-round and hence, the amendment winner.
Indeed, as one can see from the majority graph shown in Example~\ref{ex:profile}, alternative $a$ is a Condorcet winner.
Consequently, it is always an amendment winner no matter how the agenda looks like. 
\end{example}


For the remainder of this paper, we assume that the number of voters is \emph{odd} to
reduce the impact of ties, and break remaining ties.
Furthermore, we consider both unweighted voters and voters with integer weights. 
The weighted case is especially interesting in the parliamentary setting: 
First, there are parliamentary chambers
where voters are weighted (for instance, in the Council of Europe,
preference orders are weighted by the size of the country). 
Second, voters will often vote along party lines. This effectively gives us 
parties casting weighted preference orders. 
Third, the weighted case can inform the situation where
we have uncertainty about the preference orders. 
For example, Theorem 15 of \citet{ConSanLanjacm2007} proves that if
the manipulation problem for a voting rule
is $\NP$-hard for weighted voters who have complete preference orders,
then deciding who possibly wins in the unweighted case is
$\NP$-hard even when there is only a limited form of uncertainty about the preference orders. It would be interesting to prove similar results
about uncertainty and weighted preference orders in parliamentary voting procedures. 

\section{Agenda Control}
\label{sec:agenda-control}

The order of the \alternatives, that is, the agenda, may depend
on the speaker, the Government, logical
considerations (for instance, the status quo goes last, the
most extreme \alternative comes first), the chronological
order of submission, or other factors.
The agenda used can have a major impact on the final decision. It is worth noting
also that there are many possible agendas. 
For example, suppose \voters are sincere and we use 
the successive procedure. 
Then the Condorcet winner is only guaranteed to win 
if it is introduced
in one of the last two positions in the agenda. 
We therefore consider the following computational question for the situation where voters vote sincerely.

\probDef{\probAgendaControl}{
  A preference profile~$\profile\coloneqq\profiletuple$ with linear preference
  orders 
  and a preferred \alternative~$p\in \alternativeset$.
}
{
  Is there an agenda for $\alternativeset$ such that $p$ is the overall winner?
}

We find that both voting procedures are ``vulnerable'' to agenda controls. 
In particular, 
we show how to find in polynomial time an appropriate agenda (if it exists) such that the preferred \alternative can become a successive (resp.\ amendment) winner.  
In the remainder of this section, 
we assume that a preference profile $\profile\coloneqq\profiletuple$ with linear preference
orders and a preferred \alternative~$p\in \alternativeset$ are given.
We denote by $n$ the number of \voters and by $m$ the number of \alternatives in $\profile$.

\subsection{Successive procedure}
The basic approach to controlling the successive procedure is to build an agenda
from back to front such that each of the \alternatives that are currently
among the highest positions in the partial agenda may be strong enough to
beat~$p$ alone but is too weak to be a majority winner against the whole set of \alternatives behind it. 
To formalize this idea, we need some observations.
\begin{observation}\label{obs:abeatsb}
  Let $b, c$ be two \alternatives such that $b$ beats $c$.
  Then, as long as $b$ is behind $c$ in the agenda,
  $c$ can never be a successive winner.
\end{observation}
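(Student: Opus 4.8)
The plan is to argue directly from the definition of successive winner, using the fact that the number of voters is odd. First I would observe that an alternative can only be declared the successive winner in the single round in which it is the alternative under consideration, namely round~$i$ where $c$ is the $i$th alternative in the agenda~$\agenda$; in every other round the procedure either has already stopped at a different alternative or has not yet reached $c$. So it suffices to show that, whenever $b$ is ordered behind $c$ in $\agenda$, the winning condition of Definition~\ref{def:successive} fails for $c$ in round~$i$.

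Next I would unfold that condition: $c$ becomes the successive winner in round~$i$ only if a majority of voters prefers $c$ to the whole set of alternatives ordered behind it in $\agenda$. Since $b$ is one of those alternatives, every voter $v_i$ counted in this majority satisfies in particular $c \pref_i b$. Hence $|\{v_i \mid c \pref_i b\}|$ is at least as large as the size of that set of voters, which is more than $|\voterset|/2$; that is, a strict majority of voters prefers $c$ to $b$.

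Finally I would derive the contradiction from the hypothesis that $b$ beats $c$. Since the voters have linear preference orders, for every $v_i$ exactly one of $b \pref_i c$ and $c \pref_i b$ holds, so $|\{v_i \mid c \pref_i b\}| = |\voterset| - |\{v_i \mid b \pref_i c\}|$. Because $b$ beats $c$, the latter cardinality exceeds $|\voterset|/2$, and therefore $|\{v_i \mid c \pref_i b\}| < |\voterset|/2$. This contradicts the previous paragraph, so $c$ cannot win in round~$i$, and consequently $c$ is never a successive winner as long as $b$ stays behind it in the agenda.

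There is essentially no hard part here: the statement is an immediate consequence of the winning condition together with the oddness (or, more precisely, the strictness forced by linear preferences) of head-to-head majorities. The only point that deserves a sentence of care is the remark that $c$ has just one chance to be declared winner—the round in which it is scanned—so that ruling out a win in that round rules out a win altogether.
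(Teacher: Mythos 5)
Your proposal is correct and follows essentially the same route as the paper: it reduces the claim to the single round in which $c$ is scanned and then observes that the majority condition for $c$ fails there because $b$, which beats $c$, is still among the alternatives behind it. The only difference is that you spell out the counting step (a majority for $b$ over $c$ leaves fewer than half preferring $c$ to $b$) that the paper states in one line.
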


\begin{proof}
  Let $\agenda$ be an agenda where $b$ is positioned behind $c$.
  Suppose that no alternative in front of $c$ in agenda~$\agenda$ 
  is a successive winner. 
  In the round where $c$ is considered, 
  $c$ is \emph{not} ranked first by a majority
  of \voters because $b$ beats $c$ and $b$ is not yet deleted,
  implying that $c$ will be deleted and is not a successive winner.
\end{proof}

We can generalize the above observation to a subset of \alternatives.

\begin{lemma}\label{lem:claimone}
  \claimone
\end{lemma}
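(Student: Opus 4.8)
The plan is to prove the two parts separately, in each case unwinding directly the definition of the successive procedure (Definition~\ref{def:successive}) together with the definition of an $\alternativeset'$-majority winner, and using Observation~\ref{obs:abeatsb} only where a single \alternative is known to \beat another.

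For part~1, I would fix an arbitrary agenda~$\agenda$ extending $b \agendapref \alternativeset'$ and show that $b$ cannot be the successive winner under~$\agenda$. Since $b$ precedes every \alternative of~$\alternativeset'$ in~$\agenda$, the set~$S$ of \alternatives positioned behind~$b$ in~$\agenda$ satisfies $\alternativeset' \subseteq S$. If the procedure stops before the round in which $b$ is considered, the winner lies in front of~$b$ and we are done. Otherwise, in the round of~$b$, being selected would require a majority of voters to prefer~$b$ to every \alternative of~$S$, hence in particular to every \alternative of~$\alternativeset' = (\alternativeset' \cup \{b\}) \setminus \{b\}$; but this is exactly the assertion that $b$ is an $(\alternativeset' \cup \{b\})$-majority winner, contradicting the hypothesis. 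So $b$ is not selected, and part~1 follows.

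For part~2, the hypothesis says that every $b \in \alternativeset \setminus \alternativeset'$ is an $(\alternativeset' \cup \{b\})$-majority winner (equivalently, a majority of voters prefers~$b$ to all of~$\alternativeset'$) and that $\alternativeset \setminus \alternativeset' \neq \emptyset$. I would fix an arbitrary \alternative~$c \in \alternativeset'$ and an arbitrary agenda~$\agenda$, and let $b^{*}$ be the \alternative of $\alternativeset \setminus \alternativeset'$ occurring latest in~$\agenda$ (it exists since $\alternativeset \setminus \alternativeset' \neq \emptyset$). If $b^{*}$ lies behind~$c$ in~$\agenda$, then a majority prefers~$b^{*}$ to every \alternative of~$\alternativeset'$, in particular to~$c$, so $b^{*}$ beats~$c$, and Observation~\ref{obs:abeatsb} yields that $c$ is not a successive winner. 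If instead $b^{*}$ lies in front of~$c$, then by the choice of~$b^{*}$ every \alternative behind~$b^{*}$ belongs to~$\alternativeset'$, so a majority prefers~$b^{*}$ to all \alternatives behind it; hence, upon reaching the round of~$b^{*}$, the successive procedure selects~$b^{*}$ (and if it stopped earlier it selected a still earlier \alternative). In either case the winner precedes~$c$, so $c$ is not the winner. Since $c$ and $\agenda$ were arbitrary, no \alternative from~$\alternativeset'$ can be a successive winner.

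The one place that needs care is part~1: one must resist invoking Observation~\ref{obs:abeatsb}, because $b$ failing to be an $(\alternativeset' \cup \{b\})$-majority winner does \emph{not} imply that some single \alternative of~$\alternativeset'$ beats~$b$ (the voters forming the complementary majorities may be spread across several \alternatives of~$\alternativeset'$), so the argument genuinely has to consider the whole set behind~$b$ at once. The remaining steps---choosing the last \alternative of $\alternativeset \setminus \alternativeset'$ in part~2, and observing that whatever the procedure does its winner comes no later than that \alternative and hence strictly before~$c$---are routine.
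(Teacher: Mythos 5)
Your proof is correct and follows essentially the same route as the paper: part~1 is the direct unwinding of the definitions that the paper dismisses as ``easy to see'' (and your warning that Observation~\ref{obs:abeatsb} does not apply here is exactly the right caveat), and part~2 hinges on the same key object as the paper's argument, namely the last \alternative of $\alternativeset\setminus\alternativeset'$ in the agenda, which either beats~$c$ from behind or wins (or loses to something even earlier) from in front. The only cosmetic difference is that you argue part~2 directly with a case split on the position of~$b^{*}$, whereas the paper argues by contradiction and first deduces that all of $\alternativeset\setminus\alternativeset'$ must precede the supposed winner.
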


\begin{proof}
  The first statement is easy to see from the definition of a successive winner.
  For the second statement, assume that 
  $\alternativeset \setminus \alternativeset' \neq \emptyset$ and that
  \begin{align*}
    \text{every \alternative }~ c \in \alternativeset\setminus \alternativeset' \text{ is an } (\alternativeset' \cup \{c\})\text{-majority winner.} \tag{$\star$} \label{no majority loser}
    \end{align*}
  Suppose for the sake of contradiction that there is an agenda~$\agenda$ under which the successive winner~$a'$ would come from $\alternativeset'$.
  This implies that $a'$ beats every alternative behind it in $\agenda$;
  by assumption~(\ref{no majority loser}),
  $a'$ is beaten by every alternative in $\alternativeset\setminus \alternativeset'$.
  This further implies that 
  $\agenda\colon (\alternativeset \setminus \alternativeset') \agendapref a'\text{.}$      
  Let $d$ be the last alternative in $\alternativeset \setminus \alternativeset'$ that is still in front of $a'$, that is,
  \begin{align*}
    \agenda \text{ satisfies } ( (\alternativeset \setminus \alternativeset') \setminus \{d\} ) \agendapref d \agendapref a'\text{.}
  \end{align*}
  By assumption~(\ref{no majority loser}), $d$ would become a successive winner if all alternatives in front of it in agenda~$\agenda$ are deleted (note that all alternatives behind $d$ come from $\alternativeset'$)---a contradiction to $a'$ being a successive winner.
\end{proof}

By the above lemma, we can construct an agenda from back to front by first placing our preferred \alternative~$p$ at the last
position and setting $\alternativeset'\coloneqq\{p\}$.
We will extend the agenda by putting those alternatives~$c$ that are not $(\alternativeset'\cup \{c\})$-majority winners right in front of $\alternativeset'$.
Using this approach we can solve the agenda control problem for the successive procedure in polynomial time.

\begin{theorem}\label{thm:control-successive-p}
 \thmagendacontrolsuccessive
\end{theorem}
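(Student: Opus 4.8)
The plan is to realize, as an algorithm, the back-to-front construction described just before the theorem and to verify it with Lemma~\ref{lem:claimone}. I maintain a set~$\alternativeset'$ of already-placed \alternatives together with a partial agenda in which these \alternatives occupy the last~$|\alternativeset'|$ positions; I start with $\alternativeset'\coloneqq\{p\}$ and $p$ in the very last position. While $\alternativeset'\neq\alternativeset$, I look for an \alternative~$c\in\alternativeset\setminus\alternativeset'$ that is \emph{not} an $(\alternativeset'\cup\{c\})$-majority winner. If one exists I place it immediately in front of the block~$\alternativeset'$, add it to~$\alternativeset'$, and iterate; if none exists (and $\alternativeset'\neq\alternativeset$), I report that no agenda makes~$p$ the successive winner. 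If the loop terminates with $\alternativeset'=\alternativeset$, I output the completed agenda.

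For correctness I argue in both directions. If the algorithm outputs an agenda $\agenda\colon a_1\agendapref\cdots\agendapref a_{m-1}\agendapref p$, then at the round in which $a_i$ was inserted the block was exactly $\{a_i,\ldots,a_{m-1},p\}$ and $a_i$ failed to be an $(\{a_i,\ldots,a_{m-1},p\})$-majority winner; so, running the successive procedure on $\agenda$, each $a_i$ in turn is not ranked first by a majority and is deleted, whence $p$ wins. If instead the algorithm reports failure, it stopped with a set $\alternativeset'\ni p$ with $\alternativeset\setminus\alternativeset'\neq\emptyset$ such that every \alternative~$c\in\alternativeset\setminus\alternativeset'$ is an $(\alternativeset'\cup\{c\})$-majority winner; the second part of Lemma~\ref{lem:claimone} then says that no \alternative of~$\alternativeset'$, in particular not~$p$, can be a successive winner.

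The one point that needs genuine care is that the particular eligible \alternative picked in a round must not matter, i.e.\ the greedy choice must be safe. I would prove this via a monotonicity observation: if $c$ is not an $(\alternativeset'\cup\{c\})$-majority winner and $\alternativeset'\subseteq\alternativeset''$, then $c$ is not an $(\alternativeset''\cup\{c\})$-majority winner, since the voters ranking~$c$ above all of~$\alternativeset''$ form a subset of those ranking~$c$ above all of~$\alternativeset'$. With this I would establish the invariant that, on a yes-instance, after every round the current block extends to an agenda under which $p$ wins. In the inductive step, given a valid completion $b_1\agendapref\cdots\agendapref b_k$ of the current block (meaning each $b_j$ is not a $(\{b_j,\ldots,b_k\}\cup\alternativeset')$-majority winner) and the \alternative~$c=b_\ell$ picked by the algorithm, the reordering $b_1,\ldots,b_{\ell-1},b_{\ell+1},\ldots,b_k$ (followed by $c$ and then the old block) is again such a completion: for $j<\ell$ the set of \alternatives behind and including $b_j$ is unchanged, and for $j>\ell$ it only gains~$\{c\}$, where monotonicity preserves the non-majority-winner property. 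Since the last element of any valid completion is always itself an eligible pick, the algorithm never gets stuck on a yes-instance, so reporting failure is correct.

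Finally, for the running time: there are at most~$m$ rounds. After an $O(nm)$ precomputation of the voters' rank arrays, I keep for every voter~$v_i$ and every \alternative~$c\notin\alternativeset'$ a Boolean recording whether $v_i$ ranks $c$ above all of~$\alternativeset'$, together with a counter of how many voters do so for each such~$c$. When an \alternative~$d$ joins~$\alternativeset'$, all these Booleans and counters are updated using a single rank comparison per voter and per \alternative, costing $O(nm)$ per round; an eligible~$c$ is one whose counter is at most~$(n-1)/2$ (recall that $n$ is odd) and can be found in $O(m)$ time. Altogether this gives $O(n\cdot m^2)$. The main obstacle is thus the safety of the greedy step (the exchange argument above); the remainder is routine bookkeeping, the only subtlety being to keep each round's update at $O(nm)$ rather than $O(nm^2)$ so as to meet the stated bound.
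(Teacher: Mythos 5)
Your proposal is correct and follows essentially the same back-to-front greedy construction as the paper, with the same use of Lemma~\ref{lem:claimone} for both the success and failure cases and an equivalent $O(n\cdot m^2)$ bookkeeping scheme. The only difference is your extra exchange/monotonicity argument for the safety of the greedy choice, which is sound but not actually needed: the second statement of Lemma~\ref{lem:claimone} already certifies that \emph{whenever} the algorithm gets stuck with some set~$\alternativeset'\ni p$, no member of~$\alternativeset'$ can win under \emph{any} agenda, so the ``no'' answer is correct regardless of which eligible alternatives were picked along the way.
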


\begin{proof}
  We first describe the algorithm and then analyze its correctness and its running time.
  
  Let $\alternativeset' \coloneqq \{p\}$ and let the agenda $\agenda$ consist of only one \alternative, namely~$p$.
  Execute the following steps.
  \begin{enumerate}[(1)]
    \item\label{stop} If $\alternativeset' = \alternativeset$, then we have constructed an agenda~$\agenda$ under which $p$ can win and we answer ``yes''.
    \item\label{everyone-is-majority} If $\alternativeset \setminus \alternativeset' \neq \emptyset$ 
    and if every alternative~$c\in \alternativeset \setminus \alternativeset'$ is an $\alternativeset' \cup \{c\}$-majority winner,
    then we stop and answer ``no''.
    \item\label{not-everyone-is-majority} Otherwise, for each alternative~$c\in \alternativeset \setminus \alternativeset'$ that is \emph{not} an $\alternativeset' \cup \{c\}$-majority winner,
    we extend the agenda~$\agenda$ by putting alternative~$c$ right in front of all alternatives in $\alternativeset'$ and 
    we let $\alternativeset' \coloneqq \alternativeset' \cup \{c\}$.
    Proceed with Step~(\ref{stop}).
  \end{enumerate}
  
  For the correctness, if Step~(\ref{everyone-is-majority}) ever applies, 
  then by the second statement in Lemma~\ref{lem:claimone}, no alternative in $\alternativeset'\ni p$ can win, implying that $p$ can never win.
  Thus, we can safely reply with ``no''.
  
  If Step~(\ref{everyone-is-majority}) never applies, we show that if $p$ is a successive winner of a profile restricted to the alternative set~$A'$,
  then it is also a successive winner of the profile that additionally contains a non-$(A'\cup \{c\})$-majority winner:
  let $c$ be an alternative in $\alternativeset \setminus \alternativeset'$ that is \emph{not} an $(\alternativeset' \cup \{c\})$-majority winner.
  Assume that $p$ is a successive winner under the current agenda~$\agenda$ for the profile~$\profile'$ restricted to the alternatives in $\alternativeset'$.
  Then, by the first statement in Lemma~\ref{lem:claimone}, 
  it follows that under every agenda that extends $c \agendapref \alternativeset'$,
  $c$ is not a successive winner. 
  This means that the procedure would delete $c$ and go on with the alternatives in $\alternativeset'$.
  By assumption, $p$ is a successive winner for profile $\profile'$ and agenda~$\agenda$.
  Therefore, in the profile restricted to the alternatives in $\alternativeset'\cup \{c\}$, 
  the agenda~$p \agendapref \agenda$ also makes $p$ win.

  We now come to the running time analysis.
  First, if Steps~(\ref{stop}) and (\ref{not-everyone-is-majority}) do not apply, 
  then Step~(\ref{everyone-is-majority}) applies which leads to termination.
  Second, in Step~(\ref{not-everyone-is-majority}),
  for every alternative $c\in \alternativeset\setminus \alternativeset'$,
  we check whether it is an $\alternativeset \cup \{c\}$-majority winner.
  This check can be done in $O(n)$~time:
  We maintain a list~$T$ of size $n$ that stores for each voter~$v$,
  the highest position of \alternative from $\alternativeset'$ ranked by $v$.
  We iterate over every voter~$v$ and compare the position~$v(c)$ of $c$ ranked by $v$ and the position~$T(v)$ stored by $T$ for voter~$v$.
  We count the number of times where $v(c) < T(v)$.
  If this number is smaller than $n/2$,
  then $c$ is not an $\alternativeset'\cup\{c\}$-majority winner; we add $c$ to $\alternativeset'$ 
  and we update the list~$T$ by changing the entry~$T(v)$ where $v(c)< T(v)$.

  Since we may execute Step~(\ref{not-everyone-is-majority}) at most $m$ times (at most $m$ alternatives may be added to $\alternativeset'$),
  the total running time is $O(n\cdot m^2)$.
\end{proof}

\subsection{Amendment procedure}
Controlling the amendment procedure is closely related to finding a Hamiltonian cycle in a strongly connected tournament. To see this, we first construct a \emph{majority graph} for the given preference profile (see the corresponding definition in Section~\ref{sec:definitions}).
Recall that we assume the number of \voters to be odd. 
The majority graph has $m$~vertices and ${m\choose 2}$~arcs and is indeed a tournament.
{F}rom the theory of directed graphs~\cite[Thm. 7]{HarMos1966},
we can conclude that every strongly connected tournament contains a Hamiltonian cycle.
Now, the crucial idea is to check whether the vertex that corresponds to $p$
belongs to a strongly connected component that has only out-going arcs.
\Alternative~$p$ can win under an appropriate agenda if and only if this is the case.
  
\begin{observation}[Theorem 7, \citep{HarMos1966}]
  \label{obs:strong_tournament->Hamiltonian_cycle}
  \obsstrongtournamenthasHamiltoniancycle
\end{observation}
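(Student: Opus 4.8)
\emph{Proof plan.}
The plan is to prove this by a \emph{longest-cycle extension} argument. Fix a strongly connected tournament $T\coloneqq(U,E)$. If $|U|=1$ the claim is trivial, and $|U|=2$ is impossible (a two-vertex tournament is not strongly connected), so assume $n\coloneqq|U|\ge 3$. Since every strongly connected digraph on at least two vertices contains a directed cycle, $T$ has a longest directed cycle $C\colon v_1\to v_2\to\cdots\to v_k\to v_1$; note $k\ge 3$, as tournaments contain no $2$-cycle. I want to show $k=n$, so I suppose for contradiction that $k<n$ and exhibit a directed cycle of length $k+1$.

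Call a vertex $v\in U\setminus C$ \emph{mixed} if it has both an in-neighbour and an out-neighbour on $C$. If some mixed vertex $v$ exists, then going around $C$ there is an arc $v_i\to v_{i+1}$ with $v_i\to v$ and $v\to v_{i+1}$ (a transition from an ``into-$v$'' cycle vertex to an ``out-of-$v$'' one, which must occur since both types are present); replacing the arc $v_iv_{i+1}$ by the path $v_i\to v\to v_{i+1}$ yields a cycle of length $k+1$, a contradiction. Otherwise, every vertex outside $C$ lies in $A\coloneqq\{v\in U\setminus C : v_t\to v \text{ for all } t\}$ or in $B\coloneqq\{v\in U\setminus C : v\to v_t \text{ for all } t\}$, and $A\cup B=U\setminus C\ne\emptyset$.

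It remains to derive a contradiction in this second case. First, $B\ne\emptyset$: otherwise no arc enters $C$ from outside, so $C$ is unreachable from the nonempty set $U\setminus C$, contradicting strong connectivity; symmetrically $A\ne\emptyset$. Second, there is an arc $a\to b$ with $a\in A$ and $b\in B$: otherwise every arc leaving $A$ stays inside $A$ (it cannot enter $C$, since $C$ dominates $A$, nor enter $B$, since all $A$–$B$ arcs would then go from $B$ to $A$), so $A$ cannot reach the nonempty set $C\cup B$, again contradicting strong connectivity. Now the closed walk $v_1\to v_2\to\cdots\to v_{k-1}\to a\to b\to v_1$ is a genuine cycle: $v_{k-1}\to a$ holds because $a\in A$, $a\to b$ by choice, $b\to v_1$ because $b\in B$, and all $k+1$ vertices are distinct. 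This contradicts the maximality of $C$. Hence $k=n$ and $C$ is a Hamiltonian cycle.

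The routine ingredients (no $2$-cycles, existence of a cycle in a strongly connected digraph, the insertion step for a mixed vertex) are standard. The one place that needs a little care is the second case: one must invoke strong connectivity twice — once to see $A,B\ne\emptyset$ and once to obtain an arc from $A$ to $B$ — and then realize that the new $(k+1)$-cycle has to be built by rerouting through \emph{both} an $A$-vertex and a $B$-vertex, since a single-vertex insertion is impossible precisely when no mixed vertex exists. That is the main (though still mild) obstacle.
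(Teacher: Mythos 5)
Your proof is correct. Note that the paper does not prove this observation at all: it is imported verbatim as Theorem~7 of Harary and Moser, and the paper only relies on that proof being constructive to extract an $O(m^3)$ algorithm for finding the Hamiltonian cycle. Your longest-cycle-extension argument is essentially the classical one from that source (and from Moon's strengthening): given a non-Hamiltonian cycle $C$, either some outside vertex is ``mixed'' and can be spliced into an arc of $C$, or the outside vertices split into a set $A$ dominated by $C$ and a set $B$ dominating $C$, and strong connectivity forces $A,B\neq\emptyset$ together with an arc from $A$ to $B$, which lets you reroute through both. All the steps check out: the existence of a transition arc $v_i\to v_{i+1}$ with $v_i\to v\to v_{i+1}$ for a mixed $v$, the two invocations of strong connectivity in the second case, the disjointness of $A$ and $B$, and the count of $k+1$ distinct vertices on the rerouted cycle $v_1\to\cdots\to v_{k-1}\to a\to b\to v_1$. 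One small bonus worth stating explicitly if you want your write-up to support the paper's algorithmic claim: each extension step is implementable in $O(m^2)$ time and is executed at most $m$ times, so your proof also delivers the $O(m^3)$ construction that the paper attributes to ``carefully examining the constructive proof.''
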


\noindent By carefully examining the constructive proof for Observation~\ref{obs:strong_tournament->Hamiltonian_cycle}, 
we can find a Hamiltonian cycle in $O(m+m^2 + m \cdot (m+m^2)) = O(m^3)$ time.

Utilizing the fact that the underlying undirected graph for a tournament is complete, 
we obtain the following:

\begin{observation}\label{lem:tournament-is-strong-components-dag}
  \obstournamentpartition
\end{observation}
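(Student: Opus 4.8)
The statement is the standard fact that the strongly connected components of a tournament form a transitive tournament (a linear order) under the ``condensation'' relation, phrased here in terms of disjoint strong subtournaments. I would prove it by taking the $T_i$ to be exactly the strongly connected components of the given tournament~$T$. First I would recall that in \emph{any} directed graph the strongly connected components partition the vertex set, and the condensation graph (one vertex per component, an arc $C \to C'$ whenever some arc goes from $C$ to $C'$) is acyclic. The content of the observation is that here this condensation is moreover a tournament on the components, because the underlying undirected graph of~$T$ is complete.

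The key steps, in order: (1) Let $T_1,\dots,T_t$ be the strongly connected components of~$T$, each $T_i = (U_i, A_i)$ with $A_i$ the arcs of~$T$ induced on~$U_i$; each is trivially a strong subtournament (possibly a single vertex). (2) For Part~1, take two distinct components $T_i, T_j$ and suppose there are arcs in both directions between them, say $x \to y$ with $x \in U_i, y \in U_j$ and $x' \to y'$ with $x' \in U_j$, $y' \in U_i$. Since $T_i$ is strongly connected there is a path from $y'$ to $x$ inside $U_i$, and since $T_j$ is strongly connected there is a path from $y$ to $x'$ inside $U_j$; concatenating $x \to y \rightsquigarrow x' \to y' \rightsquigarrow x$ yields a cycle through vertices of both $U_i$ and $U_j$, contradicting maximality of the strong components. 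Hence between any two components all arcs go the same way; since $T$ is a tournament there is at least one arc between every pair of vertices, so at least one arc between $T_i$ and $T_j$, and therefore the direction is well-defined. (3) For Part~2, fix one representative vertex $r_i \in U_i$ per component and consider the digraph $D$ on $\{r_1,\dots,r_t\}$ where $r_i \to r_j$ iff all arcs go from $U_i$ to $U_j$. By Part~1 this is a tournament on $t$ vertices. To see it is acyclic, suppose $r_{i_1} \to r_{i_2} \to \cdots \to r_{i_k} \to r_{i_1}$; by the definition of the arc relation this gives actual arcs of~$T$ between the corresponding components, and using strong connectivity inside each $U_{i_\ell}$ to route between the two ``ports'' of each component, we again build a cycle in~$T$ visiting several distinct components, contradicting that each $T_i$ is a \emph{maximal} strong subgraph. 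Hence $D$ is a transitive tournament, in particular a DAG, which is exactly what deleting all but one vertex from each $T_i$ produces.

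I do not expect a serious obstacle here; the only point requiring a little care is being precise that the $T_i$ are the \emph{maximal} strongly connected subgraphs, since both acyclicity of the condensation and the ``all arcs one way'' property rely essentially on maximality (without it, e.g., one could take a single vertex from a larger strong component and the claims fail). So the writeup should emphasize maximality and the standard path-concatenation argument; completeness of the underlying undirected graph is invoked only to upgrade the condensation DAG to a condensation tournament, which is what makes the ordering of the $T_i$ a genuine linear order rather than merely a partial order.
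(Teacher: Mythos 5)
Your proof is correct. The paper states this observation without any proof of its own (it is presented as following from the standard fact that the strongly connected components of a digraph condense to a DAG, upgraded to a transitive tournament because the underlying undirected graph of a tournament is complete), and your argument is exactly the standard condensation argument the paper implicitly relies on, with the right emphasis on maximality of the components.
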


From this observation, we can derive the next theorem.
Note that \citet{Mil1977} already characterized the set of alternatives that can become an amendment winner under an appropriate agenda.
Our theorem strengthens this result by giving a polynomial-time algorithm.

\begin{theorem}\label{thm:control-amendment-p}
  \thmagendacontrolamendment
\end{theorem}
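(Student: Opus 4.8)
The plan is to reduce \probAgendaControl for the amendment procedure to finding a Hamiltonian cycle in the ``source'' strongly connected component of the majority graph, exactly along the lines sketched just before the theorem. First I would build the majority graph~$G$ for~$\profile$; since the number of \voters is odd, $G$ is a tournament, and it can be computed in $O(n\cdot m^2)$ time. Using Observation~\ref{lem:tournament-is-strong-components-dag} I would decompose~$G$ into its strongly connected subtournaments~$T_1,\ldots,T_t$, which by conditions~1 and~2 are linearly ordered so that for $i<j$ every arc between~$T_i$ and~$T_j$ points from~$T_i$ to~$T_j$; in particular~$T_1$ is the unique component all of whose inter-component arcs are out-going (this~$T_1$ is exactly Miller's Condorcet set / top cycle, which already tells us what the answer ``should'' be). The algorithm then answers ``yes'' if and only if the vertex of~$p$ lies in~$T_1$, and in that case it outputs an explicit agenda.

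For the direction ``$p\notin T_1$ implies no agenda works'', fix any agenda~$\agenda$ and let~$d$ be the \emph{last} \alternative of~$T_1$ in~$\agenda$. I would argue that the running amendment winner right after~$d$'s round lies in~$T_1$: the previous running winner is either already in~$T_1$ (and then the survivor of it against~$d$ is again in~$T_1$, as both lie in the tournament~$T_1$) or outside~$T_1$ (and then~$d$ beats it, so the survivor is~$d\in T_1$). Every \alternative positioned after~$d$ in~$\agenda$ lies outside~$T_1$ and is therefore beaten by this running winner, so it survives to the end; hence the amendment winner lies in~$T_1$ and cannot be~$p$.

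For the direction ``$p\in T_1$ implies some agenda works'', note that~$T_1$ is a strongly connected tournament, so by Observation~\ref{obs:strong_tournament->Hamiltonian_cycle} it has a Hamiltonian cycle, which after rotation reads $p\to v_2\to v_3\to\cdots\to v_k\to p$ with $k=|T_1|$. I would take the agenda that first lists the \alternatives outside~$T_1$ in an arbitrary order and then appends $\seq{v_k, v_{k-1},\ldots, v_2, p}$. Tracing the procedure: after the ``outside'' block the running winner is some $d\notin T_1$ (if $\alternativeset=T_1$ this block is empty and $v_k$ is simply the first-round winner); then~$v_k$ beats~$d$, so the running winner becomes~$v_k$; then~$v_{k-1}$ beats~$v_k$, making it~$v_{k-1}$; walking back along the cycle, the running winner passes through $v_{k-1},v_{k-2},\ldots,v_2$; finally~$p$ beats~$v_2$, so~$p$ is the amendment winner, as desired.

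For the running time: building~$G$ costs $O(n\cdot m^2)$; computing the strongly connected components of an $m$-vertex tournament (with $\binom{m}{2}$ arcs) and locating~$T_1$ and~$p$ costs $O(m^2)$; extracting a Hamiltonian cycle of~$T_1$ costs $O(m^3)$ by the constructive proof behind Observation~\ref{obs:strong_tournament->Hamiltonian_cycle}; assembling the agenda costs $O(m)$. The total is $O(n\cdot m^2+m^3)$. I expect the main obstacle to be the negative direction---arguing cleanly that under \emph{every} agenda the amendment winner must stay inside the source component~$T_1$---together with making sure the Hamiltonian cycle of~$T_1$ is rotated with the correct orientation so that the back-to-front listing $v_k,\ldots,v_2,p$ actually funnels the running winner all the way to~$p$.
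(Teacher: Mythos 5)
Your proposal is correct and follows essentially the same route as the paper: decompose the majority tournament into topologically ordered strongly connected subtournaments, answer ``yes'' exactly when $p$ lies in the top component, and build the agenda by reversing a Hamiltonian cycle of that component so that it ends at $p$, with the identical $O(n\cdot m^2+m^3)$ accounting. Your explicit induction for the negative direction (the running winner enters $T_1$ by the round of the last $T_1$-alternative and never leaves) is a welcome elaboration of a step the paper only states as an observation.
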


\begin{proof}
  By Observation~\ref{lem:tournament-is-strong-components-dag},
  every tournament consists of strongly connected subtournaments which can be ordered by topological sorting.
  Now, observe that only the \alternatives corresponding to the
  vertices from the top-most strongly connected subtournament can become an amendment winner.
  In other words, if the vertex corresponding to the preferred
  \alternative~$p$ does not belong to the top-most subtournament, then $p$ can never win. 
  By carefully examining the constructive proof for Observation~\ref{obs:strong_tournament->Hamiltonian_cycle}, 
  we can find a Hamiltonian cycle for a strongly connected tournament in $O(m^3)$ time.  
  Now, we construct a sequence~$L_{\textrm{ver}}$ of vertices by reversing the
  orientation of the Hamiltonian cycle, starting with the predecessor of the vertex~$u_p$ corresponding to $p$
  and ending at the vertex~$u_p$, and let $L_{\textrm{alt}}$ be the order of the
  alternatives corresponding to $L_{\textrm{ver}}$. 
  We can verify that $p$ is an amendment winner for every agenda that extends order~$L_{\textrm{alt}}$.

  Thus, the problem is reduced to finding strongly connected
  subtournaments of the majority graph for the given preference
  profile:
  If the vertex corresponding to $p$ is in the top-most subtournament,
  then construct an arbitrary but fixed order that extends
  $L_{\textrm{alt}}$ and we answer ``yes''.
  Otherwise we answer ``no''.

  Now, we come to the running time.
  Constructing a majority graph for a profile takes $O(n\cdot m^2)$ time.
  Partitioning the majority graph into strongly connected components
  takes $O(m+m^2)$ time and checking whether the vertex corresponding to
  $p$ belongs to the top-most component takes $O(m)$ time.
  Finally, finding a Hamiltonian cycle in a strongly connected
  tournament takes $O(m^3)$ time.
  Thus, in total, the approach to controlling the amendment procedure
  takes $O(n\cdot m^2 + m^3)$ time.
\end{proof}

We close this section with two remarks.
First, the approach for the successive procedure actually works for both odd as well as even number of voters.
Second, our approach for the amendment procedure
can be extended to the case where the number of voters is even. 
There, \alternative~$p$ is a winner if and only if no strongly connected component ``dominates'' the strongly connected component that contains $u_p$.

\section{Manipulation}
\label{sec:manipulation}

In this section, we consider the question of how difficult it is computationally 
for \voters to vote strategically to ensure
a given outcome supposing that the other \voters vote sincerely.

\probDef{\probManipulation}
{A profile~$\profile\coloneqq\profiletuple$ with linear preference orders,
  a non-negative integer~$k\in \mathds{N}$,
  a preferred \alternative~$p\in \alternativeset$, 
  and an agenda~$\agenda$ for $\alternativeset$.}
{Is it possible to add a coalition of (that is, a set of) $k$~voters such that $p$ wins under agenda~$\agenda$?}

We find that deciding whether a manipulation is successful is polynomial-time solvable
for both the successive procedure and the amendment procedure. 
However,
our approach to deciding whether the amendment procedure can be 
successfully manipulated is asymptotically
more complex than our approach to deciding the same question 
for the successive procedure.

First, we observe that the manipulators can basically vote in the same way.

\begin{observation}\label{obs:all-manipulators-vote-the-same}
  \obsmanipulatorsvotethesame
\end{observation}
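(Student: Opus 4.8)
The plan is to treat the two procedures separately, exploiting that in both cases the only effect a coalition ballot has on the outcome is through a handful of head-to-head (or ranked-first) majority counts, each of which is favoured by placing $p$ as high as possible. For the successive procedure I would argue directly: suppose a successful (weighted) manipulation makes $p$ the successive winner under agenda $\agenda$, and say $p$ sits in position $j$ of $\agenda$. By Definition~\ref{def:successive}, $p$ wins precisely when, for every alternative in a position $i<j$, no strict (weighted) majority ranks it above all alternatives behind it in $\agenda$, while a strict majority does rank $p$ above all alternatives behind $p$. Now replace every coalition ballot by the common ballot that ranks $p$ first and the remaining alternatives in an arbitrary fixed order. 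A manipulator ranking $p$ on top contributes its full weight to the ``$p$ beats everything behind it'' count and \emph{zero} to every ``$a_i$ beats everything behind it'' count with $i<j$ (because $p$ is ranked above each such $a_i$). Hence the decisive count for $p$ can only grow and each obstructing count can only shrink relative to the original manipulation, so $p$ still wins; and all coalition ballots are now identical.

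For the amendment procedure the analogous one-shot substitution fails, and this is the main obstacle: whether $p$ wins depends on the running winner $g$ produced by the rounds that precede $p$'s entry, and that running winner is a function of the coalition ballots restricted to the alternatives lying before $p$ on the agenda — copying one manipulator can change $g$ to an alternative that $p$ does not beat. I would handle this by strong induction on the number $m$ of alternatives (the statement of the observation being the induction hypothesis, of which only the amendment part for fewer alternatives is needed). Fix a successful manipulation making $p$ the amendment winner under an agenda in which $p$ occupies position $j$. If $j=1$, then by Definition~\ref{def:amendment} $p$ wins iff a strict majority prefers $p$ to every other alternative, and the uniform ``$p$ on top'' ballot works exactly as in the successive case, so assume $j\ge 2$.

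The key structural observation is that rounds $1,\dots,j-1$ of the amendment procedure are literally the execution of the amendment procedure on the sub-instance consisting of the alternatives before $p$, with the induced sub-agenda and with every ballot restricted to those alternatives; let $g$ be its amendment winner. Then $p$ wins the full instance iff a strict (weighted) majority prefers $p$ to $g$ and a strict majority prefers $p$ to each alternative behind $p$ on the agenda. Thus the coalition's restricted ballots constitute a successful manipulation of this smaller amendment instance with preferred alternative $g$; since it has fewer than $m$ alternatives, the induction hypothesis yields a single ballot $\rho$ on the alternatives before $p$ which, cast by all manipulators, still makes $g$ the amendment winner of the sub-instance. Finally let every coalition member cast the ballot ``$p$ first, then $\rho$ on the alternatives before $p$, then the alternatives behind $p$ in any fixed order''. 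Restricted to the alternatives before $p$ these ballots equal $\rho$, so the running winner entering $p$'s round is again $g$; and since every manipulator now ranks $p$ on top, the (weighted) support of $p$ against $g$, and against each alternative behind $p$, is at least what it was in the original manipulation, hence still a strict majority. Therefore $p$ is the overall amendment winner and all coalition ballots coincide. The base case $m=1$ is trivial, and the weighted variant is obtained verbatim by reading ``total weight of the relevant voters'' for ``number of such voters'' throughout, with ties broken as fixed globally.
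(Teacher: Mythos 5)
Your proof is correct, and the successive-procedure half coincides with the paper's (put $p$ on top; this can only increase $p$'s first-place count and only decrease the count of every alternative ahead of $p$ on the agenda). For the amendment procedure, however, you take a genuinely different route: the paper argues in one shot, extracting the sequence $x_1,\dots,x_s$ of distinct round winners from the original manipulated profile and letting every manipulator cast the single ballot $x_s \succ x_{s-1} \succ \dots \succ x_1 \succ (\text{rest})$, then checking directly that every round's outcome is preserved because each pairwise contest the running winner must win is now supported by the full coalition. You instead induct on the number of alternatives, observing that the rounds before $p$'s entry are exactly the amendment procedure run on the prefix sub-instance with preferred alternative $g$ (the running winner entering $p$'s round), invoking the statement for that smaller instance to get a uniform prefix ballot $\rho$, and prepending $p$. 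Unrolling your recursion essentially reproduces the paper's ballot (the reverse chain of round winners with $p$ on top), so the underlying combinatorial idea is the same; your version is a bit heavier in machinery but makes explicit \emph{why} the construction works (the prefix is itself an amendment election and the round-winner chain must be preserved), whereas the paper's is shorter and yields the witness ballot immediately. Both handle weights identically by replacing counts with weight sums.
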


\begin{proof}
  For the successive procedure, 
  if there is a successful manipulation for the preferred alternative~$p$, 
  then requiring all manipulators to rank $p$ in the first position and to rank the other \alternatives in an arbitrary but fixed order also makes $p$ win.

  Now, 
  let $\profile'$ be the manipulated profile, that is, the original profile plus the manipulators.
  Let $X\coloneqq \{x_1, x_2, \ldots, x_s\}$ consist of all alternatives~$x_i \in \alternativeset$ such that there is a round index~$r_i$ where $x_i$ is an $r_i$th-round amendment winner in $\profile'$.
  Assume without loss of generality that for every $2 \le i \le s$,
  alternative~$x_{i}$ beats $x_{i-1}$.
  If $p$ is an amendment winner in $\profile'$,
  then $p=x_s$.
  Furthermore, we can verify that each $x_i$ is an $r_i$th-round amendment
  winner in the original profile plus the $k$ manipulators who all have preference order~$x_s \pref x_{s-1} \pref \ldots \pref x_{1} \pref \setseq{\alternativeset\setminus X}$.
  This implies that $x_s=p$ is an amendment winner.
\end{proof}

We also observe that the best way of manipulating the successive procedure is to let the manipulators rank their most preferred alternative at the first position.

\begin{theorem}\label{thm:manipulation-successive-p}
  \thmmanipulationsuccessive
\end{theorem}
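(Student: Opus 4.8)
The plan is to exploit Observation~\ref{obs:all-manipulators-vote-the-same}: we may assume all $k$ manipulators cast the same preference order, and in fact (as already argued in the proof of that observation) that they rank the preferred \alternative~$p$ first. So the only freedom left is how the manipulators rank $\alternativeset\setminus\{p\}$ below $p$, and I claim even that freedom is essentially irrelevant. First I would observe that the successive procedure, run on the given agenda~$\agenda$, processes the \alternatives before $p$ one at a time; $p$ wins if and only if (i)~every \alternative~$c$ appearing before $p$ in $\agenda$ fails to be an $(\Succ{}{c}\cup\{c\})$-majority winner in the manipulated profile (so that $c$ is rejected), and (ii)~$p$ is a majority winner against everything behind it (i.e.\ $p$ beats all \alternatives after it in $\agenda$). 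Condition~(ii) depends only on how many manipulators there are, not on their order, once $p$ is ranked first by all of them: adding $k$ voters who all rank $p$ on top adds $k$ to the margin of $p$ over every other \alternative, which is the most any set of $k$ additional voters can do for $p$.

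The key step is to show condition~(i) is likewise monotone and order-independent in the right way. For an \alternative~$c$ before $p$ in $\agenda$, let $S_c$ be the set of \alternatives weakly behind $c$ in $\agenda$ (which includes $p$ and all \alternatives behind $p$). Whether $c$ is an $S_c$-majority winner in the manipulated profile depends on how many of the manipulators rank $c$ above all of $S_c\setminus\{c\}$. Since every manipulator ranks $p\in S_c\setminus\{c\}$ first, \emph{no} manipulator ranks $c$ above all of $S_c\setminus\{c\}$; hence the count of such voters is exactly the count among the \emph{original} voters, independent of $k$ and of the manipulators' internal order. Therefore the set of \alternatives before $p$ that get rejected is determined purely by the original profile and the agenda, not by the manipulation at all. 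The upshot: $p$ can be made a successive winner by adding $k$ manipulators if and only if (a)~no \alternative~$c$ strictly before $p$ in $\agenda$ is an $S_c$-majority winner \emph{in the original profile}, and (b)~for every \alternative~$d$ strictly behind $p$ in $\agenda$, a majority of the $n+k$ voters prefers $p$ to $d$, i.e.\ $\#\{v\in V: p\pref_v d\}+k > (n+k)/2$, equivalently $2\cdot\#\{v\in V:p\pref_v d\}+k>n$.

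Both conditions are checkable directly. For (a), scan the agenda from the position of $p$ toward the front, maintaining for each original voter the best (highest) position of an \alternative in the current suffix $S_c$; when we move to the next \alternative~$c$ in front, we can test in $O(n)$ time whether $c$ beats that threshold for a majority of voters, exactly as in the running-time argument of Theorem~\ref{thm:control-successive-p}; over all \alternatives this is $O(n\cdot m)$. For (b), the worst constraint is the \alternative~$d$ behind $p$ minimizing $\#\{v:p\pref_v d\}$; compute all these counts in $O(n\cdot m)$ time and check the inequality against the given~$k$. If both hold we answer ``yes'', otherwise ``no''. Total time $O(n\cdot m)$.

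I do not expect a serious obstacle here; the only point needing care is the clean statement that ranking $p$ first makes every manipulator simultaneously useless for making any pre-$p$ \alternative a majority winner and maximally useful for $p$'s own majority contests — this is what collapses the combinatorics, and it has to be stated for the exact ``majority winner against a set'' notion used in Definition~\ref{def:successive}, with the odd-$n$/tie-breaking convention applied consistently to the $n+k$ voters.
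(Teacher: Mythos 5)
Your overall strategy---fix all manipulators to rank $p$ first (justified by Observation~\ref{obs:all-manipulators-vote-the-same}) and then decide by inspecting the resulting profile---is the same as the paper's, which simply simulates the successive procedure on the profile augmented by $\min(k,n+1)$ identical votes with $p$ on top. However, the explicit characterization you substitute for that simulation is wrong in both directions.

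First, your condition~(a) tests whether some $c$ before $p$ is an $S_c$-majority winner \emph{in the original profile}, i.e., against the quota $n/2$; but the relevant quota in the manipulated profile is $(n+k)/2$. The count of $c$'s supporters stays fixed (as you correctly argue), while the quota rises, so adding manipulators can by itself demote $c$ from majority-winner status. Concretely, with $n=3$, agenda $a \agendapref p$, two voters ranking $a \pref p$ and $k=2$, your test violates~(a) and answers ``no,'' yet after adding two $p$-first votes $a$ has only $2 \le 5/2$ supporters, is rejected, and $p$ wins. Second, your condition~(b) checks the \emph{pairwise} majorities of $p$ over each $d$ behind it, whereas Definition~\ref{def:successive} requires a single majority of voters each of whom prefers $p$ to \emph{all} alternatives behind it simultaneously; these sets of voters need not coincide. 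Concretely, with $n=5$, $k=2$, agenda $p \agendapref d_1 \agendapref d_2$, two voters $d_2 \pref p \pref d_1$, two voters $d_1 \pref p \pref d_2$, and one voter $p \pref d_1 \pref d_2$: both pairwise counts are $5 > 7/2$, so you answer ``yes,'' but only $1+2=3 \le 7/2$ voters rank $p$ above both $d_1$ and $d_2$, so $p$ is rejected in its round and loses. Both defects disappear if you drop the closed-form characterization and instead run the successive procedure itself (with the suffix-bookkeeping you already describe) on the $(n+k)$-voter manipulated profile, capping $k$ at $n+1$---which is exactly the paper's argument.
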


\begin{proof}
  In the proof of Observation~\ref{obs:all-manipulators-vote-the-same},
  we can observe that if a coalition of $k$~\voters can manipulate the successive procedure,
  then by ranking alternative~$p$ in the first position and
  the other \alternatives in an arbitrary but fixed order, $p$ must
  also win. This leads to a linear-time $O((k+n)\cdot m)$ algorithm:
  Let the coalition all vote $p\pref \setseq{\alternativeset\setminus
    \{p\}}$, 
  and check whether $p$
  may win the successive procedure.
  Since a successful manipulation is always possible when $k \ge n+1$,
  the linear running time is indeed $O(n\cdot m)$.
\end{proof}

For the successive procedure, 
how the manipulators should vote does not depend on the sequence of the alternatives in the agenda.
For the amendment procedure, however, a successful manipulation depends greatly on the agenda.
To formalize this idea, we need one more notion:

\begin{definition}
  Let $(\profile\coloneqq \profiletuple, k, p, \agenda)$ be an instance of \probManipulation under the amendment procedure.
  We call an alternative~$c \in \alternativeset$ an
  \emph{\nthmwinner{i}}  
  if adding a coalition of $k$ additional voters to the
  original profile~$\profile$ makes this \alternative the \nthawinner{i}
  under agenda~$\agenda$.
\end{definition}

The crucial point is that an alternative that is \emph{not} an \nthmwinner{i} can never become a \nthmwinner{j} with $j>i$.
Further, it is easy to check whether an \nthmwinner{i} alternative can survive as a manipulated winner of a later round as the following shows.

\begin{lemma}\label{lem:manipulated-winner}
  Let $b$ be the $i$th \alternative in the agenda~$\agenda$, $2\le i\le m$.
  Then,
  \begin{enumerate}
    \item\label{cond:b-ith-mwinner} $b$ is an \nthmwinner{i} if and only if 
    there is an \nthmwinner{(i-1)} $c$ such that
    requiring all manipulators to prefer~$b$ to $c$ makes $b$ beat $c$.  
    \item\label{cond:c-ith-mwinner} an \nthmwinner{(i-1)} $c$ is also an \nthmwinner{i} 
    if and only if requiring all manipulators to prefer~$c$ to $b$ makes $c$ beat $b$.  
\end{enumerate}
\end{lemma}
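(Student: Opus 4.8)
\medskip
\noindent\textbf{Proof plan.}
The plan is to reduce both statements to two elementary facts about the amendment procedure run on the fixed agenda~$\agenda$ whose $i$th \alternative is~$b$. The first is a \emph{decoupling} observation: rounds $1,\dots,i-1$ involve only the first $i-1$ \alternatives of~$\agenda$, and $b$ is not among them, so if two coalitions of manipulators induce the same linear order on $\alternativeset\setminus\{b\}$ but rank $b$ differently, then all head-to-head outcomes among the first $i-1$ \alternatives---hence all amendment winners of rounds $1,\dots,i-1$---are the same under both coalitions. The second is a \emph{monotonicity} observation: with the original \voters fixed, the number of \voters preferring $b$ to $c$ is maximised by forcing all $k$ manipulators to prefer $b$ to $c$, so if \emph{any} ranking of the manipulators makes $b$ beat $c$, then so does the ranking in which every manipulator prefers $b$ to $c$ (and symmetrically with $b$ and $c$ swapped). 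Note also that an \nthmwinner{(i-1)} is, by the definition of the amendment procedure, one of the first $i-1$ \alternatives of~$\agenda$ and is therefore distinct from~$b$.

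For statement~\ref{cond:b-ith-mwinner} I would argue both directions directly. ($\Rightarrow$) If $b$ is an \nthmwinner{i}, fix a coalition of $k$ \voters witnessing this and let $c$ be the \nthawinner{(i-1)} it produces; then $c$ is an \nthmwinner{(i-1)}, and since $b$ is the resulting \nthawinner{i} it beats $c$ in round~$i$, so by monotonicity requiring all manipulators to prefer $b$ to $c$ still makes $b$ beat $c$. ($\Leftarrow$) Given such a $c$ together with a coalition~$C$ of $k$ \voters making $c$ the \nthawinner{(i-1)}, obtain $C'$ from $C$ by lifting $b$ to the top of every manipulator's ranking, leaving the order on $\alternativeset\setminus\{b\}$ intact. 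By decoupling, $c$ is still the \nthawinner{(i-1)} under~$C'$; as every manipulator in $C'$ prefers $b$ to $c$, the hypothesis makes $b$ beat $c$ in round~$i$, so $b$ is the \nthawinner{i} under~$C'$ and hence an \nthmwinner{i}. The coalition keeps size exactly~$k$, since we only permute the manipulators' rankings.

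Statement~\ref{cond:c-ith-mwinner} is handled symmetrically. ($\Rightarrow$) If the \nthmwinner{(i-1)} $c$ is also an \nthmwinner{i}, a witnessing coalition makes $c$ the \nthawinner{i}; since $c\neq b$, the round-$i$ comparison is between $b$ and $c$, so $c$ was already the \nthawinner{(i-1)} and beat $b$ in round~$i$, and monotonicity yields that requiring all manipulators to prefer $c$ to $b$ makes $c$ beat $b$. ($\Leftarrow$) Take a coalition~$C$ making $c$ the \nthawinner{(i-1)} and form $C'$ by moving $b$ just below $c$ in every manipulator's ranking; by decoupling $c$ is still the \nthawinner{(i-1)}, the hypothesis makes $c$ beat $b$ in round~$i$, and therefore $c$ is the \nthawinner{i} under~$C'$.

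The one point that needs care---the ``main obstacle'', such as it is---is stating the decoupling observation precisely: deleting $b$ from a linear order and reinserting it elsewhere does not change the relative order of any pair of the remaining \alternatives, so the majority relation restricted to the first $i-1$ \alternatives of~$\agenda$ (and thus the amendment winners of rounds $1,\dots,i-1$) is independent of where the manipulators place~$b$. The monotonicity observation is immediate, and nothing in the argument needs the full strength of Observation~\ref{obs:all-manipulators-vote-the-same}.
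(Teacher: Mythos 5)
Your proposal is correct and follows essentially the same route as the paper: the paper's ``if'' direction likewise extends the manipulators' orders on the first $i-1$ alternatives by the single relation $b\pref c$ (your decoupling observation), and its ``only if'' direction builds the canonical order $b \pref x_s \pref \cdots \pref x_1 \pref \cdots$, which is just an explicit implementation of your monotonicity step. The only cosmetic difference is that you spell out both statements and isolate the two key facts as named observations, whereas the paper proves only the first statement and dismisses the second as analogous.
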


\begin{proof}
  We only show the first statement as the second one can be shown analogously.
  For the ``only if'' part, assume that $b$ is an \nthmwinner{i} and let $\profile'$ be the manipulated profile.
  Now, we rename and enumerate all \nthawinner{j}s in $\profile'$ with $j \le i-1$ as
  $x_1$, $x_2$, $\ldots$, $x_{s}$ such that for each $1\le \ell \le s-1$,
  alternative~$x_{\ell+1}$ beats $x_{\ell}$ in $\profile'$.
  By definition, $x_{s}$ is the \nthawinner{(i-1)}.
  Now observe that
  $b$ is an \nthawinner{i} in the manipulated profile, where every manipulator has preference order $b \pref x_{s}\pref x_{s-1} \pref \ldots \pref x_1 \pref \setseq{\alternativeset\setminus X}$.
  
  For the ``if'' part, let $c$ be an \nthmwinner{(i-1)} and consider the corresponding manipulators' preference orders~$\pref'_1, \pref'_2, \ldots, \pref'_k$ that only rank the first $(i-1)$ alternatives in the agenda.
  If requiring every manipulator to prefer $b$ to $c$ can make $b$ beat $c$, 
  then at round $i$, 
  $b$ will survive as the \nthawinner{i} when every manipulator~$i$ has a preference order extending~$\pref'_i \cup \{b \pref c\}$.  
  This implies that $b$ is an \nthmwinner{i}.
\end{proof}

By the above lemma, we can indeed compute in quadratic time all \alternatives that can become an amendment winner by adding $k$ additional voters to the profile,
and we can compute the corresponding coalition for each of these \alternatives. 

\begin{theorem}\label{thm:manipulation-amendment-p}
  \thmmanipulationamendment
\end{theorem}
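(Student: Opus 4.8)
The plan is to make a single sweep through the agenda~$\agenda$, maintaining for each round~$i$ the set~$S_i\subseteq\alternativeset$ of \emph{all} \nthmwinner{i}s. By Observation~\ref{obs:all-manipulators-vote-the-same} we may assume that the $k$~manipulators all submit the same linear order, so the only primitive needed is the test ``$x$ can be made to beat~$y$'', which holds precisely when the number of original \voters preferring~$x$ to~$y$, increased by~$k$, is a (tie-broken) strict majority among the $n+k$~\voters. One preprocessing pass computes all head-to-head counts of the input profile (the weighted majority graph of Definition~\ref{def:majority graph}) in $O(n\cdot m^2)$ time, after which every such test costs $O(1)$. Moreover, if $k\ge n+1$ then letting every manipulator rank~$p$ first makes~$p$ a Condorcet winner of the manipulated profile, hence the amendment winner under every agenda, so in that case I would answer ``yes'' at once and otherwise assume $k\le n$.

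The recursion itself is exactly Lemma~\ref{lem:manipulated-winner}. Since by Definition~\ref{def:amendment} the first-round amendment winner is the first \alternative on~$\agenda$ regardless of the manipulators, I would initialize $S_1\coloneqq\{a\}$ with $a$ that \alternative. For $i=2,\ldots,m$, let $b$ be the $i$th \alternative on~$\agenda$ (note that $b\notin S_{i-1}$, since $b$ takes no part in the first $i-1$ rounds). Then $S_i$ is obtained by (i)~retaining every $c\in S_{i-1}$ for which requiring the manipulators to prefer~$c$ to~$b$ makes~$c$ beat~$b$ (the second statement of Lemma~\ref{lem:manipulated-winner}), and (ii)~inserting~$b$ if there is some $c\in S_{i-1}$ for which requiring the manipulators to prefer~$b$ to~$c$ makes~$b$ beat~$c$ (the first statement of Lemma~\ref{lem:manipulated-winner}). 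The instance is a ``yes''-instance if and only if $p\in S_m$. To also output a witnessing coalition when $p\in S_m$, I would attach to each element of~$S_i$ a pointer to the element of~$S_{i-1}$ that certified its membership; tracing these pointers back from~$p$ yields \alternatives $x_1,\ldots,x_s$ with $x_s=p$, and, exactly as in the proofs of Observation~\ref{obs:all-manipulators-vote-the-same} and Lemma~\ref{lem:manipulated-winner}, letting every manipulator vote $x_s\pref x_{s-1}\pref\ldots\pref x_1\pref\setseq{\alternativeset\setminus\{x_1,\ldots,x_s\}}$ makes~$p$ the amendment winner.

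Correctness would be an induction on~$i$ with the statement ``$S_i$ equals the set of \nthmwinner{i}s''; the inductive step, in both directions, is precisely Lemma~\ref{lem:manipulated-winner}, whose ``if'' and ``only if'' parts already quantify existentially over the manipulators, so nothing beyond ``which \alternatives can be \nthmwinner{(i-1)}s'' needs to be carried forward. For the running time, preprocessing is $O(n\cdot m^2)$; since $|S_{i-1}|\le i-1$, iteration~$i$ of the sweep performs $O(i)$ tests of cost $O(1)$ each, which sum to $O(m^2)$; and recovering a coalition costs $O(m)$. The total is $O(n\cdot m^2)$.

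The step I expect to need the most care --- though it is more a matter of careful phrasing than a genuine obstacle --- is the soundness of this layer-by-layer recursion: one must be sure that a manipulator profile witnessing that $c$ is an \nthmwinner{(i-1)} can always be extended to one that, at round~$i$, additionally forces the comparison between~$b$ and~$c$ to go the desired way. This is exactly what the already-proven Lemma~\ref{lem:manipulated-winner} provides, precisely because the $i$th agenda \alternative~$b$ is absent from rounds $1,\ldots,i-1$ and, at round~$i$, only its relation to the \emph{single} surviving \alternative~$c$ is relevant. A minor secondary issue is that $n+k$ may be even, which is absorbed by the fixed tie-breaking convention adopted throughout the paper.
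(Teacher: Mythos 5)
Your proposal is correct and follows essentially the same route as the paper: the paper's Algorithm~\ref{alg:manipulation-amendment} likewise maintains the set~$W_i$ of \nthmwinner{i}s, updates it via the two statements of Lemma~\ref{lem:manipulated-winner}, precomputes the pairwise ``can $x$ be made to beat $y$'' table in $O((k+n)\cdot m^2)$ time, and invokes the $k\ge n+1$ shortcut to bound the total by $O(n\cdot m^2)$. Your pointer-based recovery of the witnessing coalition matches the paper's explicit construction of the orders~$\pref^c_i$.
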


  \begin{algorithm}[t!]
    \SetCommentSty{\color{gray}}
    \SetAlgoVlined
    \SetKwInput{Input}{Input}
    \SetKwBlock{Block}{}{}

    \SetKw{KwNo}{``no''}
    \SetKw{KwYes}{``yes''}
    \SetKw{KwAnd}{and}

    \Input{
      $(\profile = \profiletuple, k, p, \agenda)$ \comm{---  an instance of \probManipulation}
    }
    Set $W_1 \coloneqq \{\text{the first alternative in the agenda~} \agenda\}$\\
    Set $\pref^{b}_1 \coloneqq \emptyset$\\
    \ForEach{round $i \in \{2, 3, \ldots, m\}$}
    {
      \label{alg1:loop-begin}
      Set $W_i \coloneqq \emptyset$\\
      Set $b\coloneqq$ the $i$th \alternative in $\agenda$\\
      \ForEach{\nthmwinner{(i-1)}~$c\in W_{i-1}$}
      { 
        \If{$b \notin W_{i}$ \KwAnd $b$ beats $c$ when all manipulators prefer $b$ to $c$}
        { \label{check:b->c-begin} 
          Add $b$ to the set $W_i$ \\
          Construct the preference order~$\pref^{b}_{i} \coloneqq \{c \pref b\} \cup \pref^{c}_{i-1}$ \label{check:b->c-end}         
        }
        \If{$c$ beats $b$ when all manipulators prefer $c$ to $b$}
        { \label{check:c->b-begin}
          Add $c$ to the set $W_i$  \\
          Construct the preference order~$\pref^{c}_{i} \coloneqq \{c \pref b\} \cup \pref^{c}_{i-1}$ \label{check:c->b-end}
        }
      }
    } \label{alg1:loop-end}
    \caption{Algorithm for computing all manipulated winners for the amendment procedure.}
    \label{alg:manipulation-amendment}
  \end{algorithm}

\begin{proof}
  Based on Lemma~\ref{lem:manipulated-winner} we can build a recursive algorithm 
  that constructs a linear order over the first $i$~\alternatives from
  $\agenda$ for each \nthawinner{i}, $i$ starting from $1$.
  
  We denote by $W_{i}$ the set of all \nthmwinner{i}s.
  For each \nthmwinner{i}~$c$, 
  we denote by $\pref^{c}_{i}$ the preference order over the first $i$ alternatives in the agenda~$\agenda$ 
  such that $c$ becomes an \nthawinner{i} by adding $k$~manipulators with preference order that extend $\pref^{c}_{i}$.

  The approach to computing all manipulated winners is described in Algorithm~\ref{alg:manipulation-amendment}.
  Obviously, for the first round, 
  the set~$W_1$ and its corresponding preference order are computed correctly.
  By Lemma~\ref{lem:manipulated-winner} (\ref{cond:b-ith-mwinner}),
  we know that Steps~(\ref{check:b->c-begin})-(\ref{check:b->c-end}) are correct,
  and by Lemma~\ref{lem:manipulated-winner}(\ref{cond:c-ith-mwinner}),
  we know that Steps~(\ref{check:c->b-begin})-(\ref{check:c->b-end}) are correct.
  Since each alternative from $W_{m}$ is a last-round manipulated winner,
  we answer ``yes'' to the instance $(\profile = \profiletuple, k, p, \agenda)$ if $p \in W_{m}$ and ``no'' otherwise.

  Now, we come to the running time.
  First of all, 
  for each two distinct alternatives~$b, c$, 
  we check whether adding $k$ manipulators can make $b$ beat~$c$; 
  let the Boolean variable~$T(b,c)$ have value one if this is the case and zero otherwise.
  Computing all these Boolean values runs in $O((k+n)\cdot m^2)$.
  
  Then, to compute $W_i$, $2\le i \le m$, 
  we check $T(b,c)$ for every alternative~$c$ in $W_{i-1}$ and for the $i$th alternative~$b$ in the agenda~$\agenda$.
  Thus computing all $W_i$ can be done in $O(m^2)$ time.
  Since a successful manipulation is always possible when $k \ge n+1$,
  the total running time is indeed $O(n\cdot m^2)$.
\end{proof}

In the \probWManipulation problem, the \voters of the coalition 
also come with integer weights. 
However, we remark here that the weighted and non-weighted cases are
equivalent because of Observation~\ref{obs:all-manipulators-vote-the-same}.
Observe that if the sum of the weights is greater than the number of the voters in the original profile, 
then a successful manipulation is always possible. 
Thus, we can conclude the following.

\begin{corollary}\label{cor:wmanipulation-p}
  \corwmanipulation
\end{corollary}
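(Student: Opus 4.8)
The plan is to reduce \probWManipulation to the unweighted \probManipulation algorithms from Theorems~\ref{thm:manipulation-successive-p} and~\ref{thm:manipulation-amendment-p}, exploiting the fact that those algorithms are essentially oblivious to whether the added voters carry weights. First I would invoke Observation~\ref{obs:all-manipulators-vote-the-same}: if a successful weighted manipulation exists, then there is one in which every member of the coalition submits the same preference order. Consequently, the only feature of the coalition that influences the outcome is the total weight~$w$ of its members: a block of voters of weights $w_1,\ldots,w_k$ all casting the same linear order $\pi$ affects every head-to-head comparison and every first-place tally exactly as a single voter of weight $w=\sum_i w_i$ casting $\pi$, since all these counts are additive in the voters' weights. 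Hence it suffices to decide whether a single ``super-voter'' of weight~$w$, whose order we may choose freely, can make $p$ win --- structurally the same question solved before, with ``$k$ added voters'' replaced by ``added weight~$w$''.

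For the successive procedure I would repeat the argument of Theorem~\ref{thm:manipulation-successive-p}: the best the super-voter can do is to rank $p$ first, as then $p$ fails no round before the one in which it is considered (the super-voter never supports an earlier alternative over everything behind it) and $p$ wins that round as soon as the super-voter's weight tips the majority. So one lets the super-voter cast $p \pref \setseq{\alternativeset\setminus\{p\}}$ and simulates the successive procedure once, computing each round's weighted majority, in $O((w+n)\cdot m)$ time overall. If $w>n$ the super-voter alone forms a strict majority in every round and $p$ wins regardless of the agenda; so we may assume $w\le n$, giving the claimed $O(n\cdot m)$ bound.

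For the amendment procedure I would reuse Lemma~\ref{lem:manipulated-winner} and Algorithm~\ref{alg:manipulation-amendment} verbatim, the only change being in how the table of pairwise outcomes is computed: for each ordered pair $(b,c)$ of alternatives, set $T(b,c)=1$ iff letting the weight-$w$ super-voter prefer $b$ to $c$ makes $b$ beat $c$ (a single weighted majority count). Building $T$ takes $O((w+n)\cdot m^2)$ time, and running Algorithm~\ref{alg:manipulation-amendment} on top of $T$ adds only $O(m^2)$. As above, $w>n$ makes a manipulation trivially possible (the super-voter dominates every head-to-head, so ranking $p$ on top lets $p$ survive every round in which it appears, hence be the amendment winner), so we may again assume $w\le n$ and obtain $O(n\cdot m^2)$.

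The only point that needs care --- the ``hard part'', such as it is --- is verifying that Observation~\ref{obs:all-manipulators-vote-the-same} and Lemma~\ref{lem:manipulated-winner} survive the passage to weighted voters. Inspecting their proofs, nothing used that the added voters are unit-weight: ``all manipulators vote the same'' and ``$b$ beats $c$'' (a majority, weighted or not, prefers $b$ to $c$) are the only facts invoked, and both are weight-agnostic. Once this is checked, the corollary follows from the two reductions above.
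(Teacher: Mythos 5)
Your proposal is correct and follows essentially the same route as the paper, which likewise derives the corollary from Observation~\ref{obs:all-manipulators-vote-the-same} (stated for the weighted case) together with the remark that a total coalition weight exceeding $n$ makes manipulation trivially possible, so the unweighted algorithms carry over with weighted majority counts. Your write-up is merely more explicit than the paper's brief remark, in particular in checking that Lemma~\ref{lem:manipulated-verbatim} --- that is, Lemma~\ref{lem:manipulated-winner} --- and Algorithm~\ref{alg:manipulation-amendment} are weight-agnostic.
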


\section{Possible/Necessary Winner}
\label{sec:possible-necessary}

We typically have partial knowledge about how the \voters will vote,
and about how the agenda will be ordered.
Nevertheless, we might be interested
in what may or may not
be the final outcome. Does our favorite \alternative stand any
chance of winning? Is it inevitable
that the government's \alternative
will win? Is there an agenda under which
our \alternative can win? 
Hence, we consider 
the question of which \alternative possibly or necessarily
wins. 

\probDef{\textsc{Possible } (resp.\ \textsc{Necessary) Winner}}
{ A preference profile~$\profile\coloneqq\profiletuple$, 
  a preferred \alternative~$p\in \alternativeset$, and a partial agenda~$\partialagenda$.}
{Can $p$ win in a (resp. every) completion of the
  profile~$\profile$ for an (resp.\ every) agenda which completes~$\partialagenda$?}

An upper bound for the computational complexity of both problems is easy to see:
\probPossibleWinner (resp.\ \probNecessaryWinner) for both the successive procedure and the amendment procedure is contained in $\NP$ (resp.\ in $\coNP$) because
determining a winner for both procedures is polynomial-time solvable.
Thus, in order to show the $\NP$-completeness (resp.\ $\coNP$-completeness),
each time we only need to show the  $\NP$-hardness (resp.\ $\coNP$-hardness).

\subsection{Possible Winner}
Our first two results imply that as soon as the voters may have non-linear preference orders,
deciding who may be a possible winner is $\NP$-hard even for a fixed agenda. 
The last result shows that
if the number~$m$ of alternatives is a constant, 
then this problem becomes polynomial-time solvable and the degree of the polynomial does not depend on $m$.

\begin{theorem}
  \label{thm:possiblewinner-successive-nph}
  \thmpossiblewinnersuccessive
\end{theorem}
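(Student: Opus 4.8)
\ Membership in $\NP$ is immediate: the agenda is already linear, a successive winner can be computed in polynomial time (Definition~\ref{def:successive}), so one guesses a linear completion of~$\profile$ and checks in polynomial time that $p$ wins. It therefore remains to prove $\NP$-hardness, and the plan is to reduce from a suitable $\NP$-hard problem---I would try \textsc{3-SAT} first, with \probVC as a fall-back---building a \probPossibleWinner instance whose fixed linear agenda~$\agenda$ lists the preferred \alternative~$p$ \emph{last}.

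The reason for putting $p$ last is the following simplification, immediate from Definition~\ref{def:successive}: when $p$ is the last \alternative of~$\agenda$, the procedure accepts $p$ if and only if every \alternative in front of~$p$ fails to be a majority winner against the set of \alternatives behind it, i.e.\ no earlier \alternative is accepted. Hence $p$ is a possible winner exactly when there is a completion of~$\profile$ under which all \alternatives preceding~$p$ in~$\agenda$ are simultaneously ``blocked'', and the whole reduction reduces to engineering such a ``blocking'' condition.

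Concretely, for a formula with variables~$X_1,\dots,X_r$ and clauses~$C_1,\dots,C_s$ I would use: the target~$p$; one \emph{clause \alternative}~$\hat c_j$ per clause, placed in this order immediately in front of~$p$ in~$\agenda$; and a small number of auxiliary \alternatives (a constant number per variable) elsewhere in~$\agenda$ whose relative positions in a \voter's order record a truth value. The \voters come in three groups. \emph{Base \voters} have complete preferences that rank a prescribed prefix of the clause \alternatives on top, so that each~$\hat c_j$ is already ranked above its whole agenda-tail by a fixed number of \voters; this number is tuned---using that the number of \voters is odd and ties are broken---so that, together with the contributions described next, $\hat c_j$ beats its agenda-tail \emph{exactly} when all three literals of~$C_j$ are falsified. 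For each variable~$X_i$ there is one \emph{variable \voter} with an incomplete preference order whose completions encode $X_i\mapsto\mathrm{true}$ and $X_i\mapsto\mathrm{false}$, wired (through the placement of the clause \alternatives of the clauses containing~$X_i$) so that this \voter ranks~$\hat c_j$ above~$\hat c_j$'s agenda-tail precisely when the occurrence of~$X_i$ in~$C_j$ is set to false. Finally, \emph{padding \voters} rank~$p$ on top (hence help no clause \alternative) and merely adjust the counts and the parity. One then argues: any completion under which $p$ wins yields a truth assignment, and $\hat c_j$ fails to be a majority winner against its agenda-tail under that completion if and only if~$C_j$ is satisfied; therefore $p$ is a possible winner if and only if the formula is satisfiable.

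I expect the main obstacle to be the correctness of the ``only if'' direction: one must ensure that \emph{every} completion of~$\profile$---not just the ``intended'' per-variable ones---induces a consistent truth assignment, so that a \voter cannot, via some ``in-between'' linear extension, simultaneously relieve the pressure on several clause \alternatives (which would let $p$ win for an unsatisfiable formula). Getting this monotonicity right is what constrains the variable gadget and pins down the exact placement of the auxiliary \alternatives and the base-voter counts (together with the odd-\voter / tie-breaking convention). Once the gadget has this property, checking that no clause \alternative is accidentally a majority winner against its agenda-tail and that $p$ indeed survives to the final round is routine bookkeeping; if the variable gadget proves awkward, the \probVC variant---one ``obstacle'' \alternative per edge plus a ``budget'' \alternative, with the truth-assignment choice replaced by a vertex-selection choice of the same flavour---offers an alternative route.
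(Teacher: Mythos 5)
Your proposal is a reduction plan, not a proof: the one component on which the entire argument hinges---a variable gadget whose linear extensions all induce a consistent truth assignment---is exactly the part you leave unspecified, and you yourself identify it as ``the main obstacle.'' This is a genuine gap, not routine bookkeeping. With a partial preference order, the set of completions is the set of \emph{all} linear extensions, and a single variable \voter who must simultaneously place several clause \alternatives above their agenda-tails (for falsified occurrences) and several others below (for satisfied occurrences) admits many ``mixed'' extensions; ruling out that such an extension blocks every clause \alternative of an unsatisfiable formula is precisely where such reductions succeed or fail. Until that gadget is exhibited with its counts verified, the ``only if'' direction has no proof. A second, smaller concern: with $p$ placed last, $p$ wins vacuously once every predecessor is rejected, so all of the coupling between voters' choices must come from conflicts among the blocking conditions themselves; you assert this coupling can be engineered but do not show it.

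For comparison, the paper reduces from \probIS and avoids the need for any elaborate gadget. The \alternatives are $p$, a dummy~$d$, and one edge \alternative per edge; the agenda is $c_1 \agendapref \cdots \agendapref c_s \agendapref p \agendapref d$, so $p$ is \emph{not} last. Each vertex \voter's partial order leaves exactly one degree of freedom: whether $d$ is inserted above or below the fixed block $\setseq{I(u_i)} \pref p$. Placing $d$ below that block is what lets the voter support $p$ against $d$ in the penultimate round (the positive constraint: at least $h$ voters must do this for $p$ to beat $d$), but it simultaneously hands each incident edge \alternative one extra first-place supporter in its own round (the negative constraint: no edge may receive two, else that edge \alternative becomes a majority winner and stops the procedure early). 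The dummy $d$ \emph{behind} $p$ is what creates the positive constraint; your $p$-last layout discards this device and therefore must recreate the tension entirely inside the clause/variable gadgets. The paper's single-binary-choice-per-voter design is what makes the ``every completion induces a consistent selection'' direction immediate, which is exactly the property your sketch still owes.
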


\newcommand{\restseta}{\ensuremath{\alternativeset \setminus (\{p, d\} \cup I(u_i))}}
\newcommand{\restsetb}{\ensuremath{\alternativeset \setminus \{p, d\}}}

\begin{proof}
  We show the $\NP$-hardness by reducing from the $\NP$-complete \probIS problem in polynomial time.
  Given an undirected graph~$G=(U,E)$ and an integer~$h$, \probIS asks whether there is
  an \emph{independent set} of size at least~$h$, that is,
  a subset of at least $h$~vertices such that no two of them are adjacent to each other.
  We will give a concrete example for the reduction right after the proof.

  Let $(G=(U, E),h)$ be an instance of the \probIS problem, where
  $U=\{u_1,\dots,u_r\}$ denotes the set of vertices and
  $E=\{e_1,\dots,e_s\}$ denotes the set of edges.
  We assume that $r\ge 3$ and $2 \le h \le r-1$.
  We construct a \probPossibleWinner instance~$((\alternativeset, \voterset),p,\partialagenda)$ as follows.
  The set~$\alternativeset$ of alternatives consists of
  the preferred alternative~$p$,
  one \emph{dummy alternative}~$d$, 
  and for each edge~$e_j \in E$ one \emph{edge alternative}~$c_j$:
  \[ \alternativeset \coloneqq \{p, d\} \cup \{c_j \mid e_j \in E\} \text{.}\]
  We construct three groups of voters, 
  where only the first group of voters has partial orders while the remaining two groups have linear orders.
  \begin{enumerate}
  \item For each vertex~$u_i \in U$, we construct a \emph{vertex voter}~$v_i$ with a partial preference order specified by
  \[\setseq{I(u_i)} \pref p \pref \setseq{\restseta} \text{ and } d \pref \setseq{\restseta}\text{,} \]
  where $I(u_i)$~denotes the set of edge alternatives corresponding to edges incident to vertex~$u_i$.
  Briefly put,  vertex voter~$v_i$ prefers all ``incident'' edge alternatives to all other alternatives (including $p$) except $d$ 
  but he thinks that every ``incident'' edge alternative and $p$ are incomparable to $d$.
  \item We construct $h-2$ \emph{auxiliary voters} with the same complete preference order
  \[ \setseq{\restsetb} \pref d \pref p\text{.} \]
  \item We construct another $r-h-1$ auxiliary voters with the same complete preference order
  \[ \setseq{\restsetb} \pref p \pref d\text{.} \]
  \end{enumerate}
  Note that we have constructed a total of $2r-3$ voters. 
  Thus, to be a majority winner,
  an alternative needs to be preferred at the first place by at least $r-1$ voters.
  
  Let the agenda~$\partialagenda$ be the linear order~$c_1 \agendapref c_2 \agendapref \ldots \agendapref c_{s} \agendapref p \agendapref d$.
  This completes the construction which can clearly be computed in polynomial time.

  Before we give a formal correctness proof, let us briefly sketch the idea.
  Our construction ensures that in order to beat~$d$ in the final round
  of the procedure we have to put~$p$ (and by the construction of the vertex voters, also $I(u_i)$)
  in front of $d$ in at least $h$~vertex voters' preference orders.
  However, to prevent any edge alternative~$c_j$ from being an amendment winner in some 
  earlier round of the procedure, we cannot put any edge alternative in front of~$d$ more than once.
  We will see that this is only possible if the vertices corresponding to the
  voters for which we put $p$ in front of~$d$ form an independent set of size $h$.

  Now, we show that $G$~has an independent set of size at least~$h$ if and only if
  $p$ can possibly win in the constructed instance.

  For the ``only if'' part, assume that $G$ admits an independent set~$U'\subseteq U$ of size at least~$h$.
  Then, we complete the partial preference orders of the vertex voters to make~$p$ a successive winner as follows.
  \begin{enumerate}
    \item For each vertex~$u_i \in U'$ belonging to the independent set, let
    voter~$v_i$ have the preference order
    \[\setseq{I(u_i)} \pref p \pref d \pref \setseq{\restseta}\text{.}\]
    \item  For each vertex~$u_i \notin U'$ \emph{not} belonging to the independent set, let
    voter~$v_i$ have the preference order
    \[d \pref \setseq{I(u_i)} \pref p \pref \setseq{\restseta}\text{.}\]
  \end{enumerate}
  Since $U'$ is an independent set, 
  every edge \alternative is preferred to $d$ by at most one vertex voter.
  Together with the remaining $r-3$ auxiliary voters,
  every edge \alternative is preferred to $d$ by at most $r-2$ voters, 
  causing each edge \alternative to be deleted if it is considered prior to $d$ in an agenda (note that we have $2r-3$ voters). 
  Hence, by our agenda where all edge \alternatives are in front of $d$, 
  each edge \alternative is indeed deleted.
  Now, we come to the last but one round.
  Since the independent set~$U'$ has size at least $h$,
  at least $h$ vertex voters prefer $p$ to $d$.
  Then, $p$ will beat $d$, because the first $r-h-1$ auxiliary voters prefer $p$ to $d$,
  making $p$ a successive winner.

  For the ``if'' part, assume that $p$ can possibly become a successive winner which means that 
  one can complete the vertex voters' preference orders, ensuring $p$'s victory.
  Let $V'$ be the set of vertex voters that prefer $p$ to $d$ in such a completion.
  Since a total of $r-h-1$ auxiliary voters prefer $p$ to $d$,
  in order to make $p$ beat $d$ at the last but one round, 
  $|V'|$ must be at least $h$ (note that the majority quota is $r-1$).
  Now, consider the vertex subset~$U'$ that corresponds to $V'$. 
  We show that $U'$ is an independent set, that is, 
  no two vertices in $U'$ are adjacent.
  Suppose for the sake of contradiction that 
  there are two adjacent vertices~$u, u' \in U'$; denote the edge by $e_j$.
  Then, back to our completed profile, 
  since $p$ is a successive winner, 
  one must come to the round where the corresponding edge \alternative~$c_j$ is considered.
  Since all edge \alternatives with a lower index are already deleted, 
  by construction of the preference orders,
  a total of $(r-1)$ voters (the two vertex voters corresponding to $u, u'$ and all $r-k-3$ auxiliary voters) rank $c_j$ at the first position in the profile restricted to the alternatives~$c_j, c_{j+1}, \ldots, c_{s}, p, d$.
  This, however, will make $c_j$ win---a contradiction.
\end{proof}

\tikzstyle{sol}=[circle, minimum size=15pt, inner sep = 2pt, draw=black!50, fill=black!10]
\tikzstyle{vertex}=[circle, minimum size=15pt, inner sep = 2pt, draw=black!50, fill=white]
\tikzstyle{line}=[draw=black!50,-]
\begin{figure}[t!]
  \centering
  \newcommand{\sss}{\ensuremath \scriptscriptstyle}
\begin{tikzpicture}[auto, >=stealth']
  %
  %
  %
  \node[vertex] at (0,0) (u1) {$u_{\sss 1}$};
  \node[sol, below = 5ex of u1, xshift = -19.5ex] (u3) {$u_{\sss 3}$};
  \node[sol, below = 5ex of u1, xshift = -6.6ex] (u4) {$u_{\sss 4}$};
  \node[sol, below = 5ex of u1, xshift = 6.5ex] (u5) {$u_{\sss 5}$};
  \node[sol, below = 5ex of u1, xshift = 19.5ex] (u6) {$u_{\sss 6}$};
  \node[vertex, below = 5ex of u4, xshift = 6.5ex] (u2) {$u_{\sss 2}$};

  \draw[line] (u1) edge node[above,pos=.6] {$e_{\sss 1}$} (u3);
  \draw[line] (u1) edge node[above,pos=.7, xshift=-2pt] {$e_{\sss 3}$} (u4);
  \draw[line] (u1) edge node[above,pos=.7, xshift=2pt] {$e_{\sss 5}$} (u5);
  \draw[line] (u1) edge node[above,pos=.6] {$e_{\sss 7}$} (u6);

  \draw[line] (u2) edge node[below,pos=.6] {$e_{\sss 2}$} (u3);
  \draw[line] (u2) edge node[below,pos=.7, xshift=-2pt] {$e_{\sss 4}$} (u4);
  \draw[line] (u2) edge node[below,pos=.7, xshift=2pt] {$e_{\sss 6}$} (u5);
  \draw[line] (u2) edge node[below,pos=.6] {$e_{\sss 8}$} (u6);

\end{tikzpicture}
\caption{An undirected graph with $6$ vertices and $8$ edges. The graph has an independent set of size~$4$ (filled in gray) and has a vertex cover of size~$2$ (filled in white).}
\label{fig:graph}
\end{figure}
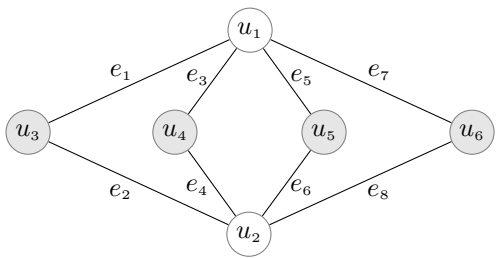

\begin{table}[t!]
  \centering
\begin{tabular}{@{}ll@{}}
  $\text{Voter } v_1\colon$ & $\{\,c_1 \spref c_3 \spref c_5 \spref c_7 \spref p,\; d \, \}  \spref c_2 \spref c_4 \spref c_6 \spref c_8$ \\
  $\text{Voter } v_2\colon$ & $\{\,c_2 \spref c_4 \spref c_6 \spref c_8 \spref p,\;  d \, \}  \spref c_1 \spref c_2 \spref c_3 \spref c_4$ \\
  $\text{Voter } v_3\colon$ & $\{\,c_1 \spref c_2 \spref p,\;  d \, \} \spref c_3 \spref c_4 \spref c_5 \spref c_6 \spref c_7 \spref c_8$ \\
  $\text{Voter } v_4\colon$ & $\{\,c_3 \spref c_4 \spref p,\;  d \, \} \spref c_1 \spref c_2 \spref c_5 \spref c_6 \spref c_7 \spref c_8$ \\
  $\text{Voter } v_5\colon$ & $\{\,c_5 \spref c_6 \spref p,\;  d \, \} \spref c_1 \spref c_2 \spref c_3 \spref c_4 \spref c_7 \spref c_8$ \\
  $\text{Voter } v_6\colon$ & $\{\,c_7 \spref c_8 \spref p,\;  d \, \} \spref c_1 \spref c_2 \spref c_3 \spref c_4 \spref c_5 \spref c_6$ \\
  $\text{One voter} \colon$ & $c_1 \spref c_2 \spref c_3 \spref c_4 \spref c_5 \spref c_6 \spref c_7 \spref c_8 \spref p \spref d$ \\ 
  $\text{Two voters}\colon$ & $c_1 \spref c_2 \spref c_3 \spref c_4 \spref c_5 \spref c_6 \spref c_7 \spref c_8 \spref d \spref p$\\\bottomrule
  $\text{Agenda } \partialagenda\colon$ &$c_1 \agendapref c_2 \agendapref c_3 \agendapref c_{4} \agendapref c_5 \agendapref c_6 \agendapref c_7 \agendapref c_8 \agendapref p \agendapref d$ 
\end{tabular}
\caption{The instance~($\profiletuple, p, \partialagenda$) for the problem \probPossibleWinner and for the successive procedure, obtained from the graph in Figure~\ref{fig:graph} and $h=4$,
  where $\alternativeset = \{c_1, c_2, c_3, c_4, c_5, c_6, c_7, c_8, p, d\}$,
  the voter set~$\voterset$ and the corresponding preference orders and the agenda are depicted above.
  By the construction of the agenda, one may notice that in order to let $p$ beat $d$ in the last two rounds,
  at least $h=4$ vertex voters must rank $p \pref d$.
  But, in order not to let an edge \alternative obtain too much ``support'',
  at most one of its ``incident'' vertex voters can prefer $p$ to $d$.
}
\label{tab:reduced-instance-possible-successive}
\end{table}

We illustrate the above $\NP$-hardness reduction through an example.
Figure~\ref{fig:graph} depicts an undirected graph $G$ with $6$ vertices and $8$ edges.
We set $h\coloneqq 4$.
The gray vertices form an independent set of size $4$.
Then a reduced instance for \probPossibleWinner and for the successive procedure will have $6+(4-2)+(6-4-1) = 9$ \voters and $2+8=10$ \alternatives.
This instance can be found in Table~\ref{tab:reduced-instance-possible-successive}.

Now, we show that for the amendment procedure, 
\probPossibleWinner remains intractable.

\begin{theorem}\label{thm:possiblewinner-amendment-nph}
  \thmpossiblewinneramendment
\end{theorem}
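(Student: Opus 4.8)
**Proof proposal for Theorem~\ref{thm:possiblewinner-amendment-nph} (\probPossibleWinner is $\NP$-complete for the amendment procedure, even with a fixed agenda).**

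The plan is to again reduce from \probIS, reusing as much of the intuition from the proof of Theorem~\ref{thm:possiblewinner-successive-nph} as possible, but adapting the gadget to the pairwise-elimination mechanics of the amendment procedure. Given an \probIS instance $(G=(U,E),h)$ with $U=\{u_1,\dots,u_r\}$ and $E=\{e_1,\dots,e_s\}$, I would build an \alternative set $\alternativeset$ containing the preferred \alternative~$p$, a dummy \alternative~$d$ that will be the ``survivor to beat'' at the end, an edge \alternative~$c_j$ for every $e_j\in E$, and possibly a small number of additional fixed-structure \alternatives (for instance, a sink \alternative or two that guarantee the first few rounds behave predictably regardless of how the partial orders are completed). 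The agenda~$\partialagenda$ is fixed as $c_1\agendapref c_2\agendapref\dots\agendapref c_s\agendapref p\agendapref d$ (optionally with the auxiliary \alternatives placed at the very front). The vertex voters again have a partial order in which the edge \alternatives $I(u_i)$ incident to $u_i$, together with $p$, are all ranked above the rest of $\alternativeset$ but are incomparable to~$d$; the completion then decides, for each vertex voter independently, whether $d$ is slotted above $I(u_i)\cup\{p\}$ or below it (intermediate positions must be shown to be never helpful). The auxiliary voters have fixed linear orders, with the $d$-versus-$p$ split and the $d$-versus-$c_j$ split tuned so that: (i) $p$ beats $d$ in the final round iff at least $h$ vertex voters put $p$ above $d$, and (ii) an edge \alternative $c_j$, if it ever reaches the round where it meets the current amendment winner, beats that winner (namely $d$, since by the time we are at round "$c_j$ vs.\ winner" the running winner should be $d$) iff \emph{both} endpoints of $e_j$ are among those vertex voters.

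The key steps, in order, are: (1) fix the numbers of auxiliary voters of each type so that the majority thresholds force the "$\ge h$ vertex voters rank $p\succ d$" condition and the "at most one incident vertex voter may rank $p\succ d$ without $c_j$ surviving" condition, exactly as in the successive case, but now recomputed for head-to-head comparisons rather than for "ranked first among a tail"; (2) establish a structural lemma that in \emph{any} completion, after processing $c_1,\dots,c_s$ the running amendment winner is $d$ unless some $c_j$ survived against $d$ (here the fixed auxiliary orders and, if needed, the auxiliary sink \alternatives do the work of making the early rounds among edge \alternatives deterministic, so that no edge \alternative beats $d$ unless it has "too much support"); (3) prove the forward direction: from an independent set $U'$ of size $\ge h$, complete vertex voter $v_i$'s order by putting $d$ just below $I(u_i)\cup\{p\}$ iff $u_i\in U'$, check that every $c_j$ loses to $d$ (at most one endpoint in $U'$, so at most $1+(\text{auxiliaries})<$ threshold prefer $c_j$ to $d$), that $d$ survives to the "$p$ vs.\ $d$" round, and that $p$ then beats $d$; (4) prove the converse: if $p$ wins in some completion, let $U'$ be the set of vertex voters ranking $p\succ d$; show $|U'|\ge h$ from the final-round count, and show $U'$ is independent because if $u,u'\in U'$ were joined by $e_j$ then, when round "$c_j$" is reached, $c_j$ is preferred to $d$ by both corresponding vertex voters plus all the auxiliaries who rank edge \alternatives high, which crosses the threshold and makes $c_j$ the running winner, so $d$ never reaches the final round and $p$ cannot beat it there.

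The main obstacle is step~(2): in the successive procedure a "losing" \alternative is simply dropped and the tail is untouched, but in the amendment procedure the \emph{running winner} carries forward, so I must ensure that the running winner after the edge-\alternative block is always exactly $d$ (not some surviving $c_j$ that isn't the "bad" one, and not an unwanted $c_j$ that beats later $c_\ell$'s without beating $d$) — this requires carefully choosing the fixed auxiliary preference orders (and, if necessary, a couple of auxiliary \alternatives placed first on the agenda) so that among the edge \alternatives the head-to-head majority relation is governed entirely by the dummy \alternative and the auxiliaries, with vertex voters' incomparabilities to $d$ being the only free choices that matter. A secondary, more routine obstacle is verifying that no "intermediate" completion of a vertex voter's partial order (slotting $d$ strictly inside the block $I(u_i)\cup\{p\}$) can help the manipulator; this should follow from a simple dominance argument showing that moving $d$ downward past $p$ only ever makes $p$'s situation in the final round weakly better and never makes any $c_j$'s situation worse for us, so we may assume each vertex voter is in one of the two canonical configurations.
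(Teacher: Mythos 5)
There is a genuine gap, and it sits exactly where you flagged your ``main obstacle'': step~(2) of your plan is internally inconsistent with the agenda you fixed. You set the agenda to $c_1\agendapref c_2\agendapref\dots\agendapref c_s\agendapref p\agendapref d$, yet your structural lemma asserts that ``after processing $c_1,\dots,c_s$ the running amendment winner is $d$'' and that each $c_j$ is tested head-to-head against $d$. Under the amendment procedure with that agenda, $d$ enters only in the very last round; during the edge block the running winner is necessarily one of the edge \alternatives themselves, determined by the pairwise contests $c_1$ vs.\ $c_2$, then the survivor vs.\ $c_3$, and so on. Your construction never controls these edge-versus-edge contests, so you have no handle on which $c_{j^*}$ emerges to face $p$, and the intended ``$c_j$ beats $d$ iff both endpoints support it'' mechanism never fires. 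The proposal could perhaps be repaired by moving $d$ to the front of the agenda (so that $d$ is the running winner that each $c_j$ must fail to beat), but then the converse direction still needs an argument ruling out $p$ winning along a deviant path in which some $c_j$ defeats $d$ and is later defeated in turn; as written, none of this is supplied.

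The paper avoids the problem entirely by a different reduction: from \probVC rather than \probIS, with agenda $b\agendapref d\agendapref p\agendapref c_s\agendapref\dots\agendapref c_1$, so that $p$ (not $d$) is the running winner throughout the edge block and must itself beat every edge \alternative one by one --- which is precisely the covering condition ``every edge has at least one chosen endpoint.'' The budget ``at most $h$ chosen vertices'' is enforced by a separate first-round contest between a helper \alternative~$b$ (which beats $p$) and the dummy~$d$: supporting $p$ against an edge \alternative forces a vertex voter to also rank $b$ above $d$, and too many such voters would let $b$ survive and eliminate $p$. This inversion --- placing $p$ \emph{before} the edge \alternatives so that the quantification is ``for every edge, some endpoint'' (vertex cover) instead of ``for every edge, at most one endpoint'' (independent set) --- is the idea your proposal is missing, and it is what makes the pairwise-elimination dynamics tractable.
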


\renewcommand{\restseta}{\ensuremath{\alternativeset \setminus (\{p,b,d\} \cup I(u_i))}}
\renewcommand{\restsetb}{\ensuremath{\alternativeset \setminus \{p,b,d\}}}
\newcommand{\wholeset}{\ensuremath{\{p, b, d\}\cup \{c_j \mid e_j \in E\}}}

\begin{proof}
  We show the $\NP$-hardness by reducing from the $\NP$-complete \probVC problem in polynomial time.
  Given an undirected graph~$G=(U,E)$ and an integer~$h$, \probVC asks whether there is
  a \emph{vertex cover} of size at most~$h$, that is,
  a subset of at most $h$~vertices whose removal destroys all edges.
  We will give a concrete example for the reduction behind the proof.

  Let $(G=(U, E),h)$ be a \probVC instance, where 
  $U=\{u_1,\dots, u_r\}$ denotes the set of vertices and
  $E=\{e_1,\dots,e_s\}$ denotes the set of edges.
  We assume that $r \ge 1$ and $1 \le h \le r-1$.

  We construct a \probPossibleWinner instance~$(\profiletuple,p,\partialagenda)$ as follows.
  The set~$\alternativeset$ of alternatives consists of
  the preferred alternative~$p$,
  one \emph{helper alternative}~$b$, 
  one \emph{dummy alternative}~$d$, 
  and for each edge~$e_j \in E$ one \emph{edge alternative}~$c_j$:
  \[ \alternativeset \coloneqq \{p,b,d\} \cup \{c_j \mid e_j \in E\}\text{.} \]
  
  We construct three groups of voters for the voter set~$V$,
  where only the first group of voters has partial orders while the remaining two groups of voters have linear orders. 
  We let $\setseq{\restsetb}$ denote the order~$c_1 \pref c_2 \pref \ldots \pref c_s$.
  \begin{enumerate}
    \item For each vertex~$u_i$, 
    we construct a \emph{vertex voter}~$v_i$ with a partial order specified by 
    \begin{align*}
      &\setseq{\restseta} \pref I(u_i) \pref d \text{ and }\\
      &\setseq{\restseta} \pref I(u_i) \pref b \pref p \text{,}
    \end{align*}
    where $\setseq{\restseta}$ denotes the order derived from $\setseq{\restsetb}$ by removing $u_i$'s ``incident'' edge \alternatives.
    Briefly put,
    vertex voter~$v_i$ prefers all ``non-incident'' edge alternatives to the remaining ones,
    but he regards every ``incident'' edge alternative and $d$ to be incomparable to 
    $b$ and to $p$.
    \item We construct $r-h-1$ \emph{auxiliary voters} with the same linear order
    \[ p \pref \setseq{\restsetb} \pref b \pref d\text{.} \]
    \item Finally, we construct another $h$ \emph{auxiliary voters} with the same linear order
    \[ p \pref d \pref \setseq{\restsetb} \pref b\text{.} \]
  \end{enumerate}

  Note that we have constructed a total of $2r-1$ voters.
  Thus, given two alternatives~$a, a'$,
  we have that $a$ beats $a'$ if and only if at least $r$ voters prefer $a$ to $a'$.
  
  Let the partial agenda~$\partialagenda$ be the linear order~$ b \agendapref d \agendapref p \agendapref c_s \agendapref c_{s-1} \agendapref \ldots \agendapref c_1\text{.}$
  This completes the construction which can clearly be computed in polynomial time.
 
  \newcommand{\sss}{\ensuremath }
  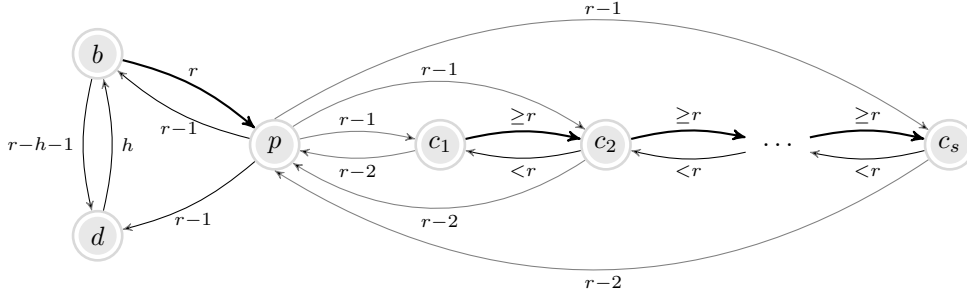
\begin{figure}
    \centering
    \tikzstyle{majarr}=[draw=black,thick,->]
    \tikzstyle{minarr}=[draw=black!50,->]
    \tikzstyle{minarrb}=[draw=black!50,<-]
    \tikzstyle{alter}=[circle, minimum size=18pt, inner sep = 2.5pt, draw=black!15, ultra thick, fill=black!9]
    \tikzstyle{innercircle}=[circle, minimum size=16pt, ultra thick, draw=white]
  \begin{tikzpicture}[auto, >=stealth']
    \node[alter] at (-.55, 1) (b) {$b$};
    \node[alter, right = 11ex of b, yshift=-8ex] (p) {$p$};
    \node[alter, left = 11ex of p, yshift=-8ex] (d) {$d$};
    \node[innercircle] at (b) {};
    \node[innercircle] at (d) {};
    \node[innercircle] at (p) {};
    
    \draw[majarr] (b) edge[bend left=13] node[midway, anchor=south] {$\scriptstyle r$} (p);
    \draw[minarr] (p) edge[bend left=13] node[midway, anchor=north] {$\scriptstyle r-1$} (b);
    \draw[minarr] (p) edge[bend left=13] node[midway, anchor=north] {$\scriptstyle r-1$} (d);

    \draw[minarr] (b) edge[bend right=13] node[midway, anchor=east] {$\scriptstyle r-h-1$} (d);
    \draw[minarr] (d) edge[bend right=13] node[midway, anchor=west] {$\scriptstyle h$} (b);
    
    \node[alter, right = 10ex of p] (e1)  {$c_{\sss 1}$};
    \node[alter, right = 10ex of e1] (e2) {$c_{\sss 2}$};
    \node[right = 10ex of e2] (lds) {$\;\ldots\;$};
    \node[alter, right = 10ex of lds] (em) {$c_{\sss s}$};

    \node[innercircle] at (e1) {};
    \node[innercircle] at (e2) {};
    \node[innercircle] at (em) {};
    
    \draw[majarr] (e1) edge[bend left=13] node[midway, anchor=south, yshift=-1pt] {$\scriptstyle \ge r$} (e2);
    \draw[majarr] (e2) edge[bend left=13] node[midway, anchor=south, yshift=-1pt] {$\scriptstyle \ge r$} (lds);
    \draw[majarr] (lds) edge[bend left=13] node[midway, anchor=south, yshift=-1pt] {$\scriptstyle \ge r$} (em);

    \draw[minarrb] (e1) edge[bend right=13] node[midway, anchor=north, yshift=1pt] {$\scriptstyle < r$} (e2);
    \draw[minarrb] (e2) edge[bend right=13] node[midway, anchor=north, yshift=1pt] {$\scriptstyle < r$} (lds);
    \draw[minarrb] (lds) edge[bend right=13] node[midway, anchor=north, yshift=1pt] {$\scriptstyle < r$} (em);

    \draw[] (p) edge[bend left=13,minarr] node[midway, anchor=south, yshift=-2pt] {$\scriptstyle r-1$} (e1);
    \draw[] (e1) edge[bend left=13,minarr] node[midway, anchor=north, yshift=1pt] {$\scriptstyle r-2$} (p);

    \draw[] (p.north east) edge[bend left=33,minarr] node[midway, anchor=south, yshift=-2pt] {$\scriptstyle r-1$} (e2);
    \draw[] (e2) edge[bend left=33,minarr] node[midway, anchor=north, yshift=1pt] {$\scriptstyle r-2$} (p.south east);

    \draw[] (p.north) edge[bend left=33,minarr] node[midway, anchor=south, yshift=-2pt] {$\scriptstyle r-1$} (em);
    \draw[] (em) edge[bend left=33,minarr] node[midway, anchor=north, yshift=1pt] {$\scriptstyle r-2$} (p.south);
 \end{tikzpicture}
 \caption{A weighted majority graph (not including all arcs) for the profile ($2r-1$ \voters and $s+3$ \alternatives) in the reduced instance for \probPossibleWinner with the amendment procedure (see Definition~\ref{def:majority graph} for the corresponding definition).
   Note that we use the same symbol for both the alternative and its corresponding vertex because it will always be clear from the context which one we mean.
   For instance, there is an arc from $b$ to $p$ with weight $r$ because exactly $r$ voters prefer $b$ to $p$.
   They all come from the first group of voters.
   We mark an arc~$(a,a')$ bold if there is majority of voters preferring $a$ to $a'$.
   For the sake of brevity, some (irrelevant) arcs are omitted.
   By `$\ldots$'' we refer to the remaining edge vertices in increasing order.
  }
  \label{fig:maj_graph-possible-amendent-instance}
  \end{figure}

  First, consider Figure~\ref{fig:maj_graph-possible-amendent-instance} which illustrates the corresponding majority graph for the profile.
  Let us briefly sketch the idea with the help of the majority graph:
  By the constructed agenda, 
  in order to become an amendment winner,
  our preferred alternative~$p$ has to beat every edge alternative~$c_j$, $1\le j \le s$.
  By the auxiliary voters' preference orders (also see the corresponding weighted majority graph in Figure~\ref{fig:maj_graph-possible-amendent-instance}),
  exactly $r-1$ voters prefer $p$ to $c_j$, $1\le j \le s$.
  Thus, we have to put both $p$ (and by the construction of the vertex voters, also~$b$) in front of $c_j \pref d$
  in the preference order of at least one vertex voter that corresponds to a vertex incident to the edge~$e_j$.
  This implies that the vertices corresponding to the
  voters for which we put both $p$ and $b$ in front of~$d$ form a vertex cover.
  Furthermore, since $b$ beats $p$ (see the corresponding arc weight in the majority graph),
  $d$ has to beat $b$ in the first round of the procedure.
  Since all $r-h-1$ voters from the second group prefer~$b$ to $d$,
  we are only allowed to put $b \pref p$ in front of $d$ in at most $h$ vertex voters' preference orders.
  Thus, the vertex cover is of size at most~$h$. 

  Now, we show that $G$ has a vertex cover of size at most $h$ if and only if
  $p$ can possibly win the amendment procedure. 

  For the ``only if'' part, 
  assume that there is a vertex cover~$U'$ of size at most~$h$.
  Then, we complete the vertex voters' preference orders to make~$p$ an amendment winner as follows.
  \begin{enumerate}
    \item For each vertex $u_i\in U'$ belonging to the vertex cover, 
    we let $v_i$ have the preference order
    \[\setseq{\restseta} \pref b \pref p \pref \setseq{I(u_i)} \pref d \text{.}\]
    \item For each vertex $u_i\notin U'$ \emph{not} belonging to the vertex cover, 
    we let $v_i$ have the preference order
    \[\setseq{\restseta} \pref \setseq{I(u_i)} \pref d \pref b \pref p \text{.}\]
  \end{enumerate}
 
  Since $|U \setminus U'| \ge r - h$,
  at least $r-h$ vertex voters prefer $d$ to $b$.
  Together with the $h$ voters from the third group, which prefers $d$ to $b$,
  $d$ beats $b$ and survives as the second round winner.
  Since we assume that $h \ge 1$, 
  at least one additional vertex voter prefers $p$ to $d$.
  Together with all $r-1$ voters from the second and the third group that prefer $p$ to $d$,
  $p$ beats $d$ and survives as the third round winner.
  Since $U'$ is a vertex cover,
  for each edge \alternative~$c_j$, $1\le j\le s$,
  there is at least one vertex voter preferring $p$ to $c_j$.
  This implies that $p$ beats $c_j$ (note that all $r-1$ auxiliary voters prefer $p$ to $c_j$).
  Thus, $p$ is an amendment winner.

  For the ``if'' part, assume that $p$ can possibly become an amendment winner which means that 
  one can complete the vertex voters' preference orders, ensuring $p$'s victory.
  Let $V'$ be the set of vertex voters that prefer $p$ to $c_j$ where $c_j$ is an arbitrary edge \alternative.
  As we already noticed,
  since $p$ is in front of every edge \alternative in the agenda,
  $p$ must beat every edge \alternative.
  Thus, for each edge \alternative~$c_j$,
  there is at least one vertex voter~$v_i \in V'$ who prefers $p$ to $c_j$.
  By the construction of our preference orders, 
  this means that $V'$ corresponds to a vertex cover.
  
  Now, observe that every vertex voter in $V'$ also prefers $b$ to $d$ (because he originally prefers $b$ to $p$).
  Since all $r-h-1$ auxiliary voters from the second group prefer $b$ to $d$ and 
  since $p$ can only possibly win if $b$ does not survive the second round (because $b$ beats $p$), 
  there are at most $h$ voters in $V'$.
  This further implies that the vertex set corresponding to $V'$ is a vertex cover of size at most $h$.
\end{proof}

\begin{table}
  \centering
\begin{tabular}{@{}ll@{}}
  $\text{Voter } v_1\colon$ & $c_2 \spref c_4 \spref c_6 \spref c_8 \spref \{\,c_1 \spref c_3 \spref c_5 \spref c_7 \spref d,\; b \spref p\, \}$ \\
  $\text{Voter } v_2\colon$ & $c_1 \spref c_2 \spref c_3 \spref c_4 \spref \{\,c_2 \spref c_4 \spref c_6 \spref c_8 \spref d,\; b \spref p\, \}$ \\
  $\text{Voter } v_3\colon$ & $c_3 \spref c_4 \spref c_5 \spref c_6 \spref c_7 \spref c_8 \spref \{\,c_1 \spref c_2 \spref d,\; b \spref p\, \} $ \\
  $\text{Voter } v_4\colon$ & $c_1 \spref c_2 \spref c_5 \spref c_6 \spref c_7 \spref c_8 \spref \{\,c_3 \spref c_4 \spref d,\; b \spref p\, \} $ \\
  $\text{Voter } v_5\colon$ & $c_1 \spref c_2 \spref c_3 \spref c_4 \spref c_7 \spref c_8 \spref \{\,c_5 \spref c_6 \spref d,\; b \spref p\, \} $ \\
  $\text{Voter } v_6\colon$ & $c_1 \spref c_2 \spref c_3 \spref c_4 \spref c_5 \spref c_6 \spref \{\,c_7 \spref c_8 \spref d,\; b \spref p\, \} $ \\
  $\text{Three voter} \colon$ & $p \spref c_1 \spref c_2 \spref c_3 \spref c_4 \spref c_5 \spref c_6 \spref c_7 \spref c_8 \spref b \spref d$ \\ 
  $\text{Two voters}\colon$ & $p \spref d \spref c_1 \spref c_2 \spref c_3 \spref c_4 \spref c_5 \spref c_6 \spref c_7 \spref c_8 \spref b$\\\bottomrule
  $\text{Agenda } \partialagenda\colon$ &$b \agendapref d \agendapref p \agendapref c_1 \agendapref c_2 \agendapref c_3 \agendapref c_{4} \agendapref c_5 \agendapref c_6 \agendapref c_7 \agendapref c_8$ 
\end{tabular}
\caption{The instance~($\profiletuple, p, \partialagenda$) for \probPossibleWinner with the amendment procedure where $\alternativeset = \{c_1, c_2, c_3, c_4, c_5, c_6, c_7, c_8, p, d\}$, 
  obtained from the graph in Figure~\ref{fig:graph} and $h=2$,
  the voter set~$\voterset$ and the corresponding preference orders and the agenda are depicted above.
  By the construction of the agenda, one may notice that in order to let $p$ become an amendment winner,
  at most $h=2$ vertex voters can rank $b\pref p \pref c_j \pref d$ for some $c_j$.
  In order no beat every edge \alternative~$c_j$,
  at least one ``incident'' voter must rank $b \pref p \pref c_j$.
}
\label{tab:reduced-instance-possible-amendment}
\end{table}

We illustrate the above $\NP$-hardness reduction through an example.
Let us consider the undirected graph depicted in Figure~\ref{fig:graph} again.
We set $h\coloneqq 2$ and the corresponding vertex cover consists of the white vertices.
Then a reduced instance for \probPossibleWinner and for the successive procedure will have $6+(6-2-1) + 2 = 11$ \voters and $3+8=11$ \alternatives.
This instance can be found in Table~\ref{tab:reduced-instance-possible-amendment}.

\ \\[1ex]
We have just shown that it is $\NP$-hard to decide whether \alternative~$p$ can \emph{possibly} be a successive (or an amendment) winner under a fixed agenda~(see Theorems~\ref{thm:possiblewinner-successive-nph} and \ref{thm:possiblewinner-amendment-nph}).
In both $\NP$-hardness reductions, however, the number~$m$ of alternatives and the number~$n$ of voters are unbounded.
We will show in the following that
\probPossibleWinner is \emph{fixed-parameter tractable} ($\FPT$, 
see the textbooks of \citet{CyFoKoLoMaPiPiSa2015,DF13,FG06,Nie06} for more information on parameterized complexity theory) with respect to the parameter ``number~$m$ of \alternatives''.
That is, if $m$ is a constant,
then it is solvable in $f(m)\cdot \mathrm{poly}(|I|)$ time,
where $|I|$ is the size of a given instance of the problem and $f$ is a computable function that solely depends  on $m$.

First, we 
reduce \probPossibleWinner to a special case of the integer linear programming (ILP) problem~\citep{Len83}.
In this special case, each input instance has $f_1(m)$ variables and $f_2(m)$~constraints,
and the maximum of the absolute values of the coefficients and of the constant terms is $O(m\cdot n)$~(recall that $m$ denotes the number of alternatives and $n$ the number of voters),
where $f_1$ and $f_2$ are two functions that only depend on $m$.

\begin{theorem}\label{thm:possiblewinner-ilp}
  \thmpossiblewinnerilp
\end{theorem}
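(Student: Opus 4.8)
The plan is to branch over all completions of the partial agenda and, for each fixed agenda, to decide the existence of a profile completion making $p$ win by a single integer linear program. Since a partial agenda~$\partialagenda$ has at most $m!$ linear extensions, there are at most $m!$ agendas to try; fix one, call it~$\agenda$, and write $\agenda_i$ for its $i$-th alternative. The crucial point is that two voters whose (possibly partial) preference orders coincide are interchangeable, so I would first bucket the $n$ voters into \emph{types}, a type being one of the at most $2^{m^2}$ binary relations on~$\alternativeset$; let $n_t$ denote the number of voters of type~$t$. Completing~$\profile$ amounts to choosing, for each type-$t$ voter, a linear order extending~$t$, and since voters of the same type are indistinguishable it suffices to record \emph{how many} of them are completed to each such extension. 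Hence I introduce a nonnegative integer variable~$x_{t,\sigma}$ for every type~$t$ and every linear order~$\sigma$ extending~$t$ (at most $m!$ per type, so at most $m!\cdot 2^{m^2}$ variables in total), together with the $\le 2^{m^2}$ conservation constraints $\sum_{\sigma \text{ extends } t} x_{t,\sigma} = n_t$.

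It remains to express ``$p$ wins under~$\agenda$'' by linear constraints in the~$x_{t,\sigma}$. The key observation is that any statement of the form ``a majority of voters rank~$a$ above every alternative of a set~$S$'' linearizes: for each linear order~$\sigma$ one precomputes the bit $[\,a \succ S \text{ in } \sigma\,]$, the number of voters satisfying the statement equals $\sum_{t,\sigma} [\,a \succ S \text{ in } \sigma\,]\, x_{t,\sigma}$, and ``being (or not being) a majority'' is then one inequality with $0/1$ coefficients and a constant term of order~$n$ (clearing the factor~$\tfrac12$ to keep integrality). For the successive procedure this finishes the construction: if $p$ sits at position~$i$ of~$\agenda$, I add the $i-1$ constraints stating that each earlier~$\agenda_j$ is \emph{not} an $\{\agenda_j,\dots,\agenda_m\}$-majority winner, plus the single constraint stating that $p$ \emph{is} an $\{p,\agenda_{i+1},\dots,\agenda_m\}$-majority winner — at most $m$ extra constraints. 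Checking feasibility of this ILP for all $m!$ agendas yields the running time $O(m!\cdot \mathrm{ilp}(m!\cdot 2^{m^2},\,2^{m^2}+m,\,n))$.

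For the amendment procedure the winner is produced by a chain of pairwise contests, $w_1 = \agenda_1$ and $w_i \in \{w_{i-1},\agenda_i\}$ for $i \ge 2$, which is not directly a conjunction of linear statements because of the case split at every round. I would therefore additionally guess the sequence $(w_1,\dots,w_m)$ of round winners — at most $m$ choices per round, hence an extra factor~$m^m$ — and then ``$p$ wins'' becomes: $w_m = p$ together with, for each round $i \ge 2$, the linear majority condition ``$\agenda_i$ beats $w_{i-1}$'' when the guess says $w_i = \agenda_i$, and ``$w_{i-1}$ beats $\agenda_i$'' when it says $w_i = w_{i-1}$. This adds only $O(m)$ constraints; a careful count of how one linearizes each pairwise comparison (and clears denominators) gives the stated bounds of $2^{m^2}+3m$ constraints and constant terms at most~$2n$, hence the running time $O(m!\cdot m^m\cdot \mathrm{ilp}(m!\cdot 2^{m^2},\,2^{m^2}+3m,\,2n))$.

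The main obstacle is exactly the encoding of the winning conditions as \emph{linear} constraints over the type/extension counters: for the successive procedure this is immediate once one notices that ``$a$ above~$S$'' depends only on the completed linear order, but for the amendment procedure one must circumvent the inherent case analysis by guessing the survivor of each round, and then argue that the consistency of that guess with the chosen completion is again captured by a bounded number of majority inequalities. A secondary point needing care is the accounting that the numbers of variables and constraints depend on~$m$ alone and not on~$n$; this rests entirely on there being at most $2^{m^2}$ voter types and at most $m!$ linear extensions of each, so that the profile is fully described by the vector~$(n_t)_t$ and the unknowns~$(x_{t,\sigma})_{t,\sigma}$, with $n$ entering only through the right-hand sides.
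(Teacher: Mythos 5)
Your proposal is correct and follows essentially the same route as the paper: brute-force over the $m!$ agenda completions (plus the $m^m$ guesses of round winners for the amendment procedure), introduce one counting variable per (partial-order type, linear extension) pair, and encode the conservation and majority-winning conditions as linear constraints, yielding exactly the stated variable/constraint/coefficient bounds. The only cosmetic difference is that the paper keeps both pairwise-comparison constraints per amendment round and deactivates one via a big-$M$ term $c_i\cdot(n+1)$, whereas you would include only the relevant one; this does not affect correctness or the asymptotic counts.
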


\begin{proof}
  The general idea is to ``guess'' the completion of the agenda and the sequential outcome of the voting regarding this agenda such that our preferred~$p$ may win and use integer linear programs (ILP) to check whether the guess is valid.

  First, we introduce some notation for the description of our ILP for both procedures.
  For each partial order~$\pref$, we write $N(\pref)$ to denote the number of voters with partial order~$\pref$ in the original profile and we write $C(\pref)$ to denote the set of all possible linear orders completing $\pref$.
  For instance, if we have three alternatives~$a, b, c$,
  then $C(\{a, b\}\pref b) = \{a\pref b \pref c, b \pref a\pref c\}$ and $C(a \pref b \pref c)=\{a\pref b \pref c\}$.
  Accordingly, for each linear order~$\pref^{*}$, we write $C^{-1}(\pref^*)$ to denote the set of all partial orders that $\pref^*$ can complete.

  For each partial order~$\pref$ and each possible completion~$\pref^* \in C(\pref)$ of $\pref$,
  we introduce an integer variable~$x(\pref, \pref^*)$.
  It denotes the number of voters with partial order~$\pref$ in the input profile that will have linear
  order~$\pref^*$ in this profile's completion.
  We use these variables for both the successive and the amendment procedures.

  \newcommand{\tsumq}[3]{\quad\sum_{{\mathclap{\substack{#1\\#2}}}}{#3\quad}}
  \newcommand{\tsumlq}[3]{\quad\sum_{\mathclap{\substack{#1\\#2}}}{#3}}
  \newcommand{\tsumrq}[3]{\sum_{\mathclap{\substack{#1\\#2}}}{#3}\quad}

  \paragraph{Successive procedure.}
  Suppose that $\agenda\coloneqq a_1\agendapref a_2 \agendapref \ldots \agendapref a_y \agendapref \ldots \agendapref a_{m}$ is the guessed agenda with $a_y$ being our preferred alternative~$p$.
  We need one more notation: For each alternative~$a_i$, $1\le i \le m-1$, 
  let $F(a_i)$ denote the set of linear orders that prefer $a_i$ to the rest~$\{a_{i+1}, a_{i+2}, \ldots, a_{m}\}$.
  Now, the crucial point is that our preferred alternative~$p$ is a successive winner under agenda~$\agenda$ if and only if  the following two conditions hold.
  \begin{enumerate}
    \item\label{cond2:all-prev-alts-lose} The profile can be completed such that \emph{no} alternative~$a_i$, $1\le i \le y-1$, is an $\{a_i, a_{i+1}, \ldots, a_{m}\}$-majority winner.
    \item \label{cond3:p-wins} Alternative~$p$ is an $\{a_y, a_{y+1}, \ldots, a_{m}\}$-majority winner.
  \end{enumerate}
  We can describe these two conditions via a formulation for the ILP feasibility problem.
  \allowdisplaybreaks
  \begin{align}
   \label{cons1:exact-completion} \sum_{\mathclap{\substack{\pref^* \in C(\pref)}}}{x(\pref, \pref^*)} & = N(\pref), && \forall \text{ partial orders } \pref, \\
   \label{cons2:all-prev-alts-lose} \tsumlq{\pref^* \in F(a_i)}{\pref \in C^{-1}(\pref^*)}{x(\pref, \pref^*)} & \le \frac{n}{2}, && \forall 1 \le i \le y-1, \\
   \label{cons3:p-wins} \tsumlq{\pref^* \in F(p)}{\pref \in C^{-1}(\pref^*)}{x(\pref, \pref^*)} &> \frac{n}{2}\text{.} & &
  \end{align}

  First, Constraint~(\ref{cons1:exact-completion}) ensures that each voter's partial order is completed to exactly one linear order.
  Second, Constraints~(\ref{cons2:all-prev-alts-lose}) and (\ref{cons3:p-wins}) correspond to Conditions~\ref{cond2:all-prev-alts-lose} and \ref{cond3:p-wins}, respectively.
  The correctness of our ILP thus follows.
  
  For the running time, first of all, we brute-force search into all $m!$ possible completions of the input agenda. 
  Then,  for each of these completions, we run an ILP with at most $m! \cdot 2^{m^2}$ variables and at most $2^{m^2}+m$ constraints and where the absolute value of each coefficient and the constant term is at most~$n$.

  \paragraph{Amendment procedure.}
  As for the successive procedure, we guess a completion of the input agenda.
  Apart from this, we also guess the amendment winners in each round.
  Suppose that $\agenda\coloneqq a_1\agendapref a_2 \agendapref \ldots \agendapref a_y \agendapref \ldots \agendapref a_{m}$ is the guessed agenda with $a_y$ being our preferred alternative~$p$,
  and that for each $2\le i \le y-1$, alternative~$b_i\in \{a_1, a_2, \ldots, a_i\}$ is the guessed \nthawinner{i}. (the first round amendment winner is by definition $a_1$).
  Note that guessing the amendment winner of a round higher than or equal to $y$ is not necessary since the goal is to make $p$ the \nthawinner{z} with $z\ge y$.
  Before we turn to the ILP formulation, we first check in $O(m)$ time whether the guessed amendment winners are valid, that is,
  \begin{quote}
    for each $i \in \{2,3,\ldots,  y-1\}$ with $b_{i} \neq b_{i-1}$, it must hold that $b_i = a_i$.
  \end{quote}
  The reason of this sanity check is that under the amendment procedure, 
  an alternative~$c$ can be an amendment winner of consecutive rounds and 
  the only cause of a change of amendment winners in two consecutive rounds~$i-1$ and $i$ 
  is that the $i$th alternative~$a_i$ on the agenda beats $b_{i-1}$.
  
  After this sanity check, 
  we can use an ILP formulation to check whether a valid guess can be realized by completing our partial profile.
  To this end, let $c_i \in\{0,1\}$, $2\le i \le y-1$, be a Boolean constant such that 
  $c_i=0$ if $a_i=b_i$; otherwise $c_i=1$.
  Further, for each two distinct alternatives~$a, b$, let $G(a,b)$ denote the set of all linear orders with $a \pref b$.
  Our ILP formulation consists of four groups of constraints: 
  \allowdisplaybreaks
  \begin{align}
    \label{cons1:2exact-completion} \sum_{\mathclap{\substack{\pref^* \in C(\pref)}}}{x(\pref, \pref^*)} & = N(\pref), && \forall \text{ partial orders } \pref,\\
    \label{cons2:bi=ai} 
    \tsumq{\pref^* \in G(b_{i-1}, a_i)}{\pref \in C^{-1}(\pref^*)} {x(\pref, \pref^*)} - \tsumlq{\pref^* \in G(a_i, b_{i-1})}{\pref \in C^{-1}(\pref^*)}{x(\pref, \pref^*)} & < c_i \cdot (n+1), && \forall 2\le i \le y-1, \\
     \label{cons3:bi-neq-ai}
    -\tsumrq{\pref^* \in G(b_{i-1}, a_i)}{\pref \in C^{-1}(\pref^*)}{x(\pref, \pref^*)} + \tsumlq{\pref^* \in G(a_i, b_{i-1})}{\pref \in C^{-1}(\pref^*)}{x(\pref, \pref^*)} & < (1-c_i)\cdot (n+1), && \forall 2\le i \le y-1, \\
    \label{cons4:p-wins}
    \tsumq{\pref^* \in G(p, a_i)}{\pref \in C^{-1}(\pref^*)}{x(\pref, \pref^*)} - \tsumlq{\pref^* \in G(a_i, p)}{\pref \in C^{-1}(\pref^*)}{x(\pref, \pref^*)} & > 0, && \forall y+1 \le i \le m \text{.}
   \end{align}
   The first group (Constraint~(\ref{cons1:2exact-completion})) ensures that each voter's partial order is completed to exactly one linear order.
   The second and the third groups of constraints make $b_i$ an \nthawinner{i}:
   If $a_i=b_i$, which means that if $c_i=0$,
   then the left-hand side of Constraint (\ref{cons2:bi=ai}) is zero.
   Hence, satisfying this constraint makes $a_i$ beat $b_{i-1}$ (that is, the number of voters preferring $a_i$ to $b_{i-1}$ is larger than the number of voters preferring $b_{i-1}$ to $a_i$).
   Otherwise, 
   $c_i=1$, the left-hand side of Constraint (\ref{cons3:bi-neq-ai}) is zero.
   Hence, satisfying this constraint makes $b_{i-1}$ beat $a_i$ (that is, the number of voters preferring $b_{i-1}$ to $a_i$ is greater than the number of voters preferring $a_i$ to $b_{i-1}$).
   The last group~(Constraint~(\ref{cons4:p-wins})) ensures that for each $i \in \{y+1,y+2,\ldots, m\}$,
   the number of voters preferring $p$ to $b_i$ is greater than the number of voters preferring $b_i $ to $p$. 
   The correctness of our ILP formulation thus follows.

   As to the running time, first of all, 
   for each valid completion of the agenda and valid sequence of the amendment winners (there are $m!\cdot m^m$ many),
   we run our ILP which has  at most $2^{m^2}\cdot m!$ variables and at most $2^{m^2}+3m$ constraints,
   and where the absolute value of each coefficient and each constant term is at most $2n$.
 \end{proof}

Using the famous result of \citet{Len83} (later improved by \citet{Kan87} and \citet{FT87}),
which states that the feasibility problem of an integer linear program
can be solved in $O(\rho_1^{2.5\rho_1+o(\rho_1)}\cdot |I|)$ time where $\rho_1$ denotes the number variables and $|I|$ the size of the integer linear program,
we can derive the following tractability result
because an integer linear program with $\rho_1$ variables and $\rho_2$ constraints,
and whose coefficients and constant terms are between $-\rho_3$ and $\rho_3$,
has $O(\rho_1\cdot \rho_2 \cdot \log{(\rho_3+2)})$ input bits.

\begin{corollary}\label{cor:possiblewinner-fpt}
  \corpossiblewinnerfpt
\end{corollary}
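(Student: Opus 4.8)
The plan is to feed the ILP formulations constructed in the proof of Theorem~\ref{thm:possiblewinner-ilp} into the Lenstra--Kannan--Frank-Tardos algorithm and then bound every resulting quantity in terms of $\rho = m!\cdot 2^{m^2}$. Recall from that proof that, for the successive procedure, solving \probPossibleWinner amounts to iterating over the $m!$ completions of the agenda and, for each, testing feasibility of an ILP with at most $m!\cdot 2^{m^2} = \rho$ variables, at most $2^{m^2}+m$ constraints, and coefficients and constant terms of absolute value at most $n$; and that, for the amendment procedure, one iterates over the $m!\cdot m^m$ pairs (agenda completion, sequence of round winners) -- after the $O(m)$-time sanity check -- and for each tests feasibility of an ILP with at most $\rho$ variables, at most $2^{m^2}+3m$ constraints, and coefficients and constant terms of absolute value at most $2n$.

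First I would bound the ILP parameters uniformly. Since $\rho = m!\cdot 2^{m^2} \ge 2^{m^2}$, the number of constraints is $2^{m^2}+3m = O(\rho)$ and the maximum coefficient magnitude $\rho_3$ is at most $2n$. Hence, by the bit-size estimate recalled just before the corollary, each such ILP has input size $|I| = O(\rho_1\cdot\rho_2\cdot\log(\rho_3+2)) = O(\rho^2\cdot\log(n+2))$ (using $\log(2n+2) = O(\log(n+2))$). Plugging $\rho_1 \le \rho$ into the $O(\rho_1^{2.5\rho_1+o(\rho_1)}\cdot|I|)$ bound of \citet{Len83,Kan87,FT87} -- and noting that $x\mapsto x^{2.5x}$ is nondecreasing for $x\ge 1$, so $\rho_1^{2.5\rho_1+o(\rho_1)}\le \rho^{2.5\rho+o(\rho)}$ -- a single feasibility test runs in $O(\rho^{2.5\rho+o(\rho)}\cdot\rho^2\cdot\log(n+2)) = O(\rho^{2.5\rho+o(\rho)+2}\cdot\log(n+2))$ time.

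It then remains to multiply by the number of ILP calls and check that this factor is harmless. For the successive procedure there are $m! \le \rho$ calls. For the amendment procedure there are $m!\cdot m^m$ calls, and since $m^m \le 2^{m^2}$ (because $m\log_2 m \le m^2$) we get $m!\cdot m^m \le m!\cdot 2^{m^2} = \rho$; so in both cases the number of calls is at most $\rho = \rho^{O(1)}$. As $\rho^{O(1)}\subseteq \rho^{o(\rho)}$ and the sum of two $o(\rho)$ exponents is again $o(\rho)$, this polynomial factor merges into the $\rho^{o(\rho)}$ already present, and the overall running time for both procedures is $O(\rho^{2.5\rho+o(\rho)+2}\cdot\log(n+2))$, as claimed.

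The corollary is essentially a bookkeeping exercise sitting on top of Theorem~\ref{thm:possiblewinner-ilp}, so I do not expect a genuine obstacle; the one point that needs care is tracking which quantities enter the exponent and which remain as separate factors -- specifically, that the $O(\rho^2)$ factor coming from $|I|$ is exactly what produces the ``$+2$'' in the exponent, that the $\log(n+2)$ coming from the coefficient magnitudes stays outside the exponent, and that the (merely polynomial in $\rho$) number of ILP calls is absorbed by the $\rho^{o(\rho)}$ term rather than contributing to the ``$+2$''.
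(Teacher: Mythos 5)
Your proposal is correct and follows exactly the route the paper intends: the corollary is derived by plugging the parameter bounds from Theorem~\ref{thm:possiblewinner-ilp} into the Lenstra--Kannan--Frank--Tardos bound via the stated $O(\rho_1\cdot\rho_2\cdot\log(\rho_3+2))$ input-size estimate, with the $m!$ (resp.\ $m!\cdot m^m\le\rho$) outer loop absorbed into the $\rho^{o(\rho)}$ term. Your bookkeeping of which factors land in the exponent (the $\rho^2$ from $|I|$ giving the ``$+2$'', the $\log(n+2)$ staying outside) matches the paper's statement precisely.
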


We close this section by remarking that the complexity result in in Corollary~\ref{cor:possiblewinner-fpt} of classification nature only.
It would be interesting to know whether 
our fixed-parameter tractability results achieved through integer linear programming 
can also be achieved by a direct combinatorial (fixed-parameter) algorithm 
(cf.~\citet[Key question 1]{BreCheFalGuoNieWoe2014}).



\subsection{Necessary Winner}
Different from the possible winner problem, notably, 
the successive and the amendment procedures have different computational complexity regarding the necessary winner problem.
Throughout the rest of this section, 
we assume that a \probNecessaryWinner instance~$(\profile=\profiletuple, p, \partialagenda)$ is given.

\paragraph{Successive procedure.}
We first consider the successive procedure and show that deciding whether our preferred \alternative is (not) necessarily a successive winner can be solved in polynomial time.
We introduce one notion regarding the alternatives in a (possibly partial) agenda.

\newcommand{\Precp}{\ensuremath{\partialagenda^\Leftarrow_p}}
\newcommand{\Succp}{\ensuremath{A^\rightarrow_p}}
\newcommand{\Preca}{\ensuremath{A^\leftarrow_a}}
\newcommand{\PSuccc}[1]{\ensuremath{\partialagenda^{\rightarrow}_{#1}}}
\newcommand{\PPrecc}[1]{\ensuremath{\partialagenda^{\leftarrow}_{#1}}}

\newcommand{\Precc}[2]{\ensuremath{{#1}^\Leftarrow_{#2}}}
\newcommand{\Succc}[2]{\ensuremath{{#1}^\Rightarrow_{#2}}}
\newcommand{\Incompc}[2]{\ensuremath{{#1}^{\sim}_{#2}}}

\begin{definition}
  Given a (possibly partial) agenda~$\partialagenda$,
  for each alternative~$c \in \alternativeset$,
  let $\Precc{\partialagenda}{c}$ be the set of all \alternatives~$c'$ that are ordered in front of $c$ by $\partialagenda$, that is, $c'\agendapref c$.
  Let $\Incompc{\partialagenda}{c}$ be the set of all \alternatives~$c'$ whose relative position to $c$ are not specified by $\partialagenda$, and 
  let $\Succc{\partialagenda}{c}$ be the set of all \alternatives~$c'$ that are ordered behind $c$ by $\partialagenda$, that is, $c\agendapref c'$.

\end{definition}

Note that for each alternative~$c\in \alternativeset$,
the three sets~$\Precc{\partialagenda}{c}$, $\Incompc{\partialagenda}{c}$, 
and $\Succc{\partialagenda}{c}$ are pairwise disjoint and that $\Precc{\partialagenda}{c} \cup \Incompc{\partialagenda}{c} \cup \Succc{\partialagenda}{c} = \alternativeset \setminus \{c\}$.

\medskip
We derive the main idea behind our polynomial-time algorithm from the following simple observation.
 
\begin{observation} \label{obs:conditions-p-not-necessary-winner}
  Alternative~$p$ is not a necessary successive winner
  if and only if there is a completion~($\profile^*, \agenda$) of the profile~$\profile$ and of the agenda~$\partialagenda$ such that some other \alternative may win, 
  that is, such that
  \begin{enumerate}
    \item\label{cond1} $p \text{ is not an } (\Succc{\agenda}{p} \cup \{p\})\text{-majority winner or} $
    \item\label{cond2} there is an \alternative~$c \in \Precc{\agenda}{p}$ that is an $(\Succc{\agenda}{c} \cup \{c\})\text{-majority winner}$.
  \end{enumerate}
\end{observation}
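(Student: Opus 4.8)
The plan is to simply unfold the definitions of ``necessary successive winner'' and of ``successive winner'' and to observe that Conditions~\ref{cond1} and~\ref{cond2} together are precisely the negation of ``$p$ wins''. First I would recall that $p$ is a necessary successive winner if and only if for \emph{every} completion~$(\profile^*, \agenda)$ of~$(\profile, \partialagenda)$ alternative~$p$ is the successive winner in~$\profile^*$ under~$\agenda$; hence $p$ is \emph{not} a necessary successive winner if and only if there is \emph{some} completion~$(\profile^*, \agenda)$ under which $p$ is not the successive winner. Since for every complete profile and every linear agenda the successive procedure outputs exactly one winner---the last alternative on the agenda trivially passes the majority test against the (empty) set of alternatives behind it because the number of voters is positive---the statement ``$p$ is not the successive winner'' is equivalent to ``some alternative other than~$p$ is the successive winner'', which is exactly the phrase ``some other \alternative may win''.

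Second, I would characterize the successive winner in a fixed complete instance~$(\profile^*, \agenda)$ directly from Definition~\ref{def:successive}: walking through the agenda $a_1 \agendapref a_2 \agendapref \cdots \agendapref a_m$, the winner is the \emph{first} alternative~$a_i$ that is an $(\{a_i\} \cup \Succc{\agenda}{a_i})$-majority winner in~$\profile^*$. Consequently, writing $p = a_y$, alternative~$p$ is the successive winner in~$(\profile^*, \agenda)$ if and only if (i)~$p$ is an $(\Succc{\agenda}{p}\cup\{p\})$-majority winner and (ii)~no alternative~$c \in \Precc{\agenda}{p}$ is a $(\Succc{\agenda}{c}\cup\{c\})$-majority winner. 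Negating this conjunction yields exactly: $p$ is \emph{not} the successive winner in~$(\profile^*, \agenda)$ if and only if Condition~\ref{cond1} holds (negation of~(i)) or Condition~\ref{cond2} holds (negation of~(ii)) for this very pair~$(\profile^*, \agenda)$.

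Finally I would chain the two steps: $p$ is not a necessary successive winner $\iff$ there is a completion~$(\profile^*, \agenda)$ under which $p$ is not the successive winner $\iff$ there is a completion~$(\profile^*, \agenda)$ satisfying Condition~\ref{cond1} or Condition~\ref{cond2}, which is the claimed statement. There is no genuine obstacle here; the only points that need a little care are (a)~keeping the existential quantifier ranging \emph{jointly} over the profile completion and the agenda completion, so that the sets $\Succc{\agenda}{\cdot}$ and $\Precc{\agenda}{\cdot}$ in the two conditions all refer to the same chosen agenda~$\agenda$, and (b)~recording that the successive procedure always produces a winner, so that ``$p$ does not win'' and ``another \alternative wins'' indeed coincide.
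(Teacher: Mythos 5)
Your proposal is correct and matches the paper's (implicit) justification: the paper states this as a ``simple observation'' without a written proof, and the intended argument is exactly your unfolding of the definitions --- the successive winner is the first alternative on the agenda that is a majority winner against itself and its successors (the last alternative passing vacuously, so a winner always exists), and negating the resulting conjunction over an existentially quantified completion yields Conditions~\ref{cond1} and~\ref{cond2}. Your two points of care (quantifying jointly over profile and agenda completions, and noting that ``$p$ does not win'' coincides with ``some other alternative wins'') are exactly the right ones, and the second half of Condition~\ref{cond2} is handled correctly: a $(\Succc{\agenda}{c}\cup\{c\})$-majority winner in front of $p$ need not itself win, but its existence already precludes $p$ from winning.
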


We will show that checking whether there is a completion satisfying one of the above conditions can be done in polynomial time.
First, we need one more notion and we need to reformulate the above conditions.

\begin{definition}
  Let $c$ be an arbitrary alternative.
  Consider a profile~$\profile^{*}$ 
  that completes the relative order of each incomparable pair~$X$ of \alternatives with $c \notin X$ in $\profile$ according to an arbitrary but fixed order.  
  We call the profile~$\profile^*$ a \emph{$c$-discriminating} profile if 
  for each alternative~$c'$ with $c'\neq c$ 
  and for each voter~$v$ that thinks $c'$ and $c$ are incomparable,
  $\profile^{*}$ completes $v$'s preference order to satisfy $c' \pref c$. 
  We call the profile~$\profile^*$ a \emph{$c$-privileging} profile if 
  for each alternative~$c'$ with $c'\neq c$ 
  and for each voter~$v$ that thinks $c'$ and $c$ are incomparable,
  $\profile^{*}$  completes $v$'s preference order to satisfy $c \pref c'$. 
\end{definition}

Note that a $c$-discriminating (resp.\ $c$-privileging) profile is unique. 
We give an example to illustrate these two concepts.

\begin{example}
  \label{ex:profile2}
  Let $\profile$ be a preference profile with four \alternatives~$a, b, c, d$,
  and two voters~$v_1, v_2$ whose partial preference orders are specified as
  \begin{align*}
    v_1\colon b \pref c \pref d, \quad v_2\colon d \pref b\text{.}
  \end{align*}
  Let the fixed order over the \alternatives be $a \pref b \pref c \pref d$.

  The profile whose voters' preference orders are 
  \begin{align*}
    v_1\colon a \pref b \pref c \pref d, \quad v_2\colon a \pref d \pref b \pref c \text{.}
  \end{align*}
  is $c$-discriminating (unique).

  The profile whose voters' preference orders are 
  \begin{align*}
    v_1\colon b \pref c \pref a \pref d, \quad v_2\colon c \pref a \pref d \pref b\text{.}
  \end{align*}
  is $c$-privileging (unique).
\end{example}

Now, we can rephrase Conditions~\ref{cond1} and \ref{cond2} (from Observation~\ref{obs:conditions-p-not-necessary-winner}) in a way that we can verify them in polynomial time.
\begin{lemma}\label{lem:claimtwo}
  \claimtwo
\end{lemma}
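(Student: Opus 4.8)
\emph{Plan.} Put $S \coloneqq \alternativeset \setminus \Precc{\partialagenda}{p}$, so that $p \in S$ and $S \setminus \{p\} = \Succc{\partialagenda}{p} \cup \Incompc{\partialagenda}{p}$. The guiding idea is that, over all completions $(\profile^*,\agenda)$ of $(\profile,\partialagenda)$, Condition~\ref{cond1} is easiest to satisfy when the agenda pushes \emph{every} alternative of $\Incompc{\partialagenda}{p}$ behind~$p$ (making $\Succc{\agenda}{p}\cup\{p\}=S$ as large as possible, which only helps violating ``$p$ is a majority winner'') and when the profile is completed in the way least favourable to~$p$, namely the $p$-discriminating completion~$\profile^*$. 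I would therefore prove the two implications by, respectively, constructing such an extremal completion and comparing an arbitrary one against it.

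For the ``if'' direction I would start from the assumption that $p$ is not an $S$-majority winner in~$\profile^*$ and build a witnessing completion. First I would check that $\partialagenda$ admits an extension~$\agenda$ with $\Succc{\agenda}{p}=S\setminus\{p\}$: the set $\Precc{\partialagenda}{p}$ is a down-set of the restriction of $\partialagenda$ to $\alternativeset\setminus\{p\}$ (if $y \agendapref x \agendapref p$ then $y\agendapref p$), so by the standard fact that a finite poset has a linear extension in which a prescribed down-set forms a prefix, there is a linear order on $\alternativeset\setminus\{p\}$ extending $\partialagenda$ and placing all of $\Precc{\partialagenda}{p}$ before all of $S\setminus\{p\}$; inserting $p$ immediately after $\Precc{\partialagenda}{p}$ gives a linear order~$\agenda$ that still extends $\partialagenda$ (every $x\agendapref p$ has $x\in\Precc{\partialagenda}{p}$ and every $p\agendapref z$ has $z\in\Succc{\partialagenda}{p}\subseteq S\setminus\{p\}$) and has $\Succc{\agenda}{p}=S\setminus\{p\}$. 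Then $(\profile^*,\agenda)$ is a completion of $(\profile,\partialagenda)$ with $\Succc{\agenda}{p}\cup\{p\}=S$, and the hypothesis says precisely that $p$ is not a $(\Succc{\agenda}{p}\cup\{p\})$-majority winner there, i.e., Condition~\ref{cond1} holds.

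For the ``only if'' direction I would take an arbitrary completion $(\profile',\agenda)$ of $(\profile,\partialagenda)$ satisfying Condition~\ref{cond1} and argue by a supporter-set inclusion. Since $\agenda$ extends $\partialagenda$, every alternative of $\Precc{\partialagenda}{p}$ precedes~$p$ in~$\agenda$, so $\Succc{\agenda}{p}\subseteq S\setminus\{p\}$. The key monotonicity observation is: any voter~$v$ who ranks $p$ above all of $S\setminus\{p\}$ in~$\profile^*$ already has $p\pref_v c$ in the original partial order $\pref_v$ for every $c\in S\setminus\{p\}$ — because $\profile^*$ places every alternative incomparable to~$p$ in~$\pref_v$ \emph{before}~$p$, so no $c\in S\setminus\{p\}$ can be incomparable to, or ranked above, $p$ in~$\pref_v$. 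This relation is inherited by $\profile'$, and since $\Succc{\agenda}{p}\subseteq S\setminus\{p\}$, such a~$v$ also ranks $p$ above all of $\Succc{\agenda}{p}$ in~$\profile'$. Hence the set of voters supporting $p$ against $S\setminus\{p\}$ in~$\profile^*$ is contained in the set supporting $p$ against $\Succc{\agenda}{p}$ in~$\profile'$, which by Condition~\ref{cond1} has size at most $n/2$; so $p$ is not an $S$-majority winner in~$\profile^*$.

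The only non-routine point is the constructive step in the ``if'' direction — verifying that $\partialagenda$ really can be extended so as to move the \emph{entire} incomparable set $\Incompc{\partialagenda}{p}$ behind~$p$ without breaking any order constraint. Once the down-set property of $\Precc{\partialagenda}{p}$ is observed, this reduces to a standard linear-extension argument, and everything else is the one-line supporter-set inclusion above.
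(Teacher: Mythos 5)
Your proof is correct and follows essentially the same route as the paper's: for the ``if'' direction, the $p$-discriminating completion paired with an agenda of the form $\Precc{\partialagenda}{p} \agendapref p \agendapref (\Incompc{\partialagenda}{p}\cup\Succc{\partialagenda}{p})$, and for the ``only if'' direction, the observation that any voter supporting $p$ against all of $\alternativeset\setminus(\Precc{\partialagenda}{p}\cup\{p\})$ in the $p$-discriminating completion already does so in the underlying partial order, hence in every completion and against the smaller set $\Succc{\agenda}{p}$. Your explicit down-set argument showing that such an agenda extension of $\partialagenda$ actually exists is a detail the paper leaves implicit, but otherwise the two arguments coincide.
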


\begin{proof}
  For the ``only if'' case, suppose that there is a completion~$(\profile^*, \agenda)$ of $(\profile, \partialagenda)$ that satisfies condition~\ref{cond1}.
  This means that more than half of the voters in~$\profile^*$ \emph{do not} 
  prefer $p$ to $\Succc{\agenda}{p}$.
  Since $\agenda$ completes $\partialagenda$,
  this implies that $\Succc{\agenda}{p} \subseteq \alternativeset \setminus (\Precc{\partialagenda}{p} \cup \{p\})$.
  Now, consider the $p$-discriminating profile~$\profile^{**}$.
  Since each voter in $\profile^{**}$ that does not prefer $p$ to $\Succc{\agenda}{p}$ will certainly not prefer $p$ to $\Succc{\agenda}{p}$ in $\profile^{**}$ (because this profile ``discriminates'' $p$),
  it must hold that 
  more than half of the voters in $\profile^{**}$ do not prefer $p$ to $\Succc{\agenda}{p}\subseteq \alternativeset \setminus (\Precc{\partialagenda}{p} \cup \{p\})$.
  Thus, $p$ is not an $(\alternativeset \setminus \Succc{\partialagenda}{p})$-majority winner in $\profile^{**}$.

  For the ``if'' case, suppose that $p$ is not an $(\alternativeset \setminus \Precc{\partialagenda}{p})$-majority winner in the $p$-discriminating profile~$\profile^{**}$.
  Consider an arbitrary agenda~$\agenda$ that satisfies 
  \[\Precc{\partialagenda}{p}\agendapref p \agendapref (\Incompc{\partialagenda}{p} \cup \Succc{\partialagenda}{p}) \text{.}\]
  We can easily verify that $(\profile^{**}, \agenda)$ satisfies Condition~\ref{cond1} because $p$ is not a $(\Incompc{\partialagenda}{p} \cup \Succc{\partialagenda}{p})$-majority winner.
\end{proof}

\begin{lemma}\label{lem:claimthree}
  \claimthree
\end{lemma}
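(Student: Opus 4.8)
The plan is to prove the two directions of the equivalence separately, using three elementary facts together with the standing hypothesis that no completion of $(\profile,\partialagenda)$ satisfies Condition~\ref{cond1}. \emph{(a)}~If $S'\subseteq S\subseteq\alternativeset$ and an alternative~$c$ is an $(S\cup\{c\})$-majority winner in a completion~$\profile^{*}$ of~$\profile$, then $c$ is also an $(S'\cup\{c\})$-majority winner in~$\profile^{*}$, since every voter who ranks $c$ ahead of all of~$S$ ranks $c$ ahead of all of~$S'$. \emph{(b)}~For a fixed alternative~$c$ and a set~$S$, any voter who ranks $c$ ahead of all of~$S$ in \emph{some} completion of~$\profile$ also does so in the $c$-privileging completion (such a voter cannot have had $s\pref c$ in its partial order for any $s\in S$); hence an $(S\cup\{c\})$-majority winner in any completion is one in the $c$-privileging completion. \emph{(c)}~For any alternative~$b$ there is an agenda~$\agenda$ completing~$\partialagenda$ with $\Succc{\agenda}{b}=\Succc{\partialagenda}{b}$ (place all of $\Incompc{\partialagenda}{b}$ in front of~$b$; transitivity of~$\partialagenda$ makes the cut right after~$b$ consistent, so a linear extension exists), and, symmetrically, if $b$ and $p$ are incomparable in~$\partialagenda$ there is an agenda~$\agenda'$ completing~$\partialagenda$ with $b\in\Succc{\agenda'}{p}$.

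For the ``if'' direction I would start from a witness $b\in\Precc{\partialagenda}{p}$ that is a $(\Succc{\partialagenda}{b}\cup\{b\})$-majority winner in the $b$-privileging completion~$\profile^{b}$, take the agenda~$\agenda$ supplied by fact~\emph{(c)} with $\Succc{\agenda}{b}=\Succc{\partialagenda}{b}$, and observe that $b\agendapref p$ holds in~$\agenda$ (because $b\in\Precc{\partialagenda}{p}$ and $\agenda$ extends~$\partialagenda$), so $b\in\Precc{\agenda}{p}$ and $b$ is a $(\Succc{\agenda}{b}\cup\{b\})$-majority winner in~$\profile^{b}$; that is, $(\profile^{b},\agenda)$ satisfies Condition~\ref{cond2}.

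For the ``only if'' direction, suppose $(\profile^{*},\agenda)$ satisfies Condition~\ref{cond2}, witnessed by an alternative~$c$ with $c\agendapref p$ that is a $(\Succc{\agenda}{c}\cup\{c\})$-majority winner in~$\profile^{*}$ (in particular $p\in\Succc{\agenda}{c}$). The crucial step, which I expect to be the main obstacle, is to rule out the case that $c$ is merely placed in front of~$p$ by~$\agenda$ while being incomparable to~$p$ in~$\partialagenda$; this is exactly where the standing hypothesis is indispensable. If $c$ and $p$ were incomparable in~$\partialagenda$, pick the agenda~$\agenda'$ from fact~\emph{(c)} with $c\in\Succc{\agenda'}{p}$; since $(\profile^{*},\agenda')$ is a completion of~$(\profile,\partialagenda)$, the hypothesis forces it to fail Condition~\ref{cond1}, so $p$ is a $(\Succc{\agenda'}{p}\cup\{p\})$-majority winner in~$\profile^{*}$, whence a strict majority of~$\profile^{*}$ prefers $p$ to~$c$; but the witness property of~$c$ together with $p\in\Succc{\agenda}{c}$ gives a strict majority preferring $c$ to~$p$, contradicting the number of voters being odd. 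Therefore $c\in\Precc{\partialagenda}{p}$, and since $\Succc{\partialagenda}{c}\subseteq\Succc{\agenda}{c}$, facts~\emph{(a)} and~\emph{(b)} show that $c$ is a $(\Succc{\partialagenda}{c}\cup\{c\})$-majority winner in the $c$-privileging completion; setting $b\coloneqq c$ then yields the right-hand side. It remains to double-check the routine order-extension claims underlying fact~\emph{(c)} and to confirm that the profile involved is a completion of~$\profile$ at every step.
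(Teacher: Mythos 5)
Your proposal is correct and follows essentially the same route as the paper: the ``if'' direction builds the agenda $\Precc{\partialagenda}{b}\agendapref\Incompc{\partialagenda}{b}\agendapref b\agendapref\Succc{\partialagenda}{b}$, and the ``only if'' direction uses the standing hypothesis (failure of Condition~\ref{cond1} for every completion) exactly as the paper does, to rule out $c\in\Incompc{\partialagenda}{p}$, and then transfers the majority-winner property via $\Succc{\partialagenda}{c}\subseteq\Succc{\agenda}{c}$ and the $c$-privileging completion. The only cosmetic blemish is the phrase ``contradicting the number of voters being odd'': the actual contradiction is that two strict majorities over the same pair must share a voter who would then prefer $p$ to $c$ and $c$ to $p$ simultaneously, which needs no parity assumption.
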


\begin{proof}
  For the ``only if'' case, 
  suppose that $(\profile^*, \agenda)$ is a completion of $(\profile, \partialagenda)$ which satisfies Condition~\ref{cond2} and let $c\in \Precc{\agenda}{p}$ be such an ($\Succc{\agenda}{c}\cup \{c\}$)-majority winner.
  This implies that $c$ beats $p$ in $\profile^*$ because $p \in \Succc{\agenda}{c}$.
  Observe that $c \in \Precc{\partialagenda}{p} \cup \Incompc{\partialagenda}{p}$ holds
  because $\agenda$ completes $\partialagenda$. 
  If $c\in \Incompc{\partialagenda}{p}$ holds, that is, 
  if $\partialagenda$ does not specify the relative order of $c$ and $p$,
  then the same profile~$\profile^*$ and an agenda that completes $\partialagenda$ and that satisfies $p \agendapref c$ fulfills Condition~\ref{cond1} which is not possible by our assumption.
  Thus, $c \in \Precc{\partialagenda}{p}\text{.}$
  
  Now, we show that $c$ is a $(\Succc{\partialagenda}{c}\cup \{c\})$-majority winner in the $c$-privileging completion~$\profile^{**}$ of~$\profile$.
  Observe that each voter~$v$ in $\profile^{*}$ that prefers $c$ to $\Succc{\agenda}{c}$
  must also prefer $c$ to $\Succc{\partialagenda}{c}$ because $\agenda$ completes $\partialagenda$.
  Together with the assumption that $c$ is an ($\Succc{\agenda}{c}\cup \{c\}$)-majority winner in $\profile^*$,
  this implies that more than half of the voters in $\profile^*$ prefer $c$ to $\Succc{\partialagenda}{c}$.
  Since each voter in $\profile^*$ that prefers $c$ to $\Succc{\partialagenda}{c}$ 
  will still prefer $c$ to $\Succc{\partialagenda}{c}$ in the $c$-privileging profile~$\profile^{**}$,
  more than half of the voters in $\profile^{**}$ prefer $c$ to $\Succc{\partialagenda}{c}$.

  For the ``if'' case, 
  suppose that $\Precc{\partialagenda}{p}$ contains an alternative~$b$ that is a $\Succc{\partialagenda}{b}\cup \{b\}$-majority winner in the $b$-privileging completion~$\profile^{**}$ of $\profile$.
  Now, consider an agenda~$\agenda$ that completes $\partialagenda$ and that satisfies
  \[\Precc{\partialagenda}{b} \agendapref \Incompc{\partialagenda}{b} \agendapref c \agendapref \Succc{\partialagenda}{b}\text{.}\]
  We can easily verify that $(\profile^{**},\agenda)$ satisfies Condition~\ref{cond2} (with respect to $b$).
\end{proof}

Now, we have all ingredients to show that deciding on a necessary successive winner is polynomial-time solvable.

\begin{theorem}\label{thm:necessarywinner-successive-p}
  \thmnecessarywinnersuccessive
\end{theorem}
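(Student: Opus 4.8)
The plan is to read off a direct algorithm from Observation~\ref{obs:conditions-p-not-necessary-winner} together with Lemmas~\ref{lem:claimtwo} and~\ref{lem:claimthree}: the alternative~$p$ fails to be a necessary successive winner exactly when some completion of $(\profile,\partialagenda)$ satisfies Condition~\ref{cond1}, or (in case no completion does) some completion satisfies Condition~\ref{cond2}; and by the two lemmas each of these possibilities reduces to a handful of majority-winner tests on canonical completions.

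First I would compute $\Precc{\partialagenda}{p}$ and test whether $p$ is an $(\alternativeset\setminus\Precc{\partialagenda}{p})$-majority winner in the (unique) $p$-discriminating completion of~$\profile$. By Lemma~\ref{lem:claimtwo} this test fails precisely when some completion satisfies Condition~\ref{cond1}, in which case Observation~\ref{obs:conditions-p-not-necessary-winner} lets me immediately report that $p$ is not a necessary winner. Otherwise no completion satisfies Condition~\ref{cond1}, so Lemma~\ref{lem:claimthree} becomes applicable, and I would loop over every $b\in\Precc{\partialagenda}{p}$ and test whether $b$ is a $(\Succc{\partialagenda}{b}\cup\{b\})$-majority winner in the (unique) $b$-privileging completion of~$\profile$. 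If some $b$ passes, a completion satisfying Condition~\ref{cond2} exists and I again report that $p$ is not a necessary winner; if none passes, then by Observation~\ref{obs:conditions-p-not-necessary-winner} no completion of $(\profile,\partialagenda)$ lets another alternative win, so $p$ is a necessary successive winner and I answer ``yes''. Correctness is immediate from the three cited statements; the one thing to be careful about is to apply Lemma~\ref{lem:claimthree} only after the Condition~\ref{cond1} test has already failed, since that is part of the lemma's hypothesis.

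For the running time I would observe that all the needed tests have the form ``does a majority of voters prefer $x$ to a given subset of alternatives?'', and that in a discriminating or privileging completion such a test never actually requires materializing the completion: $p$ beats $c$ in the $p$-discriminating profile iff $p\pref_v c$ already holds for~$v$, and $b$ beats $c$ in the $b$-privileging profile iff $c\pref_v b$ does \emph{not} already hold for~$v$. Thus, after an $O(n\cdot m^3)$ preprocessing that computes the transitive closure of each voter's partial order (so that each such relational query is answered in constant time), the $p$-discriminating test costs $O(n\cdot m)$, and the Condition~\ref{cond2} phase performs at most $m$ further tests of cost $O(n\cdot m)$ each; everything fits within $O(n\cdot m^3)$.

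I do not expect a genuine conceptual obstacle here, since the combinatorial content is entirely carried by the already-established Observation and two Lemmas; the only real work is bookkeeping --- choosing a representation of the partial preference orders under which the relational queries and majority counts are cheap, and verifying that the $O(n\cdot m^3)$ bound holds under that representation.
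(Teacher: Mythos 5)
Your proposal is correct and follows essentially the same route as the paper: reduce the question via Observation~\ref{obs:conditions-p-not-necessary-winner} and Lemmas~\ref{lem:claimtwo} and~\ref{lem:claimthree} to one majority test on the $p$-discriminating completion plus at most $m$ tests on $b$-privileging completions for $b\in\Precc{\partialagenda}{p}$, yielding $O(n\cdot m^3)$ overall. The only difference is bookkeeping: the paper materializes each canonical completion in $O(n\cdot m^2)$ and runs each majority test in $O(n\cdot m^2)$, whereas you answer the pairwise queries from precomputed transitive closures --- both give the same bound.
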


\begin{proof}
  By Observation~\ref{obs:conditions-p-not-necessary-winner} and by Lemmas~\ref{lem:claimtwo} and \ref{lem:claimthree},
  we can conclude that 
  $p$ is not a necessary winner if and only if 
  \begin{enumerate}
    \item $p$ is not an $(\alternativeset\setminus \Precc{\partialagenda}{p})$-majority winner in the $p$-discriminating completion of $\profile$ or
    \item there is an alternative $c \in \Precc{\partialagenda}{p}$ being a $(\Succc{\partialagenda}{c}\cup \{c\})$-majority winner in the $c$-privileging completion of~$\profile$.
  \end{enumerate}
  
  \begin{algorithm}[t!]
    \SetCommentSty{\color{gray}}
    \SetAlgoVlined
    \SetKwInput{Input}{Input}
    \SetKwBlock{Block}{}{}

    \SetKw{KwNo}{``no''}
    \SetKw{KwYes}{``yes''}
 
    \Input{
        $(\profile = \profiletuple, p, \partialagenda)$ \comm{---  an instance of \probNecessaryWinner}
    }

      Compute the $p$-discriminating completion $\profile^*$ of $\profile$\\
      \If{$p$ is not an $(\alternativeset\setminus \Precc{\partialagenda}{p})$-majority winner in $\profile^*$}{
        \Return \KwNo
      }
      \ForEach{\alternative~$c \in \Precc{\partialagenda}{p}$}{\label{loop-begin}
        Compute the $c$-privileging completion $\profile^{**}$ of $\profile$\\
        \If{$c$ is a $(\Succc{\partialagenda}{c}\cup \{c\})$-majority winner in $\profile^{**}$}
        {\Return \KwNo}
      }\label{loop-end}
      \Return \KwYes
      \caption{An algorithm checking whether $p$ is a necessary successive winner.}
      \label{alg:necessary-successive}
  \end{algorithm}

  We make a remark on the second requirement:
  A $(\Succc{\partialagenda}{c}\cup \{c\})$-majority
  winner is not guaranteed to win the successive procedure because it could be the case that some other alternative in front of it in the agenda already wins the majority. 
  Nevertheless, $p$ will not be a successive winner in this case.

  We describe our general approach in Algorithm~\ref{alg:necessary-successive},
  which checks whether one of the two requirements is fulfilled.
  Fortunately, this can be done in polynomial time:
  Computing the $p$-discriminating or the~$c$-privileging completion for some alternative~$c\in \alternativeset\setminus \{p\}$ takes $O(n \cdot m^2)$ time and
  finding the majority winner also takes $O(n \cdot m^2)$ time.
  The algorithm iterates at most $m$~times through the loop in Steps~(\ref{loop-begin})-(\ref{loop-end}).
  Altogether it takes $O(n \cdot m^3)$ time.    
\end{proof}

\paragraph{Amendment procedure.}
  Adapting the \textsc{Vertex Cover} reduction from the proof of Theorem~\ref{thm:possiblewinner-amendment-nph}, 
  we can show that \probNecessaryWinner for the amendment procedure is $\coNP$-hard. 

  \begin{theorem}\label{thm:necessarywinner-amendment-nph}
    \thmnecessarywinneramendment
  \end{theorem}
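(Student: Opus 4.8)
The plan is to reduce from \probVC, essentially reusing the machinery of Theorem~\ref{thm:possiblewinner-amendment-nph}. Recall that in a \probNecessaryWinner instance the question is whether $p$ wins in \emph{every} completion and for \emph{every} agenda extending the partial agenda. Since \probNecessaryWinner for the amendment procedure is in $\coNP$ (the winner can be computed in polynomial time, so a ``no''-certificate is a completion plus an agenda under which $p$ loses), it suffices to show $\coNP$-hardness with a \emph{fixed} agenda. Equivalently, I want to construct, from a \probVC instance $(G,h)$, a profile $\profile$ and a fixed agenda $\partialagenda$ together with a distinguished alternative $q$ such that $q$ is \emph{not} a necessary winner if and only if $G$ has a vertex cover of size at most~$h$.

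The key trick is to take the alternative ``$p$'' of the \probPossibleWinner construction in Theorem~\ref{thm:possiblewinner-amendment-nph} and make it the \emph{witness of failure} rather than the target: introduce one more alternative $q$ that is beaten by $p$ in the majority graph (so $q$ wins whenever nothing in front of it survives), place $q$ last (or in the position that forces it to be confronted with the survivor of the old construction), and let every other comparison be arranged so that $q$ is an amendment winner precisely when $p$ does \emph{not} survive. Concretely, I would keep the edge alternatives $c_j$, the helper alternative $b$, the dummy $d$, and the old preferred alternative $p$ as in Theorem~\ref{thm:possiblewinner-amendment-nph}, add $q$ with $p$ beating $q$ by a strict majority, fix the agenda $b \agendapref d \agendapref p \agendapref c_s \agendapref \cdots \agendapref c_1 \agendapref q$, and choose the auxiliary-voter blocks so that $q$ beats every alternative \emph{except} $p$. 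Then along the fixed agenda the amendment winner is $q$ iff $p$ was eliminated somewhere before $q$ is reached, which — by the analysis already carried out in Theorem~\ref{thm:possiblewinner-amendment-nph} — happens in some completion of the vertex voters' partial orders iff $G$ admits a vertex cover of size at most~$h$. Thus $q$ (and hence, negating, $p$) is a necessary winner iff $G$ has \emph{no} such vertex cover.

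The main steps, in order: (i) state the construction precisely, reusing the vertex-voter gadget of Theorem~\ref{thm:possiblewinner-amendment-nph} and adding $q$ together with the calibrated auxiliary voters; (ii) verify membership in $\coNP$ (immediate from polynomial-time winner determination); (iii) prove the forward direction — given a vertex cover of size $\le h$, complete the vertex voters exactly as in the ``only if'' part of Theorem~\ref{thm:possiblewinner-amendment-nph} so that $p$ is eliminated before reaching $q$, which makes $q$ the amendment winner and certifies that $p$ is not a necessary winner; (iv) prove the backward direction — if $p$ is not a necessary winner, then in the losing completion $p$ must fail to beat some $c_j$ (or fail one of the earlier rounds against $b$ or $d$), and the same counting argument as in Theorem~\ref{thm:possiblewinner-amendment-nph} forces the set of vertex voters that keep $p$ ahead of the relevant alternatives to correspond to a vertex cover of size $\le h$. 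The subtle point I will have to be careful about is that ``$p$ is not a necessary winner'' allows $p$ to fail for a reason other than losing to an edge alternative — e.g.\ $b$ surviving round~1 or $d$ surviving round~2; I will need to check that the auxiliary-voter counts (the $r-h-1$ block and the $h$ block) are set so that in \emph{every} completion $d$ beats $b$ in round~1 and $p$ beats $d$ in round~2, so that the only possible failure is against some $c_j$, making the equivalence with \probVC exact.

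The step I expect to be the main obstacle is exactly this last calibration: ensuring that the only ``degree of freedom'' a completion has is the $b\spref p$-vs-$d$ ordering inside the vertex voters, so that the reduction does not accidentally admit spurious ``no''-certificates in which $p$ loses for structural reasons unrelated to the vertex cover. This is the analogue of the bookkeeping already done (successfully) in the ``if'' direction of Theorem~\ref{thm:possiblewinner-amendment-nph}, so I would lean heavily on that argument and the weighted majority graph in Figure~\ref{fig:maj_graph-possible-amendent-instance}, adjusting only the arcs incident to the new vertex $q$. If needed, I would give a concrete worked example (analogous to Table~\ref{tab:reduced-instance-possible-amendment}) to make the fixed agenda and the auxiliary blocks transparent.
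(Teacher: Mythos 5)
Your high-level plan is the paper's plan: keep the \probVC construction of Theorem~\ref{thm:possiblewinner-amendment-nph} verbatim, reduce from co-\probVC, and designate as the preferred alternative a ``default winner'' that wins exactly when $p$ fails. The paper, however, does not add any new alternative: it observes that $c_1$ already plays this role. Every vertex voter ranks all edge alternatives above both $b$ and $d$, so each edge alternative beats $b$ and $d$ with $r$ out of $2r-1$ votes in every completion; moreover all voters agree on $c_1\pref c_2\pref\ldots\pref c_s$, so $c_1$ beats every other edge alternative unanimously. Hence along the fixed agenda the amendment winner is always either $p$ or $c_1$, and $c_1$ is a necessary winner iff $p$ is not a possible winner iff $G$ has no vertex cover of size at most~$h$. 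This shortcut is worth adopting, because your extra alternative $q$ creates a calibration problem you leave unresolved: with $2r-1$ voters you need at least $r$ voters to prefer $q$ to the alternative that reaches the last round, yet also at least $r$ voters to prefer $p$ to $q$. Inserting $q$ just below $p$ in the auxiliary voters' orders yields only $r-1$ voters with $q$ above the edge alternatives (not a majority), and pushing $q$ higher in the vertex voters' orders endangers $p$ beating $q$. This is fixable (e.g., by padding voters, or by noting $q$ only needs to beat $c_1$), but it is genuine extra work that the $c_1$ trick makes unnecessary.

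Separately, the ``main obstacle'' you flag is a misdiagnosis. You do not need (and cannot arrange) that $d$ beats $b$ and $p$ beats $d$ in \emph{every} completion -- the whole point of the $b,d$ gadget is that the vertex voters' completion choices simultaneously control $b$ vs.\ $d$, $p$ vs.\ $d$, and $p$ vs.\ $c_j$, so these outcomes cannot be pinned down independently. What you actually need is only that \emph{every} failure mode of $p$ (whether $b$ survives round one, $d$ eliminates $p$, or some $c_j$ does) funnels into the same default winner reaching the last round; this holds because $b$ and $d$ each lose to every edge alternative and the edge alternatives cascade down to $c_1$. Once that is observed, the equivalence ``the default alternative is a necessary winner iff $p$ is not a possible winner'' follows immediately, and the rest is exactly the already-proven correctness of Theorem~\ref{thm:possiblewinner-amendment-nph}.
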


  \begin{proof}
  Recall that in the proof of Theorem~\ref{thm:possiblewinner-amendment-nph}
  we constructed a profile~$\profile=(\alternativeset, V)$ with $2r-1$ voters and
  a fixed agenda~$\partialagenda$
  for a given instance  $(G=(U, E), h)$ of \textsc{Vertex Cover},   
  and we showed that
  $G$ admits a vertex cover of size at least~$(h+1)$ if and only if~$p$ is not a possible amendment winner.  
  Further, our preferred alternative~$p$ is not a possible winner if and only if the helper alternative~$b$ beats the dummy alternative~$d$ in every completion of the profile~$\profile$.
  Since all $(r-1)$ auxiliary voters and at least one vertex voter prefer each edge alternative to alternative~$b$,
  each edge alternative beats $b$.
  This implies that if $b$ beats $d$ in the second round, then $b$ will be deleted in the fourth round (note that $b$ beats $p$ in any case). 
  Since every voter has the same preference order $c_1 \pref c_2 \pref \ldots \pref c_s$ over all edge alternatives,
  edge alternative~$c_1$ beats every remaining edge alternative~$c_j$, $j\neq 1$.
  This implies that $c_1$ will necessarily win if and only if $p$ does not possibly win.
  Hence, the construction of Theorem~\ref{thm:possiblewinner-amendment-nph} provides a polynomial-time reduction from the $\coNP$-complete
  \textsc{Co-Vertex Cover} problem to our \probNecessaryWinner problem for the amendment procedure.
\end{proof}

Using the ILP formulation for \probPossibleWinner with the amendment procedure (Theorem~\ref{thm:possiblewinner-ilp}), 
we can check whether there is a possible winner different from~$p$.
Since $p$ is a necessary winner if and only there is no other possible winner,
using the results of \citet{Len83}, \citet{Kan87}, and \citet{FT87} we can conclude the following.

\begin{corollary}\label{cor:necessarywinner-amendment-fpt}
  \cornecessarywinneramendmentfpt
\end{corollary}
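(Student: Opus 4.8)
The plan is to observe that \probNecessaryWinner for the amendment procedure reduces to $m-1$ calls of \probPossibleWinner and then to invoke Corollary~\ref{cor:possiblewinner-fpt}.

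First I would record the key equivalence. Since every completion $(\profile^*,\agenda)$ of $(\profile,\partialagenda)$ has an odd number of voters (ties among the original voters being broken as usual), the amendment procedure selects exactly one winner in each completion. Consequently, $p$ is a necessary amendment winner of $(\profile,p,\partialagenda)$ if and only if for \emph{no} alternative $q\in\alternativeset\setminus\{p\}$ is the instance $(\profile,q,\partialagenda)$ a yes-instance of \probPossibleWinner for the amendment procedure: if some such $q$ wins in a completion, then $p$ does not win in that same completion, so $p$ is not necessary; conversely, if no $q\neq p$ wins in any completion, then the (unique) winner of every completion is forced to be $p$.

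Next, for each of the at most $m-1$ alternatives $q\neq p$, I would run the algorithm behind Corollary~\ref{cor:possiblewinner-fpt} for the amendment procedure, i.e., build the ILP of Theorem~\ref{thm:possiblewinner-ilp} with preferred alternative $q$ and solve its feasibility problem via the results of \citet{Len83}, \citet{Kan87}, and \citet{FT87}. Each such call runs in $O(\rho^{2.5\rho+o(\rho)+2}\cdot\log(n+2))$ time with $\rho=m!\cdot 2^{m^2}$. We answer ``no'' (that is, $p$ is not necessary) as soon as one call returns ``yes'', and ``yes'' otherwise; correctness is immediate from the equivalence above. The total running time is $O(m\cdot\rho^{2.5\rho+o(\rho)+2}\cdot\log(n+2))$.

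Finally I would absorb the leading factor $m$ into the claimed bound: since $\rho=m!\cdot 2^{m^2}\ge 2^{m^2}$, we have $\log_\rho m=(\ln m)/(\ln\rho)=o(1)$, hence $m=\rho^{o(1)}$ and $m\cdot\rho^{2.5\rho+o(\rho)+2}=\rho^{2.5\rho+o(\rho)+2}$, matching the stated time. I expect the only genuinely delicate point to be the first-step equivalence, which relies on the standing assumption that each completed profile has an odd number of voters so that a unique amendment winner always exists; everything else is routine bookkeeping on top of the already-established ILP machinery.
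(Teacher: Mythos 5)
Your proof is correct and takes essentially the same route as the paper: the text preceding the corollary likewise reduces \probNecessaryWinner to checking, via the ILP of Theorem~\ref{thm:possiblewinner-ilp}, whether any alternative other than~$p$ is a possible amendment winner, using exactly the equivalence that $p$ is necessary if and only if no other alternative can possibly win. Your explicit absorption of the extra factor~$m$ into the bound via $m=\rho^{o(1)}$ is a detail the paper leaves implicit, but it is handled correctly.
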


\subsection{The case of weighted voters}\label{sub:possible/necessarywinner-weighted}
If each \voter comes with a weight, 
then the possible winner and the necessary winner problems for the amendment procedure are already $\NP$-hard when the number of \alternatives is three and four, respectively~\citep{PinRosVenWal2011,LanPinRosSalVenWal2012}.
This is in contrast to the manipulation problem where the weighted case is
computationally ``equivalent'' to the non-weighted case: they are both
polynomial-time solvable.
We show in the following results for the successive procedure

\begin{theorem}\label{thm:wpossiblewinner-nph}
  \thmweightedpossiblewinner
\end{theorem}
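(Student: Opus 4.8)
The plan is first to observe that \probWPossibleWinner for the successive procedure lies in $\NP$: since the agenda~$\partialagenda$ is already linear, a certificate is just a completion of the (partial) preference profile, which has size linear in the input, and determining the successive winner of a completed profile under a fixed agenda is polynomial-time solvable. So it remains to prove weak $\NP$-hardness, which I would do by a polynomial-time reduction from the (weakly $\NP$-complete) \textsc{Partition} problem: given positive integers~$z_1,\dots,z_n$ with $\sum_{i} z_i = 2T$, decide whether some subset sums to exactly~$T$.

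\textbf{The construction.}
Given a \textsc{Partition} instance I would build a \probWPossibleWinner instance with three alternatives $\alternativeset=\{a,b,p\}$ and linear agenda $\partialagenda\colon a \agendapref b \agendapref p$, and the following weighted voters: two \emph{fixed voters}, namely~$X$ of weight~$1$ with linear order $a\pref b\pref p$, and~$Y$ of weight~$2$ with linear order $p\pref b\pref a$; and, for each $i\in[n]$, one \emph{flexible voter}~$v_i$ of weight~$z_i$ whose only specified comparison is $a\pref_{v_i}p$ (so $b$ is incomparable to both $a$ and $p$). The total weight is $W=2T+3$, which is odd, so strict majority needs no tie-breaking. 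This is clearly computable in polynomial time even with weights encoded in binary, and uses only three alternatives and a linear agenda.

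\textbf{Correctness.}
The heart of the argument is a round-by-round analysis together with a domination observation. Each flexible voter~$v_i$ has exactly three linear completions: $a\pref b\pref p$, $a\pref p\pref b$, and $b\pref a\pref p$. The first is weakly dominated, for the purpose of making~$p$ win, by the second: switching~$v_i$ from $a\pref b\pref p$ to $a\pref p\pref b$ leaves the set of voters ranking~$a$ first unchanged while only enlarging the set of voters preferring~$p$ to~$b$. Hence I may assume that in any completion witnessing that~$p$ wins, every~$v_i$ is completed to $a\pref p\pref b$ or to $b\pref a\pref p$. Writing~$\sigma$ for the total weight of the flexible voters completed to $a\pref p\pref b$, one checks that the weight of voters ranking~$a$ first equals $1+\sigma$ and the weight of voters preferring~$p$ to~$b$ equals $2+\sigma$. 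In round~$1$ alternative~$a$ is rejected iff $1+\sigma\le (W-1)/2=T+1$, i.e.\ iff $\sigma\le T$; given that, in round~$2$ alternative~$b$ is rejected (so that~$p$, being last, wins) iff $2+\sigma\ge (W+1)/2=T+2$, i.e.\ iff $\sigma\ge T$. Therefore~$p$ wins precisely when $\sigma=T$, and since the attainable values of~$\sigma$ are exactly the subset sums of $\{z_1,\dots,z_n\}$, alternative~$p$ is a possible winner iff the \textsc{Partition} instance is a yes-instance.

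\textbf{Expected main obstacle.}
The delicate point is calibrating the fixed voters' weights (here~$1$ and~$2$) so that the round-$1$ threshold forces $\sigma\le T$ and the round-$2$ threshold forces $\sigma\ge T$, pinning~$\sigma$ down to the single value~$T$ rather than to an interval, all while keeping the total weight odd; a secondary care point is the domination argument that lets the whole analysis be carried by the one parameter~$\sigma$ and, in particular, handles completions that assign $a\pref b\pref p$ to some flexible voters in the ``only if'' direction.
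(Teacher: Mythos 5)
Your proof is correct and follows essentially the same route as the paper: a reduction from \textsc{Partition} with three alternatives $\{a,b,p\}$, agenda $a\agendapref b\agendapref p$, weight-$z_i$ voters constrained only by $a\pref p$, and a small number of fixed voters calibrated so that rejecting $a$ in round~1 forces the ``$a$-first'' subset weight to be at most $T$ while rejecting $b$ in round~2 forces it to be at least $T$. The only cosmetic difference is that the paper uses a single dummy voter of weight $1$ (total weight $2K+1$) where you use two fixed voters of weights $1$ and $2$; your domination argument handling the completion $a\pref b\pref p$ is also present, implicitly, in the paper's ``if'' direction.
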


\begin{proof}
  We show $\NP$-hardness by providing a polynomial-time reduction from the weakly $\NP$-complete \textsc{Partition} problem.
  Given a multi-set~$X=(x_1,x_2,$ $\dots,$ $x_r)$ of positive integers,
  \textsc{Partition} asks whether there is a \emph{perfect partition}~$X_1 \cupdot X_2 = X$ of the integers such that both parts sum up to the same value, that is, $\sum_{x \in X_1} x = \sum_{x \in X_2} x$.

  Let $X$~be a \textsc{Partition} instance with $\sum_{x \in X}x = 2K$.
  We construct a \probPossibleWinner instance~$(\profiletuple,p,\partialagenda)$ as follows.
  The set~$\alternativeset$ of alternatives consists of the preferred alternative~$p$
  and two further alternatives~$a$ and~$b$:
  \[A \coloneqq \{p, a, b\}\text{.}\]
  The set~$\voterset$ of voters consists of $r$ number voters and one dummy voter:
  \begin{enumerate}
    \item For each integer~$x_i \in X$ (which is positive), 
    we construct one \emph{number voter}~$v_i$ with the partial preference order specified by $a \pref p$ and with weight~$x_i$.
    \item We construct one additional \emph{dummy voter} with the linear preference order~$p \pref b \pref a$ and with weight~$1$.
  \end{enumerate}

  Finally, the partial agenda~$\partialagenda$ is set to $a \pref b \pref p$. 
  This completes the construction which can clearly be computed in polynomial time.

  Before going into the details of the proof, 
  observe that the total weight of all voters is $2K+1$ and, hence,
  the majority quota is $K+1$.
  
  For the ``only if'' case, assume that there is a perfect partition~$X_1 \cupdot X_2 = X$ of
  the integers such that $\sum_{x \in X_1} x = \sum_{x \in X_2} x = K$.
  Then, complete the profile as follows.
  For each number voter~$v_i$, 
  if $x_i \in X_1$, then the preference order of voter~$v_i$ is $a \pref p \pref b$;
  otherwise, the preference order of voter~$v_i$ is $b \pref a \pref p$.
  By the sequence of the agenda~$\partialagenda$,
  $a$ will be deleted because it is not a majority winner (only the voters corresponding to the integers in $X_1$ prefer $a$ to $\{p, b\}$).
  In the profile restricted to $b$ and $p$,
  the dummy voter plus the voters that correspond to the integers in $X_1$ prefer $p$ to $b$; the sum of their weights is $2K+1$.
  Thus, $p$ beats $b$ as a winner.

  For the ``if'' case, assume that there is a completion~$\profile^{*}$ of the profile such that $p$~is a successive winner.
  This implies that $a$ is not a majority winner.
  Thus, by the preference order of the dummy voter which has weight one,
  the sum of the weights of the number voters that have preference order~$b \pref a \pref p$ is at least $K$.
  In the third round, $p$ must beat $b$,
  which implies that the sum of the weights of the number voters that have preference order~$b \pref a \pref p$ is at most $K$.
  Summarizing, the number voters that have preference order~$b \pref a \pref p$ have a total weight equal to $K$.
  The corresponding integers sum up to $K$.
\end{proof}

As already mentioned in the beginning of Section~\ref{sub:possible/necessarywinner-weighted},
\citet{PinRosVenWal2011} and \citet{LanPinRosSalVenWal2012} show
that \probNecessaryWinner is weakly $\NP$-hard for the amendment procedure and for four  alternatives.
We complement this result by showing that it is polynomial-time solvable for the successive procedure,
and it is linear-time solvable for the amendment procedure when the number of alternatives is at most three.

\begin{theorem}\label{thm:w_necessary}
  \thmweightednecessarywinner
\end{theorem}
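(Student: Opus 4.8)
The plan is to deal with the two procedures separately, reusing the unweighted machinery as far as possible. For the \emph{successive procedure} I would first observe that nothing in the unweighted analysis (Observation~\ref{obs:conditions-p-not-necessary-winner}, Lemmas~\ref{lem:claimtwo} and~\ref{lem:claimthree}, Theorem~\ref{thm:necessarywinner-successive-p}) actually exploited that \voters are unweighted: one only has to reinterpret ``$\alternativeset'$-majority winner'' in the weighted sense, i.e.\ the total weight of \voters ranking the \alternative above all of $\alternativeset'\setminus\{a\}$ exceeds half of the total weight, with the fixed tie-breaking. Indeed, Observation~\ref{obs:conditions-p-not-necessary-winner} merely records the mechanics of the procedure (``$p$ wins iff it is a majority winner over its successors while no predecessor of $p$ is a majority winner over its own successors''), which is insensitive to how ``majority'' is counted, and the proofs of Lemmas~\ref{lem:claimtwo} and~\ref{lem:claimthree} only use that replacing $\profile$ by its $p$-discriminating (resp.\ $c$-privileging) completion can only weaken (resp.\ strengthen) the \alternative under scrutiny in the relevant comparisons --- a purely structural fact. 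Hence Algorithm~\ref{alg:necessary-successive}, with every majority test replaced by the corresponding weighted test, still decides the problem; a weighted majority test and a discriminating/privileging completion each take $O(n\cdot m^2)$ time and the loop runs at most $m$ times, giving the claimed $O(n\cdot m^3)$.

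For the \emph{amendment procedure with $m\le 3$} I would argue by a bounded case distinction on the shape of $\partialagenda$. For at most three \alternatives the amendment winner of a completed profile under a linear agenda depends only on the pairwise majority relations, and more precisely: if $p$ is first or second on the agenda then $p$ is the amendment winner iff $p$ beats both other \alternatives (a Condorcet winner always wins, cf.\ Example~\ref{ex:amendment}); if $p$ is last then $p$ wins iff $p$ beats the pairwise winner of the other two \alternatives, and the relative agenda order of those two is immaterial. So only two regimes arise. In \emph{Regime~A}, where $\partialagenda$ does not force both other \alternatives in front of $p$ (in particular whenever $m\le 2$), some extension of $\partialagenda$ places $p$ not last, hence $p$ is a necessary winner iff $p$ beats every other \alternative~$x$ in every completion, which holds iff $p$ beats $x$ in the unique completion that resolves every incomparability on $\{p,x\}$ in favour of $x$ --- an $O(n)$ check per $x$. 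In \emph{Regime~B}, where $m=3$ and $\partialagenda$ forces the two other \alternatives $a,b$ in front of $p$, $p$ is last in every extension, hence $p$ is a necessary winner iff no completion makes the pairwise winner of $\{a,b\}$ beat $p$; equivalently, iff there is no completion with ``$a$ beats $b$ and $a$ beats $p$'' and no completion with ``$b$ beats $a$ and $b$ beats $p$''.

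It then remains to decide in $O(n)$ time whether some completion realizes a prescribed pair of same-direction pairwise outcomes, say $a$ beating $b$ and $a$ beating $p$ simultaneously. The key point is that for three \alternatives these two demands never genuinely conflict: a \voter can be completed with $a$ above both $b$ and $p$ exactly when neither $b\pref a$ nor $p\pref a$ is already forced in its partial order, and every remaining \voter can help at most one of the two comparisons, independently of the others. Therefore the maximum weight achievable for ``$a$ beats $b$'' and the maximum weight achievable for ``$a$ beats $p$'' are attained by one and the same completion, and each maximum is a plain sum of \voter weights; the prescribed outcome is possible iff both maxima exceed half of the total weight. Running these constantly many $O(n)$ counts settles Regime~B, so the whole amendment algorithm runs in $O(n)$ time, which is linear in the input size since $m\le 3$.

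The successive part is essentially bookkeeping. The delicate point, which I expect to be the main obstacle, is the amendment part: pinning down the case distinction on $\partialagenda$ exactly (including the degenerate $m\le 2$ cases) and, above all, justifying that for three \alternatives two same-direction pairwise requirements are jointly realizable whenever they are individually realizable --- this is precisely what collapses an apparently combinatorial search over completions into a handful of weighted counts and thereby yields the linear bound. Tie-breaking for weighted \voters, whose total weight need not be odd, must also be threaded carefully through every ``beats'' test, although it does not affect the asymptotics.
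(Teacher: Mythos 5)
Your proposal is correct and follows essentially the same route as the paper: the successive case reuses Algorithm~\ref{alg:necessary-successive} with every majority test replaced by its weighted analogue, and the amendment case with $m\le 3$ splits on whether $p$ can avoid the last agenda position, reducing everything to a constant number of $O(n)$ weighted pairwise counts (necessary-Condorcet checks when $p$ is not forced last, possible-Condorcet checks for the other two \alternatives when it is). The one substantive addition is that you state and justify explicitly the joint-realizability fact --- that two same-direction pairwise requirements for a fixed \alternative are simultaneously achievable by the single completion pushing that \alternative as high as possible in every vote --- which the paper's four-step check for the $p$-last case relies on only implicitly.
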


\begin{proof}
  First, we observe that the algorithm (Theorem~\ref{thm:necessarywinner-successive-p}: Algorithm~\ref{alg:necessary-successive}) that we provide to check whether an alternative is a necessary successive winner in a profile without weights can be easily adapted to also solve the case of weighted voters.
  Thus, for the successive procedure, \probWNecessaryWinner can also be solved in $O(n\cdot m^3)$ time. 
  
  Now, we focus on \probWNecessaryWinner under the amendment procedure with up to three alternatives.
  This is closely related to the problems of possible and necessary Condorcet winner. 
  To this end, for each two \alternatives $a$ and~$b$, let $w(a,b)$ be
  the sum of the weights of those voters that prefer $a$ to $b$.

  We check all (up to) six completions~$\agenda$ of the given partial
  agenda~$\partialagenda$ and decide $p$ to be a necessary amendment winner
  only if all checks give answer ``yes''.
  First, if the preferred \alternative~$p$ is at one of the first two
  positions in the agenda~$\agenda$, 
  then the problem is equivalent to asking whether
  alternative~$p$ is a necessary Condorcet winner, 
  that is, whether all completions of the preference orders can make $p$ beat
  the other(s), one by one. 
  The answer to the latter is yes if and only if 
  for each remaining \alternative~$c$,
  the sum of the weights of those voters that prefer $p$ to $c$ is more than half of the total sum of weights.
  This can be checked in linear time.
  
  We are left with the very last case where the profile has three \alternatives, denoted by $a$, $b$, and~$p$, and where the
  preferred \alternative~$p$ is at the third position in the agenda~$\agenda$. 
  In this case, $p$ is a necessary winner if and only if  
  \emph{none} of the other two \alternatives is a possible Condorcet winner which can
  be checked in linear time; denote by $K$ the total sum of the weights of all voters:
  \begin{enumerate}[(1)]
    \item\label{winner-a} If the weights of the voters preferring $a$ to $b$ sum up to more than $K/2$ and
    if the weights of the voters preferring $p$ to $a$ sum up to less than $K/2$,
    then we stop and answer ``no''.
    \item\label{winner-b} If the weights of the voters preferring $b$ to $a$ sum up to more than $K/2$, 
    and if the weights of the voters preferring $p$ to $b$ sum up to less than $K/2$,
    then we stop and answer ``no''.
    \item\label{winner-a-b} If the weights of the voters preferring $p$ to $a$ sum up to less than $K/2$ or the weights of the voters preferring $p$ to $b$ sum up to less than $K/2$,
    then we answer ``no''.
    \item Otherwise, we answer ``yes''. 
  \end{enumerate}
  We can verify that if the condition in Step (\ref{winner-a}) (resp.\ in Step (\ref{winner-b})) is satisfied, 
  then $a$ (resp.\ $b$) is a possible winner, implying that $p$ is not a necessary winner.
  If no condition in the first two steps is satisfied, 
  then there is a completion such that $a$ beats $b$ (or $b$ beats $a$).
  In this case, if one of the two conditions in the last step is satisfied,
  then the corresponding alternative can possibly win, 
  implying that $p$ is not a necessary winner. 
  Thus, the correctness follows.
\end{proof}

\newcommand{\controlmeasure}{control vulnerability ratio\xspace}
\newcommand{\manipulationmeasureA}{manipulation resistance ratio\xspace}
\newcommand{\manipulationmeasureB}{2nd winner coalition ratio\xspace}
\newcommand{\manipulationmeasureC}{smallest coalition ratio\xspace}
\newcommand{\manipulationmeasureBSize}{2nd winner coalition size\xspace}
\newcommand{\manipulationmeasureCSize}{smallest coalition size\xspace}
\section{Empirical Study}
\label{sec:experiments}

Our polynomial-time algorithms leave open how many \alternatives can win through control (or manipulation). 
We therefore investigate this empirically.
To this end,
we use the real-world data from the Preflib collection of preference profiles due to~\citet{MatWal2013} 
to investigate empirically the percentage 
of successful agenda control or manipulation.
Since only one case of the possible and the necessary winner problems is polynomial-time solvable and since Preflib offers only a very restricted variant of incomplete preferences,
we do not run experiments for these two problems.
Our results are shown in Tables~\ref{tab:exp-agenda-control} and \ref{tab:exp-manipulation}.

\subsection{Data Background}
Preflib is a library for real-world preferences.
The data is donated by various research groups.
Currently\footnote{Accessed August 20th, 2015.}, Preflib contains $314$ profiles with complete preference orders:
$100$ of them have three alternatives, $108$ of them have four alternatives, one of them has $7$~alternatives, and the remaining $105$ profiles have between $9$ and $242$ alternatives.
Among all profiles with complete preference orders, 
$135$ ones have an odd number of voters,
where $56$ of these have three alternatives,
$52$ of these have four alternatives, one of these has $7$~alternatives,
and the remaining $26$ profiles have between $14$ and $242$ alternatives.
The number of \voters ranges from $5$ to $14081$.

\begin{table}[t]
  \centering
  \begin{tabular}[t]{@{}l@{}c@{}c@{}c@{}c@{}c@{}c@{}}
    \toprule
  \multirow{2}{*}{\controlmeasure} &\phantom{cc}& \multicolumn{2}{c}{Successive} & \phantom{cc} & \multicolumn{2}{c}{Amendment}\\\cmidrule{3-4}\cmidrule{6-7}
    && $\; m\le 4\;$ & $\; m \ge 5\;$ && $\; m \le 4\;$ & $\; m \ge 5\;$\\
    \midrule
    Arithmetic mean && $0.157$ & $0.081$ && $0.000$ & $0.035$\\
    Geometric mean && $0.000$ & $0.000$ && $0.000$ & $0.000$\\
    \bottomrule
  \end{tabular}
  \caption{Experiments on agenda control with real-world data. 
     We evaluate all $135$~profiles from Preflib that have linear preference orders and that have an odd number of voters each.
     We consider profiles with $m\le 4$ and $m\ge 5$~\alternatives, respectively.
     The reason for this separation is as follows.
     First, while a large number of profiles has either three or four \alternatives (one third each),
     for $m\ge 5$, in most cases, less than \emph{five} profiles have $m$~\alternatives.
     Second, the results for profiles with up to four \alternatives are pretty different from the other profiles.
   }
  \label{tab:exp-agenda-control}
\end{table}

\subsection{Agenda Control}
For each of the $135$ profiles with odd number of voters (note that, for reasons of simplicity, we only implemented our algorithms for odd numbers of voters),
using the algorithms behind Theorems~\ref{thm:control-successive-p} and \ref{thm:control-amendment-p},
we compute the number~$m_s$ (resp.\ $m_a$) of alternatives for which a successive (resp.\ amendment) agenda control is possible.
Then, we calculate the \emph{\controlmeasure} 
as 
\begin{align*}
  \frac{m_s-1}{m-1} \text{ and } \frac{m_a-1}{m-1},
\end{align*} where $m$ denotes the number of alternatives,
respectively. Note that we have $m-1$ here since we factor out the alternative
that wins originally.
For instance, \controlmeasure~$0.5$ means that $\frac{m-1}{2}$ candidates 
are controllable.
We use both arithmetic mean and geometric mean to compute the average values among all profiles (see Table~\ref{tab:exp-agenda-control}).

Our results show that the amendment procedure tends to be more resistant than the successive procedure:
At most $3.5\%$ of the \alternatives have a chance to win the amendment procedure,
while it is $15.7\%$ for the successive procedure.

\begin{table}[t]
  \centering
  \begin{tabular}[t]{@{}l@{}c@{}c@{}c@{}c@{}c@{}c@{}}
    \toprule
  \multirow{2}{*}{Measurement} &\phantom{cc}& \multicolumn{2}{c}{Successive} & \phantom{cc} & \multicolumn{2}{c}{Amendment}\\\cmidrule{3-4}\cmidrule{6-7}
    && $\; m\le 8\;$ & $\; m \ge 9\;$ && $\; m \le 8\;$ & $\; m \ge 9\;$\\
    \midrule
    \manipulationmeasureA && $0.455$ & $0.945$ && $0.402$ & $0.924$\\
    \manipulationmeasureB && $0.288$ & $0.523$ && $0.222$ & $0.468$\\
    \manipulationmeasureC && $0.263$ & $0.386$ && $0.221$ & $0.385$\\
    \bottomrule
  \end{tabular}
  \caption{Experiments on manipulation with real-world data. 
     We evaluate all $314$~profiles from Preflib that have linear preference orders.
     We consider profiles with $m\le 8$ and $m\ge 9$~\alternatives, respectively.
     The reason for this separation is that
     we only consider all $m!$ possible agendas when $m\le 8$.
     We use geometric mean to compute the average.
   }
  \label{tab:exp-manipulation}
\end{table}

\subsection{Manipulation}
Since Preflib does not offer any agenda, we have to generate a set~$X$ of agendas for 
manipulation 
to obtain a good representation.
The size of $X$ depends on the number~$m$ of \alternatives:
\begin{enumerate} 
\item 
If $m \le 8$, then we let $X$ be the set of all possible agendas, that is, $|X|\coloneqq m!$.
\item 
Otherwise, we generate a set~$X$ of uniformly distributed random agendas with $|X| \coloneqq \min(n^2, 8!)$, where $n$ denotes the number of voters in the input.
\end{enumerate} 
Then, for each \alternative~$c$ and for each agenda~$\agenda \in X$, 
using the algorithms behind Theorems~\ref{thm:manipulation-successive-p} and \ref{thm:manipulation-amendment-p},
we compute 
the minimum coalition size, that is, the minimum number of voters needed to make $c$ a successive (resp. an amendment) winner. 
Let this number 
for the successive (resp.\ amendment) procedure 
be 
$\kappa_s(\profile,c,\agenda)$ (resp.\ $\kappa_a(\profile,c,\agenda)$). 
This is upper-bounded by~$n+1$.
Then, we calculate the \emph{manipulation resistance ratio} as 
\begin{align*}
  \frac{\sum\limits_{\agenda\in X} \sum\limits_{c\in C} \kappa_s(\profile,c,\agenda)}{|X|\cdot (m-1) \cdot (n+1)} \text{ and } \frac{\sum\limits_{\agenda\in X} \sum\limits_{c\in C} \kappa_a(\profile,c,\agenda)}{|X|\cdot (m-1) \cdot (n+1)}\text{.}
\end{align*}

Since most \alternatives need a coalition of more than $n$~voters to
manipulate successfully,
which strongly affects the manipulation resistance ratio (see the first row in Table~\ref{tab:exp-manipulation}),
we also consider two related concepts: 
\begin{itemize}
  \item The ratio of the \emph{\manipulationmeasureBSize}, 
that is, 
the coalition size for the \alternative that becomes a winner after the original winner is removed.
Formally, the \manipulationmeasureB is defined as
\begin{align*}
  \frac{\sum\limits_{\agenda\in X} \kappa_s(\profile,c^{*},\agenda)}{|X|\cdot(n+1)} \text{ and } \frac{\sum\limits_{\agenda\in X} 
    \kappa_a(\profile,c^{*},\agenda)}{|X|\cdot(n+1)}\text{,}
\end{align*}
where $c^*$ is a successive (resp.\ an amendment) winner after the original winner is removed.
\item The ratio of the \emph{smallest coalition size},
that is,
the size~$\sigma_s(\profile,\agenda)$ (resp.\ $\sigma_a(\profile,\agenda)$) of the smallest coalition that makes any \alternative win.
Formally, the \manipulationmeasureC is defined as
\begin{align*}
  \frac{\sum\limits_{\agenda\in X} \sigma_s(\profile,\agenda)}{|X|\cdot(n+1)} \text{ and } \frac{\sum\limits_{\agenda\in X} \sigma_a(\profile,\agenda)}{|X|\cdot(n+1)}\text{.}
\end{align*}
\end{itemize}
Our results show that successful manipulations with few voters are rare:
For profiles with up to eight \alternatives 
the average coalition size is at least $0.4\cdot(n+1)$
(even the $2$nd winner coalition size is $0.2\cdot(n+1)$;
the smallest coalition size is only slightly lower),
while for profiles with at least five \alternatives 
the average coalition size is almost $n+1$
(even the $2$nd winner coalition size is roughly $0.5\cdot(n+1)$).


\section{Conclusion}
\label{sec:conclusion}
Our work indicates that, from a computational 
perspective, 
the amendment procedure seems 
superior to the 
successive procedure against agenda control and strategic voting. 
Our work supports the claim
that most European and Latin parliaments (cf.~\citet{ApeBalMas2014}) should 
rather go the Anglo-American way, that is, they should 
use amendment procedures instead of successive procedures.

Following the spirit of \citet{BetHemNie2009}, 
it would be of interest to complement our computational
hardness results for possible and necessary winner problems 
with a refined complexity analysis concerning tractable special cases.
For instance, our $\NP$-hardness reductions for the possible winner problems assume that voters may have arbitrary partial preferences.
It would be interesting to study whether this still holds if voter preferences are single-peaked or semi-single-peaked~\citep{Black1958}.

For all our intractable problems,
we obtain fixed-parameter tractability with respect to the parameter ``number~$m$ of alternatives''.
With respect to the parameter ``number~$n$ of voters'',
however, we could not settle their parameterized complexity. 

It would be natural to also adopt a more game-theoretic 
view on the strategic voting scenarios we considered.
Finally, it would be also interesting to study further 
manipulation scenarios for parliamentary voting procedures including,
for example, candidate control 
as discussed by \citet{Ras2014}.

\paragraph{Acknowledgements}
Robert Bredereck was supported by the German Research Foundation (DFG), research project PAWS, NI~369/10.
Toby Walsh was supported by the Alexander von Humboldt Foundation while visiting TU Berlin.
NICTA  is  funded  by  the  Australian  Government  through
the Department of Communications and the Australian Research Council
through  the  ICT  Centre  of  Excellence  Program.



\end{document}